\newcommand{\NN}{\mathbb{N}}
\newcommand{\NNpos}{\NN_{\scriptscriptstyle \geq 1}}
\newcommand{\ZZ}{\mathbb{Z}}
\newcommand{\QQ}{\mathbb{Q}}
\newcommand{\RR}{\mathbb{R}}
\newcommand{\Structure}[1]{\ensuremath{\mathcal{#1}}}
\newcommand{\A}{\Structure{A}}
\newcommand{\B}{\Structure{B}}
\newcommand{\C}{\Structure{C}}
\newcommand{\D}{\Structure{D}}
\newcommand{\I}{\Structure{I}}
\newcommand{\X}{\Structure{X}}
\newcommand{\bigO}{\mathcal{O}}
\newcommand{\bigOh}{\bigO}
\newcommand{\FO}{\ensuremath{\textup{FO}}}
\newcommand{\FOC}{\ensuremath{\textup{FOC}}}
\newcommand{\FOunC}{\ensuremath{\FOC_1}}
\newcommand{\FOCNP}{\ensuremath{\textup{FOCN}(\mathbb{P})}}
\newcommand{\FOunCP}{\ensuremath{\FOunC(\mathbb{P})}}
\newcommand{\neighb}[3]{\ensuremath{N_{#1}^{#2}(#3)}} 
\newcommand{\Neighb}[3]{\ensuremath{\mathcal{N}_{#1}^{#2}(#3)}} 
\newcommand{\Neighbr}[2]{\ensuremath{\mathcal{N}_{r}^{#1}(#2)}} 
\newcommand{\type}[3]{\ensuremath{\textup{tp}_{#1}(#2,#3)}} 
\newcommand{\abs}[1]{\left\lvert#1\right\rvert}
\DeclareMathOperator*{\ar}{ar}
\DeclareMathOperator{\dist}{dist}
\DeclareMathOperator*{\free}{free}
\DeclareMathOperator*{\qr}{qr}
\renewcommand{\leq}{\leqslant}
\renewcommand{\geq}{\geqslant}
\renewcommand{\le}{\leq}
\renewcommand{\ge}{\geq}
\renewcommand{\phi}{\varphi}
\renewcommand{\epsilon}{\varepsilon}
\newtheorem{thm}{Theorem}[section]
\newtheorem{lem}[thm]{Lemma}
\newtheorem{cor}[thm]{Corollary}
\theoremstyle{definition}
\newtheorem{defn}[thm]{Definition}
\newtheorem{prop}[thm]{Property}
\newtheorem{problem}[thm]{Problem}
\newtheorem{exmp}[thm]{Example}
\newtheorem{rem}[thm]{Remark}
\newtheorem{claim}{Claim}
\newenvironment{claimproof}[1][\proofname]{%
  
  \begin{proof}[#1]%
}{%
  \end{proof}%
}
\algnewcommand{\algorithmicreject}{\textbf{reject}}
\algnewcommand{\Reject}{\State \algorithmicreject}
\algnewcommand\AlgAnd{\textbf{and} }
\algnewcommand\AlgOr{\textbf{or} }
\algnewcommand\True{\textbf{true} }
\algnewcommand\False{\textbf{false} }
\algnewcommand\algorithmicalgorithm{\textbf{Algorithm}}
\algnewcommand\Algorithm{\item[\algorithmicalgorithm]}%
\algrenewcommand\Return{\State \textbf{return} }
\newcommand{\nc}[1]{\newcommand{#1}}
\newcommand{\rnc}[1]{\renewcommand{#1}}
\newcommand{\deff}{:=}
\nc{\ov}[1]{\bar{#1}}
\nc{\openinterval}[2]{\ensuremath{\textup{rat}(#1,#2)}}
\newcommand{\myparagraph}[1]{\par \indent \textbf{\textsf{#1.}}}
\newcommand{\set}[1]{\ensuremath{\{#1\}}}
\newcommand{\setc}[2]{\ensuremath{\set{#1 : #2}}}
\nc{\bigset}[1]{\ensuremath{\big\{ #1 \big\}}}
\nc{\bigsetc}[2]{\bigset{#1 : #2}}
\nc{\setsize}[1]{\ensuremath{|#1|}}
\nc{\bigsetsize}[1]{\ensuremath{\big|#1\big|}}
\nc{\Setsize}[1]{\bigsetsize{#1}}
\nc{\emptyword}{\ensuremath{\varepsilon}}
\nc{\emptytuple}{\ensuremath{()}}
\newcommand{\CollectionOfRingsFont}[1]{\mathbb{#1}}
\newcommand{\SC}{\CollectionOfRingsFont{S}} 
\newcommand{\Weights}{\ensuremath{\mathbf{W}}}
\newcommand{\weight}{\ensuremath{\mathtt{w}}}
\newcommand{\weightone}{\ensuremath{\mathtt{one}}}
\newcommand{\wtype}{\ensuremath{\textup{type}}}
\newcommand{\pweight}{\ensuremath{p}} 
\newcommand{\plusSR}[1]{\ensuremath{+_{#1}}}
\newcommand{\minusSR}[1]{\ensuremath{-_{#1}}}
\newcommand{\malSR}[1]{\ensuremath{{\cdot}_{#1}}}
\newcommand{\nullSR}[1]{\ensuremath{0_{#1}}}
\newcommand{\einsSR}[1]{\ensuremath{1_{#1}}}
\newcommand{\plus}{\plusSR{}}
\newcommand{\mal}{\malSR{}}
\newcommand{\minus}{\minusSR{}}
\newcommand{\plusS}{\plusSR{S}}
\newcommand{\minusS}{\minusSR{S}}
\newcommand{\malS}{\malSR{S}}
\newcommand{\nullS}{\nullSR{S}}
\newcommand{\einsS}{\einsSR{S}}
\nc{\LogicL}{\ensuremath{\textup{L}}}
\newcommand{\WA}{\ensuremath{\textup{FOWA}}} 
\newcommand{\WAPSR}[2]{\ensuremath{\WA(#1)[#2]}}
\newcommand{\WAPS}{\WAPSR{\PP}{\sigma,\SC,\Weights}}
\nc{\RL}{\WA}
\nc{\RLPSR}[2]{\WAPSR{#1}{#2}}
\nc{\RLPS}{\WAPS}
\newcommand{\WAun}{\ensuremath{\textup{FOWA}_1}} 
\newcommand{\WAunPSR}[2]{\ensuremath{\WAun(#1)[#2]}}
\newcommand{\WAunPS}{\WAunPSR{\PP}{\sigma,\SC,\Weights}}
\nc{\RLun}{\WAun}
\nc{\RLunPSR}[2]{\WAunPSR{#1}{#2}}
\nc{\RLunPS}{\WAunPS}
\newcommand{\FOW}{\ensuremath{\textup{FOW}_1}}
\newcommand{\FOWSR}[2]{\ensuremath{\FOW(#1)[#2]}}
\newcommand{\FOWS}{\FOWSR{\PP}{\sigma,\SC,\Weights}}
\nc{\FOG}{\FOW}
\nc{\FOGGR}[2]{\FOWSR{#1}{#2}}
\nc{\FOGG}{\FOWS}
\newcommand{\Vars}{\ensuremath{\textsf{\upshape vars}}}
\newcommand{\varsof}{\ensuremath{\textup{\upshape vars}}}
\newcommand{\PP}{\ensuremath{\mathbb{P}}}
\newcommand{\Pred}{\ensuremath{\mathsf{P}}}
 \newcommand{\Ps}{\PP}
\newcommand{\ptype}{\ensuremath{\textup{type}}}
\nc{\Ssum}[2]{\ensuremath{\sum {#1}.{#2}}}
\nc{\und}{\ensuremath{\wedge}}
\nc{\Und}{\ensuremath{\bigwedge}}
\nc{\oder}{\ensuremath{\vee}}
\nc{\Oder}{\ensuremath{\bigvee}}
\nc{\nicht}{\ensuremath{\neg}}
\nc{\impl}{\ensuremath{\to}}
\nc{\gdw}{\ensuremath{\leftrightarrow}}
\nc{\sem}[1]{\llbracket #1 \rrbracket}
\nc{\dsum}{\ensuremath{\oplus}}
\nc{\dunion}{\ensuremath{\sqcup}}
\rnc{\S}{\ensuremath{\mathcal{S}}}
\nc{\nbset}[2]{\ensuremath{N_{#1}^{#2}}}
\nc{\nrA}[1]{\ensuremath{\neighb{r}{\A}{#1}}} 
\nc{\nrAStrich}[1]{\ensuremath{\neighb{r}{\B}{#1}}} 
\nc{\nrT}[1]{\ensuremath{\neighb{r}{\T}{#1}}} 
\nc{\nRT}[1]{\ensuremath{\neighb{R}{\T}{#1}}} 
\nc{\NrA}[1]{\ensuremath{\Neighb{r}{\A}{#1}}} 
\nc{\NRA}[1]{\ensuremath{\Neighb{R}{\A}{#1}}} 
\nc{\NrB}[1]{\ensuremath{\Neighb{r}{\B}{#1}}} 
\nc{\NRB}[1]{\ensuremath{\Neighb{R}{\B}{#1}}} 
\nc{\NrDStrich}[1]{\ensuremath{\Neighb{r}{\B}{#1}}} 
\nc{\NrT}[1]{\ensuremath{\Neighb{r}{\T}{#1}}} 
\nc{\Class}{\ensuremath{\mathcal{C}}}
\nc{\GraphclassH}{\ensuremath{\mathcal{H}}}
\nc{\GraphclassG}{\ensuremath{\mathcal{G}}}
\nc{\Graphclass}{\ensuremath{\mathcal{G}}}
\DeclareMathOperator{\countr}{d_{\textup{ag}}} 
\nc{\agg}{\ensuremath{\text{ag}}}
\nc{\Count}[2]{\ensuremath{\sum {#1}.{#2}}}
\nc{\quant}[1]{\ensuremath{\textsf{\upshape #1}}}
\rnc{\P}{\quant{P}}
\nc{\inducedSubStr}[2]{\ensuremath{#1[#2]}} 
\nc{\ComplexityClassFont}[1]{\ensuremath{\textsc{#1}}}
\nc{\ACzero}{\ensuremath{\ComplexityClasFont{AC}^0}}
\nc{\PTIME}{\ensuremath{\ComplexityClassFont{Ptime}}}
\nc{\NP}{\ensuremath{\ComplexityClassFont{NP}}}
\nc{\PSPACE}{\ensuremath{\ComplexityClassFont{Pspace}}}
\nc{\FPL}{\ensuremath{\ComplexityClassFont{fpl}}}
\nc{\FPT}{\ensuremath{\ComplexityClassFont{fpt}}}
\nc{\AWstern}{\ensuremath{\ComplexityClassFont{AW}[*]}}
\nc{\Wone}{\ensuremath{\ComplexityClassFont{W}[1]}}
\nc{\sharpP}{\ensuremath{\#\ComplexityClassFont{P}}}
\nc{\sharpWone}{\ensuremath{\#\ComplexityClassFont{W}[1]}}
\title{Learning Concepts Described by\\ Weight Aggregation Logic}
\author{Steffen van Bergerem\\
  \normalsize RWTH Aachen University
  \and
  Nicole Schweikardt\\
  \normalsize Humboldt-Universität zu Berlin
}
\date{}
\begin{document}

\maketitle

\begin{abstract}
We consider weighted structures, which extend ordinary
relational structures by assigning weights,
i.e.\ elements from a particular group or ring, to tuples present
in the structure.
We introduce an extension of first-order logic that allows to
aggregate weights of tuples, compare such aggregates, and use them to
build more complex formulas.
We provide locality properties of fragments of this logic including
Feferman-Vaught decompositions and a Gaifman normal form for a
fragment called $\FOW$, as well as a localisation theorem for a
larger fragment called $\WAun$.
This fragment can express concepts from
various machine learning scenarios.
Using the locality properties,
we show that concepts
definable in $\WAun$ over a weighted
background structure of at most polylogarithmic degree are
agnostically PAC-learnable
in polylogarithmic time after pseudo-linear time preprocessing.
\end{abstract}

\section{Introduction}\label{sec:intro}

In this paper, we study Boolean classification problems.
The elements that are to be classified come from a set \(\X\),
the \emph{instance space}.
A \emph{classifier} on \(\X\) is a function
\(c \colon \mathcal{X} \to \{0,1\}\).
Given a \emph{training sequence} \(T\) of labelled examples
\((x_i, b_i) \in \mathcal{X} \times \{0,1\}\),
we want to find a classifier, called a \emph{hypothesis},
that can be used to predict the label of elements
from \(\X\) not given in \(T\).
We consider the following well-known frameworks for this setting
from computational learning theory.

In Angluin's model of \emph{exact learning}~\cite{Angluin_ExactLearning},
the examples are assumed to be generated using an unknown classifier,
the \emph{target concept}, from a known \emph{concept class}.
The task is to find a hypothesis that is consistent with the training sequence \(T\),
i.e.\ a function \(h \colon \mathcal{X} \to \{0,1\}\) such that
\(h(x_i) = b_i\) for all \(i\).
In Haussler's model of \emph{agnostic probably approximately correct (PAC) learning}~\cite{Haussler},
a generalisation of Valiant's \emph{PAC learning} model~\cite{Valiant_PAC},
an (unknown) probability distribution \(\mathcal{D}\)
on \(\X \times \{0,1\}\) is assumed
and training examples are drawn independently from this distribution.
The goal is to find a hypothesis that generalises well,
i.e.\ one is interested in algorithms that
return with high probability a hypothesis with a small expected error
on new instances drawn from the same distribution.
For more background on PAC learning,
we refer to~\cite{KearnsVazirani,Shalev-Shwartz:2014:UML:2621980}.
We study learning problems in the framework that was introduced by Grohe and
Tur{\'{a}}n~\cite{GroheTuran_Learnability} and further studied
in~\cite{GrienenbergerRitzert_Trees,GroheLoedingRitzert_MSO,GroheRitzert_FO,vanBergerem}.
There, the instance space \(\X\) is a set of tuples from a background structure
and classifiers are described using parametric models based on
logics.

\myparagraph{Our contribution}
We introduce a new logic for describing such classifiers, namely
\emph{first-order logic with weight aggregation} ($\WA$).
It operates on \emph{weighted structures}, which extend
ordinary relational structures by assigning weights, i.e.\ elements
from a particular abelian group or ring, to tuples present in the structure.
Such weighted structures were recently considered by Toru\'{n}czyk~\cite{Szymon},
who studied the complexity of query evaluation problems
for the related logic $\FO[\mathbb{C}]$ and its fragment
$\FO_{\textup{G}}[\mathbb{C}]$.
Our logic $\WA$, however, is closer to
the syntax and semantics of the first-order logic with counting
quantifiers $\FOC$ considered in~\cite{KuskeSchweikardt_FOC}.
This connection enables
us to achieve locality results for the fragments $\FOW$ and
$\WAun$ of $\WA$ similar to those obtained
in~\cite{KuskeSchweikardt_Gaifman,GroheSchweikardt_FOunC}.
Specifically, we achieve
Feferman-Vaught decompositions and a Gaifman normal form
for $\FOW$ as well as a localisation theorem for the
more expressive logic $\WAun$.
We provide examples illustrating that $\WAun$ can express concepts
relevant for various machine learning scenarios.
Using the locality properties,
we show that concepts definable in $\WAun$ over a weighted
background structure of at most polylogarithmic degree
are agnostically PAC-learnable
in polylogarithmic time after pseudo-linear time preprocessing.
This generalises the results that
Grohe and Ritzert~\cite{GroheRitzert_FO}
obtained for first-order logic
to the
substantially more expressive logic $\WAun$.

The main drawback of the existing logic-based learning results is
that they deal with structures and logics that are too weak for describing
meaningful classifiers for real-world
machine learning problems.
In machine learning, input data is often given via numerical values
which are contained in or extracted from a more complex structure,
such as a relational database (cf., \cite{DBLP:conf/pods/Grohe20,GroverLeskovec_node2vec,PanDing,DBLP:conf/sum/SchleichOK0N19}).
Hence, to combine these two types of information, we are interested in
hybrid structures, which extend relational ones by numerical values.
Just as in commonly used relational database systems,
to utilise the power of such hybrid structures,
the classifiers should be allowed to use different methods
to aggregate the numerical values.
Our main contribution is the design of a logic that is capable of
expressing meaningful machine learning problems and, at the same time,
well-behaved enough to have similar locality properties as first-order
logic, which enable us to learn the concepts in sublinear time.

\myparagraph{Outline}
This paper is structured as follows.
Section~\ref{sec:Prelim} fixes basic notation.
Section~\ref{sec:WA} introduces the logic $\WA$ and its fragments
$\FOW$ and $\WAun$,
provides examples, and discusses enrichments of the logic with
syntactic sugar in order to make it more user-friendly (i.e.\
easier to parse or construct formulas) without increasing its expressive
power.
Section~\ref{sec:Locality} provides locality results for the fragments
$\FOW$ and $\WAun$ that are similar in spirit to the known locality
results for first-order logic and the counting logic $\FOunC$.
Section~\ref{sec:PacLearning} is devoted to our results on agnostic
PAC learning.
Section~\ref{sec:conclusion} combines the results from the previous sections
to obtain our main learning theorem for $\WAun$, and concludes the paper with an
application scenario and directions for future work.

\section{Preliminaries}\label{sec:Prelim}

\myparagraph{Standard notation}
We write $\RR$, $\QQ$,
$\ZZ$, $\NN$, and $\NNpos$ for the sets of reals, rationals, integers, non-negative integers, and
positive integers, respectively.
For all $m,n\in\NN$, we write $[m,n]$ for the set
$\setc{k\in\NN}{m\le k\le n}$, and we let $[m]\coloneqq[1,m]$.
For a $k$-tuple $\ov{x}=(x_1,\ldots,x_k)$, we write $|\ov{x}|$ to
denote its \emph{arity} $k$.
By $\emptytuple$, we denote the empty tuple, i.e.\ the tuple of arity
0.
All graphs are assumed to be undirected.
For a graph $G$, we write $V(G)$ and $E(G)$ to denote its vertex set
and edge set, respectively. For $V'\subseteq V(G)$, we write $\inducedSubStr{G}{V'}$ to
denote the subgraph of $G$ induced on $V'$.

\myparagraph{Monoids, groups, semirings, and rings}
Recall that a \emph{monoid} is a set $M$ that is equipped
with a binary operator $\circ \colon M\times M\to M$ that is associative
and has a neutral element $e_M\in M$
(i.e.\ for all $a,b,c\in M$, we have $a\circ (b\circ c) = (a\circ b)\circ c$ and $e_M\circ a
=a\circ e_M = a$).
A monoid is
\emph{commutative} if $a\circ b=b\circ a$ holds for all $a,b\in M$.
An \emph{abelian group} is a commutative monoid $(M,\circ)$ where for
each $a\in M$, there is an $a'\in M$ such that $a\circ a'=e_M$; by
convention, we write ${-}a$ for this $a'$.
When referring to an abelian group, we usually write $(S,\plusS)$
instead of $(M,\circ)$, and we denote the neutral element
by $\nullS$.
A \emph{semiring} is a set $S$ that is equipped with two binary operators $\plus$ and $\mal$ such that $(S,\plus)$ is a commutative monoid with a neutral element $\nullS \in S$,
$(S,\mal)$ is a monoid with a neutral element $\einsS\in S$,
multiplication distributes over addition,
and multiplication by $\nullS$ annihilates $S$
(i.e.\ for all $a,b,c\in S$, we have
$a \mal (b\plus c) = (a \mal b) \plus (a \mal c)$,
$(b\plus c) \mal a = (b\mal a) \plus (c\mal a)$,
$\nullS \mal a = a \mal \nullS = \nullS$).
A \emph{ring} is a semiring $(S,\plus,\mal)$ where $(S,\plus)$ is an
abelian group.
A semiring or ring $(S,\plus,\mal)$ is called \emph{commutative} if
the monoid $(S,\mal)$ is commutative.
In the following, we briefly write $S$ instead of
$(S,\plus,\mal)$, and we write $\plusS$, $\malS$, $\nullS$,
$\einsS$ to denote the operators $\plus$ and $\mal$ and their neutral
elements.

\myparagraph{Signatures, structures, and neighbourhoods}
A \emph{signature} $\sigma$ is a finite set of relation symbols.
Associated with every $R\in\sigma$ is an arity $\ar(R) \in \NN$.
A \emph{$\sigma$-structure} $\A$ consists of a finite
non-empty set $A$ called the \emph{universe} of $\A$ (sometimes denoted $U(\A)$),
and for each $R \in \sigma$ a relation
$R^{\A} \subseteq A^{\ar(R)}$.
The \emph{size} of $\A$ is $\abs{\A} \deff \abs{A}$.
Note that, according to these definitions, all considered
signatures and structures are \emph{finite}, signatures are
\emph{relational} (i.e.\ they do not contain any constants or function
symbols), and may contain relation symbols of arity 0 (the only two 0-ary
relations over a set $A$ are $\emptyset$ and
$\set{\emptytuple}$).

Let $\sigma'$ be a signature with $\sigma'\supseteq\sigma$.
A \emph{$\sigma'$-expansion} of a $\sigma$-structure $\A$ is a
$\sigma'$-structure $\B$ with universe $B$
such that $B=A$ and $R^{\B}=R^{\A}$ for every $R\in\sigma$.
If $\B$ is a $\sigma'$-expansion of
$\A$, then $\A$ is called the
\emph{$\sigma$-reduct} of $\B$.
A \emph{substructure} of a $\sigma$-structure $\A$ is a
$\sigma$-structure $\B$ with a universe $B\subseteq A$ and
$R^{\B}\subseteq R^\A$ for all $R\in\sigma$.
For a $\sigma$-structure $\A$ and a non-empty set
$B\subseteq A$, we write $\inducedSubStr{\A}{B}$ to denote the
\emph{induced substructure} of $\A$ on $B$,
i.e.\ the $\sigma$-structure with universe $B$ and
$R^{\inducedSubStr{\A}{B}} = R^\A \cap B^{\ar(R)}$ for every
$R\in\sigma$.

The \emph{Gaifman graph} $G_\A$ of a
$\sigma$-structure $\A$ is the graph with vertex set $A$
and an edge between two distinct vertices $a,b\in A$ iff there exists
$R\in\sigma$ and a tuple $(a_1,\ldots,a_{\ar(R)})\in R^{\A}$ such
that $a,b\in\set{a_1,\ldots,a_{\ar(R)}}$.  The structure $\A$ is
\emph{connected} if $G_{\A}$ is connected;
the \emph{connected components} of $\A$ are the connected components
of $G_{\A}$.
The \emph{degree} of $\A$ is the degree of $G_\A$,
i.e.\ the maximum number of neighbours of a vertex of $G_\A$.
The \emph{distance} $\dist^\A(a,b)$
between two elements $a,b\in A$ is the minimal number of edges of a path from
$a$ to $b$ in $G_\A$; if no such path
exists, we set $\dist^\A(a,b)\deff\infty$.
For a tuple $\ov{a}=(a_1,\ldots,a_k)\in A^k$ and an element $b\in A$, we let
$\dist^{\A}(\ov{a},b)\deff \min_{i\in[k]}\dist(a_i,b)$, and for a
tuple $\ov{b}=(b_1,\ldots,b_\ell)$, we let $\dist(\ov{a},\ov{b})\deff\min_{j\in[\ell]}\dist(\ov{a},b_j)$.

For every $r \ge 0$,
the \emph{$r$-ball of $\ov{a}$ in $\A$} is the set
$ N_r^\A(\ov{a}) \ = \ \setc{b\in A\,}{\,\dist^\A(\ov{a},b)\le r}$.
The \emph{$r$-neighbourhood of $\ov{a}$ in $\A$} is the structure
$\Neighb{r}{\A}{\ov{a}}  \deff  \inducedSubStr{\A}{\neighb{r}{\A}{\ov{a}}}$\,.

\section{Weight Aggregation Logic}\label{sec:WA}

This section introduces our new logic, which we
call \emph{first-order logic with weight aggregation}.
It is inspired by
the counting logic $\FOC$ and its fragment $\FOunC$, as
introduced in~\cite{KuskeSchweikardt_FOC, GroheSchweikardt_FOunC}, as
well as
the logic $\FO[\mathbb{C}]$ and its fragment
$\FO_{\textup{G}}[\mathbb{C}]$, which were recently introduced by
Toru\'{n}czyk in~\cite{Szymon}.
Similarly as in~\cite{Szymon}, we consider \emph{weighted
  structures}, which extend ordinary relational structures by assigning
a weight, i.e.\ an element of a particular
group or ring,
to tuples present
in the structure.
The syntax and semantics of our logic, however, are closer in spirit to
the syntax and semantics of the logic $\FOunC$, since this will enable
us to achieve locality results similar to those obtained
in~\cite{KuskeSchweikardt_Gaifman,GroheSchweikardt_FOunC}.

\myparagraph{Weighted structures}
Let $\sigma$ be a
signature.
Let $\SC$ be a collection of
rings and/or abelian groups.
Let $\Weights$ be a finite set of \emph{weight symbols}, such that
each $\weight\in\Weights$ has an associated \emph{arity}
$\ar(\weight)\in\NNpos$
and a \emph{type} $\wtype(\weight)\in\SC$.
A \emph{$(\sigma,\Weights)$-structure} is a $\sigma$-structure $\A$ that is enriched, for every $\weight\in \Weights$, by an interpretation
$\weight^\A \colon A^{\ar(\weight)} \to \wtype(\weight)$, which satisfies the following \emph{locality condition}:
if $\weight^\A(a_1,\ldots,a_k)\neq \nullS$ for
$S\deff\wtype(\weight)$, $k\deff \ar(\weight)$ and
$(a_1,\ldots,a_k)\in A^k$, then $k= 1$ or $a_1=\cdots = a_k$ or there
exists an $R\in \sigma$ and a tuple $(b_1,\ldots,b_{\ar(R)})\in R^\A$
such that $\set{a_1,\ldots,a_k}\subseteq
\set{b_1,\ldots,b_{\ar(R)}}$.
All notions that were introduced in Section~\ref{sec:Prelim} for
$\sigma$-structures carry over to $(\sigma,\Weights)$-structures in
the obvious way.

We will use the following as running examples throughout this section.
\begin{exmp}\label{example:WStr}
\begin{enumerate}[(a)]
\item\label{exmp:marketplaceStart}
  Consider an online marketplace that allows retailers
  to sell their products to consumers.
  The database of the marketplace contains a table with transactions,
  and each entry consists of an identifier, a customer, a product,
  a retailer, the price per item, and the number of items sold.
  We can describe the database of the marketplace as a weighted structure
  as follows.
  Let $(\QQ, +, \mal)$ be the field of rationals,
  let
  $\Weights$ contain two unary weight symbols $\mathtt{price}$ and
  $\mathtt{quantity}$ of type $(\QQ,+,\mal)$,
  let \(\sigma = \{T\}\),
  and let \(\A\) be a \((\sigma, \Weights)\)-structure
  such that the universe \(A\) contains
  the identifiers for the transactions, customers, products, and retailers.
  For every transaction, let \(T^\A\) contain the
  4-tuple \((i, c, p, r)\) consisting of the identifier for the transaction,
  the customer, the product, and the retailer.
  For every transaction identifier \(i\), let \(\mathtt{price}^\A(i)\)
  be the price per item in the transaction
  and \(\mathtt{quantity}^\A(i)\) be the number of items sold.
\item\label{exmp:social-networkStart}
  In a recent survey \cite{PanDing},
  Pan and Ding describe different approaches to
  represent social media users via embeddings into
  a low-dimensional vector space,
  where the embeddings are based on the users'
  social media posts\footnote{Among other applications,
    such embeddings might be used
  to predict a user's personality or political leaning.}.
  We represent the available data by a weighted structure $\A$ as follows.
  Consider the group \((\mathbb{R}^k, +)\), where
  \(\mathbb{R}^k\) is the set of \(k\)-dimensional real vectors
  and \(+\) is the usual vector addition, and let
  $\Weights$ contain a unary weight symbol $\mathtt{embedding}$ of
  type $(\RR^k,+)$.
  Let  \(\sigma = \{F\}\) and
  let \(\A\) be a \((\sigma, \Weights)\)-structure
  such that the universe \(A\) consists of the users of a social network.
  Let \(F^\A\) contain all pairs of users \((a, b)\)
  such that \(a\) is a follower of \(b\).
  For every user \(a \in A\), let \(\mathtt{embedding}^\A(a)\)
  be a \(k\)-dimensional vector representing \(a\)'s social media
  posts.
\item\label{exmp:edgeweightsStart}
Consider vertex-coloured edge-weighted graphs,
where
$R,B,G$ are unary relations of red, blue, and green vertices,
\(E\) is a binary relation of edges, and where every edge $(a,b)$ has an
associated \emph{weight} that is a $k$-dimensional vector of reals
(for some fixed number $k$). Such graphs can be viewed as
\((\sigma, \Weights)\)-structures $\A$, where
\(\sigma = \{E, R, B, G\}\),
$\Weights$ contains a binary weight symbol $\weight$ of type
$(\mathbb{R}^k,+)$ and
\(\weight^\A(a,b) \in \mathbb{R}^k\) for all edges $(a,b)\in E^\A$.
\end{enumerate}
\end{exmp}

Fix a countably infinite set $\Vars$ of \emph{variables}.
A \emph{$(\sigma,\Weights)$-interpretation} $\I=(\A,\beta)$ consists of a $(\sigma,\Weights)$-structure $\A$ and an \emph{assignment} $\beta \colon \Vars\to A$.
For $k\in\NNpos$, elements $a_1,\ldots,a_k\in A$, and $k$ distinct variables $y_1,\ldots,y_k$, we write
$\I\frac{a_1,\ldots,a_k}{y_1,\ldots,y_k}$ for the interpretation $(\A,\beta\frac{a_1,\ldots,a_k}{y_1,\ldots,y_k})$, where
$\beta\frac{a_1,\ldots,a_k}{y_1,\ldots,y_k}$ is the assignment $\beta'$ with $\beta'(y_i)=a_i$ for every $i\in [k]$ and $\beta'(z)=\beta(z)$ for all $z\in\Vars\setminus\set{y_1,\ldots,y_k}$.

\bigskip

\myparagraph{The weight aggregation logic $\WA$ and its restrictions
  $\WAun$ and $\FOW$}
Let $\sigma$ be a
signature, $\SC$ a
collection of
rings and/or abelian groups, and $\Weights$ a
finite set of weight symbols.
An \emph{$\SC$-predicate collection} is a 4-tuple
$(\PP,\ar,\ptype,\sem{\cdot})$ where $\PP$ is a countable set of
\emph{predicate names} and, to each $\Pred\in\PP$,
$\ar$ assigns an \emph{arity} $\ar(\Pred)\in\NNpos$,
$\ptype$ assigns a \emph{type} $\ptype(\Pred)\in \SC^{\ar(\Pred)}$,
and $\sem{\cdot}$ assigns a \emph{semantics}
$\sem{\Pred}\subseteq \ptype(\Pred)$.
For the remainder of this section, fix an $\SC$-predicate
collection $(\PP,\ar,\ptype,\sem{\cdot})$.

For every $S\in\SC$ that is not a ring but just an abelian group, a
\emph{$\Weights$-product of type $S$} is
either an element $s\in S$ or an expression of
the form $\weight(y_1,\ldots,y_k)$ where $\weight\in\Weights$ is of
type $S$, $k=\ar(\weight)$, and $y_1,\ldots,y_k$ are $k$ pairwise
distinct variables in $\Vars$.
For every ring $S\in\SC$, a \emph{$\Weights$-product of type $S$}
is an expression of the form
\ $t_1 \mal \cdots \mal t_\ell$ \
where $\ell\in\NNpos$ and for each $i\in [\ell]$ either $t_i\in S$ or there exists a
$\weight\in \Weights$ with $\wtype(\weight)=S$ and there exist
$k\deff\ar(\weight)$ pairwise distinct variables $y_1,\ldots,y_k$ in
$\Vars$ such that $t_i=\weight(y_1,\ldots,y_k)$.
By $\varsof(\pweight)$ we denote the set of all variables that occur
in a $\Weights$-product $\pweight$.

\begin{exmp}\label{example:intuition}
Recall Example~\ref{example:WStr}\eqref{exmp:marketplaceStart}--\eqref{exmp:edgeweightsStart},
and let $x$ and $y$ be variables. Examples of $\Weights$-products are
$\mathtt{price}(x) {\cdot} \mathtt{quantity}(x)$,
$\mathtt{embedding}(x)$,
and $\weight(x,y)$.
The logic we will define next is capable of expressing
the following statements.
\begin{enumerate}[(a)]
\item\label{exmp:marketplaceWish}
  Given a first-order formula \(\phi_{\textup{group}}(p)\)
  that defines products of a certain product group based on the structure
  of their transactions, we can
  describe the amount of money a consumer $c$ paid on the specified
  product group via the \emph{$\SC$-term}
  \[t_{\textup{spending}}(c) \deff\ \Ssum{\mathtt{price}(i) \cdot \mathtt{quantity}(i)\;}{\;\exists p\,\exists r\,\big(\phi_{\textup{group}}(p) \land T(i,c,p,r)\big)}.\]
  This term associates with every consumer $c$ the sum of the product
  of $\mathtt{price}(i)$ and $\mathtt{quantity}(i)$ for all
  transaction identifiers $i$ for which there exists a product $p$ and
  a retailer $r$ such that
  the tuple $(i,c,p,r)$ belongs to the transaction table and
  $\phi_{\textup{group}}(p)$ holds.
  The $\SC$-term
  \[t_{\textup{sales}} \deff \ \Ssum{\mathtt{price}(i) \cdot \mathtt{quantity}(i)\;}{\;\exists c\,\exists p\,\exists r\,\big(\phi_{\textup{group}}(p) \land T(i,c,p,r)\big)}\]
  specifies the amount \emph{all} customers have paid on products
  from the product group.

  We might want to select the ``heavy hitters'', i.e.\ all customers
  $c$ for whom $t_{\textup{spending}}(c) > 0.01 \,{\cdot}\, t_{\textup{sales}}$ holds. In
our logic, this is expressed by the formula
\[
\Pred_{>}(t_{\textup{spending}}(c),0.01\,{\cdot}\,t_{\textup{sales}})
\]
where
$\Pred_{>}$ is a predicate name of type $(\QQ,+,\mal)\times (\QQ,+,\mal)$ with
$\sem{\Pred_{>}}=\setc{(r,s)\in\QQ^2}{r>s}$.

\item\label{exmp:social-networkWish}
For vectors $u,v\in\RR^k$, let $d(u,v)$ denote the Euclidean distance between $u$ and $v$.
We might want to use a formula $\phi_{\textup{similar}}(x,y)$
expressing that the two $k$-dimensional vectors associated with
persons $x$ and $y$ have Euclidean distance at most $1$. To express this in our logic,
we can add the rational field $(\QQ,+,\mal)$ to the collection $\SC$ and
use a predicate name $\Pred_{\textup{ED}}$ of arity 3 and type $(\RR^k,+)\times
(\RR^k,+)\times (\QQ,+,\mal)$ with
$\sem{\Pred_{\textup{ED}}}=\setc{(u,v,q)\in\RR^k\times \RR^k\times\QQ}{d(u,v)\leq q}$.
Then, \[\phi_{\textup{similar}}(x,y) \deff
\Pred_{\textup{ED}}(\mathtt{embedding}(x),\mathtt{embedding}(y),1)\]
is a formula
with the desired meaning.

\item\label{exmp:edgeweightsWish}
For each vertex $x$, the sum of the weights of edges
between $x$ and its blue neighbours
is specified by the $\SC$-term
$t_B(x)\deff \Ssum{\weight(x',y)}{(x'{=}x\land E(x',y) \land B(y))}$.
\end{enumerate}
\smallskip

\noindent
We have designed the definition of the syntax of our logic in a way
particularly suitable for formulating and proving the locality results
that are crucial for obtaining our learning results. To obtain a more
user-friendly syntax, i.e.\ which allows to read and construct formulas
in a more intuitive way, one could of course introduce syntactic sugar that
allows to explicitly write statements of the form
\begin{itemize}
\item
$t_{\textup{spending}}(c) > 0.01 \,{\cdot}\, t_{\textup{sales}}$
\ instead of \
$\Pred_{>}(t_{\textup{spending}}(c),0.01\,{\cdot}\,t_{\textup{sales}})$

\item
$d(\textup{embedding}(x),\textup{embedding}(y))\leq 1$
\ instead of \
$\Pred_{\textup{ED}}(\textup{embedding}(x),\textup{embedding}(y),1)$

\item
${\displaystyle\sum_y}\weight(x,y).(E(x,y) \land B(y))$
\ instead of \ $\Ssum{\weight(x',y)}{(x'{=}x\land E(x',y) \land B(y))}$.
\end{itemize}
\end{exmp}

\noindent
We now define the precise
syntax and semantics of our
weight aggregation logic.

\begin{defn}[$\WAPS$]\label{def:WA}
For $\RLPS$, the set of \emph{formulas} and \emph{$\SC$-terms} is built
according to the following rules:
\begin{enumerate}[(1)]
  \item\label{item:atomic} $x_1{=}x_2$ and $R(x_1,\ldots,x_{\ar(R)})$
    are \emph{formulas},\\ where $R\in\sigma$ and
    $x_1,x_2,\ldots,x_{\ar(R)}$ are variables\footnote{In particular,
      if $\ar(R)=0$, then $R()$ is a formula.}.
  \item\label{item:wsimple}
     If $\weight\in\Weights$, $S= \wtype(\weight)$, $s\in S$,
     $k=\ar(\weight)$, and $\ov{x}=(x_1,\ldots,x_k)$ is a tuple of $k$
     pairwise distinct variables, then $( s= \weight(\ov{x}) )$ is a
     \emph{formula}.
  \item\label{item:bool} If $\varphi$ and $\psi$ are formulas,
    then $\nicht\varphi$ and $(\varphi\oder\psi)$ are also \emph{formulas}.
  \item\label{item:exists} If $\varphi$ is a formula and
    $y\in\Vars$, then $\exists y\,\varphi$ is a \emph{formula}.
  \item\label{item:finitegroup}
    If $\phi$ is a formula, $\weight\in\Weights$, $S=
    \wtype(\weight)$, $s\in S$, $k=\ar(\weight)$, and
    $\ov{y}=(y_1,\ldots,y_k)$ is a tuple of $k$ pairwise distinct
    variables, then $\big( s=\sum \weight(\ov{y}).\phi \big)$ is a
    \emph{formula}.

  \item\label{item:Q} If $\Pred\in\PP$, $m= \ar(\Pred)$, and
    $t_1,\ldots,t_m$ are $\SC$-terms such that $(\wtype(t_1),\ldots,\wtype(t_m))=\ptype(\Pred)$, then
    $\Pred(t_1,\ldots,t_m)$ is a \emph{formula}.
  \item\label{item:constterm} For every $S\in\SC$ and every $s\in S$,
    $s$ is an \emph{$\SC$-term} of type $S$.
  \item\label{item:wsimpleterm} For every $S\in\SC$, every $\weight\in\Weights$ of type $S$,
    and every tuple $(x_1,\ldots,x_k)$ of $k\deff\ar(\weight)$ pairwise distinct variables
    in $\Vars$, $\weight(x_1,\ldots,x_k)$ is an \emph{$\SC$-term} of type $S$.
  \item\label{item:plustimesterm} If $t_1$ and $t_2$ are
    $\SC$-terms of the same type $S$, then so are $(t_1 \plus t_2)$
    and $(t_1 \minus t_2)$;
    furthermore, if $S$ is a ring (and not just an abelian group),
    then also $(t_1 \mal t_2)$ is an $\SC$-term of type $S$.
  \item\label{item:countterm} If $\varphi$ is a formula,
    $S\in\SC$, and $\pweight$ is a $\Weights$-product of type $S$, then
    $\Ssum{p}{\varphi}$ is an
    \emph{$\SC$-term} of type $S$.
\end{enumerate}

\noindent
Let $\I=(\A,\beta)$ be a $(\sigma,\Weights)$-interpretation.
For every
formula or $\SC$-term $\xi$ of $\RLPS$, the semantics
$\sem{\xi}^\I$ is defined as follows.

\begin{enumerate}[(1)]
\item
 $\sem{x_1{=}x_2}^{\I}=1$ if $a_1{=}a_2$, and $\sem{x_1{=}x_2}^{\I}=0$
 otherwise; \\
    $\sem{R(x_1,\ldots,x_{\ar(R)})}^{\I}=1$ if
    $(a_1,\ldots,a_{\ar(R)})\in R^\A$, and
    $\sem{R(x_1,\ldots,x_{\ar(R)})}^{\I}=0$ otherwise;
    \\
    where $a_j \deff \beta(x_j)$ for $j\in \set{1,\ldots,\max\set{2,\ar(R)}}$.

\item
    $\sem{( s=\weight(\ov{x}))}^{\I}=1$ if $s= \weight^\A(\beta(x_1),\ldots,\beta(x_k))$,
    and $\sem{( s=\weight(\ov{x}))}^{\I}=0$ otherwise.

\item $\sem{\nicht\varphi}^{\I}=1-\sem{\phi}^{\I}$ \ and \
    $\sem{(\phi\oder\psi)}^{\I}=
    \max\set{\sem{\phi}^{\I},\sem{\psi}^{\I}}$.

\item
    $\sem{\exists
      y\,\varphi}^{\I}=\max\setc{\sem{\phi}^{\I\frac{a}{y}}}{a\in A}$.

\item
    $\sem{\big( s=\sum \weight(\ov{y}).\phi\big)}^{\I}=1$ if $s=\sum_S \setc{\weight^\A(\ov{a})}{\ov{a}=(a_1,\ldots,a_k)\in A^k \text{ with } \sem{\phi}^{\I\frac{a_1,\ldots,a_k}{y_1,\ldots,y_k}}=1}$
(as usual, by convention, we let $\sum_S X = \nullS$ if $X=\emptyset$).

\item
    $\sem{\Pred(t_1,\ldots,t_m)}^{\I}=1$ if
    $\big(\sem{t_1}^{\I},\ldots,\sem{t_m}^{\I}\big)\in\sem{\Pred}$, and
    $\sem{\Pred(t_1,\ldots,t_m)}^{\I}=0$ otherwise.

\item $\sem{s}^{\I}=s$.

\item $\sem{\weight(x_1,\ldots,x_k)}^\I=\weight^\A(\beta(x_1),\ldots,\beta(x_k))$.

\item $\sem{(t_1 \ast t_2)}^{\I}= \sem{t_1}^{\I} \ast_S \sem{t_2}^{\I}$,
  for $\ast\in\set{\plus,\minus,\mal}$.

\item
    $\sem{\Ssum{\pweight}{\varphi}}^{\I}= \sum_S \setc{ \sem{\pweight}^{\I\frac{a_1,\ldots,a_k}{y_1,\ldots,y_k}}
       }{a_1,\ldots,a_k\in A \text{ with }
      \sem{\phi}^{\I\frac{a_1,\ldots,a_k}{y_1,\ldots,y_k}}=1 }$,
   \ where \linebreak[4] $\set{y_1,\ldots,y_k}=\varsof(\pweight)$ and
    $k=|\varsof(\pweight)|$ and $\sem{\pweight}^\I=\sem{t_1}^\I \malS \cdots
    \malS \sem{t_\ell}^\I$ if $\pweight= t_1\mal\cdots\mal t_\ell$
    is of type $S$.
\end{enumerate}
\end{defn}

An \emph{expression} is a formula or an $\SC$-term.
 As usual, for a formula $\phi$ and a $(\sigma,\Weights)$-interpretation
 $\I$, we will often write
 $\I\models\phi$ to indicate that $\sem{\phi}^\I =1$. Accordingly,
 $\I\not\models\phi$ indicates that $\sem{\phi}^{\I}=0$.

The set $\varsof(\xi)$ of an expression $\xi$ is defined as
the set of all variables in $\Vars$ that occur in $\xi$.
The \emph{free variables} $\free(\xi)$ of $\xi$ are defined as follows:
$\free(\xi)=\varsof(\xi)$ if $\xi$ is built according to one of the
rules~\eqref{item:atomic}, \eqref{item:wsimple}, \eqref{item:constterm},
\eqref{item:wsimpleterm}; $\free(\nicht\phi)=\free(\phi)$,
$\free((\phi\oder\psi))=\free(\phi)\cup\free(\psi)$,
$\free(\exists y\,\phi)=\free(\phi)\setminus\set{y}$,
$\free((s=\sum\weight(y_1,\ldots,y_k).\phi))=
  \free(\phi)\setminus\set{y_1,\ldots,y_k}$;
$\free(\Pred(t_1,\ldots,t_m))=\bigcup_{i=1}^m\free(t_i)$;
$\free((t_1\ast t_2))=\free(t_1)\cup\free(t_2)$ for
  $\ast\in\set{\plus,\minus,\mal}$;
$\free(\sum \pweight.\phi)=\free(\phi)\setminus\varsof(\pweight)$.
As usual, we will write $\xi(\ov{x})$ for $\ov{x}=(x_1,\ldots,x_k)$ to
indicate that $\free(\xi)\subseteq\set{x_1,\ldots,x_k}$.
A \emph{sentence} is a $\WAPS$-formula $\phi$ with
$\free(\phi)=\emptyset$.
A \emph{ground $\SC$-term} is an $\SC$-term $t$ of $\WAPS$ with
$\free(t)=\emptyset$.

For a
$(\sigma,\Weights)$-structure $\A$ and a tuple
$\ov{a}=(a_1,\ldots,a_k)\in A^k$, we write $\A\models\phi[\ov{a}]$ or
$(\A,\ov{a})\models \phi$ to indicate that for every assignment
$\beta \colon \Vars\to A$ with $\beta(x_i)=a_i$ for all $i\in[k]$, we
have
$\I\models\phi$,
for $\I=(\A,\beta)$.
Similarly, for an $\SC$-term $t(\ov{x})$ we write $t^\A[\ov{a}]$ to denote $\sem{t}^\I$.

\begin{defn}[$\WAun$ and $\FOW$]\label{def:WAun}\label{def:FOW}
The set of \emph{formulas} and \emph{$\SC$-terms} of the logic
$\WAunPS$ is built according to the same rules as for the logic $\WAPS$,
with the following restrictions:
\begin{description}
\item[\eqref{item:finitegroup}$_1$:] rule~\eqref{item:finitegroup} can only be applied if $S$ is
  \emph{finite},
\item[\eqref{item:Q}$_1$:] rule~\eqref{item:Q} can only be applied if
 $|\free(t_1)\cup \cdots \cup \free(t_m)|\leq 1$.
\end{description}
$\FOWS$ is the restriction of $\WAunPS$ where
rule~\eqref{item:countterm} cannot be applied.
\end{defn}

Note that first-order logic $\FO[\sigma]$ is the restriction of $\FOWS$
where only rules~\eqref{item:atomic}, \eqref{item:bool}, and
\eqref{item:exists} can be applied.
As usual, we write $(\phi\und\psi)$ and $\forall y\,\phi$ as shorthands
for $\nicht(\nicht\phi \oder\nicht\psi)$ and $\nicht\exists y\,\nicht\phi$.
The \emph{quantifier rank} $\qr(\xi)$ of a $\WAPS$-expression $\xi$ is
defined as the maximum nesting depth of constructs using
rules~\eqref{item:exists} and~\eqref{item:finitegroup}
in order to construct $\xi$.
The \emph{aggregation depth} $\countr(\xi)$ of $\xi$ is defined as
the maximum nesting depth of term constructions using
rule~\eqref{item:countterm} in order to construct $\xi$.

\begin{rem}\label{remark:FOunCinWAun}
$\FOW$ can be viewed as an extension of first-order logic with
modulo-counting quantifiers:
if \(\SC\) contains the abelian group \((\ZZ/m\ZZ,+)\) for some
$m\geq 2$,
and \(\Weights\) contains a unary weight symbol \(\weightone_m\)
  of type $\ZZ/m\ZZ$
  such that \(\weightone_m^\A(a) = 1\) for all \(a \in A\),
  then the modulo $m$ counting quantifier
  \(\exists^{i \text{ mod }m}y\, \phi\) (stating that the number of
  interpretations for $y$ that satisfy $\phi$ is congruent to $i$ modulo $m$)
  can be expressed in $\FOWS$ via
  \(\big(i=\Ssum{\weightone_m(y)}{\phi}\big)\).
\smallskip

\noindent
$\WAun$ can be viewed as an extension of the logic $\FOunC$
of~\cite{GroheSchweikardt_FOunC}:
if $\SC$ contains the integer ring $(\ZZ, +, \mal)$ and $\Weights$
contains a unary weight symbol $\weightone$ of type $\ZZ$ such that
\(\weightone^\A(a) = 1\) for all \(a \in A\) on all considered
$(\sigma,\Weights)$-structures $\A$,
then the counting term \(\#(y_1,\ldots,y_k).\phi\) of $\FOunC$ (which
counts the number of tuples $(y_1,\ldots,y_k)$ that satisfy $\phi$) can
be expressed in $\WAunPS$ via the $\SC$-term
  \(\Ssum{\pweight}{\phi}\) for $\pweight\deff \weightone(y_1) \mal
  \cdots \mal \weightone(y_k)$.
\smallskip

\noindent
Let us mention, again, that we have designed the precise definition of the syntax of our logic in a way
particularly suitable for formulating and proving the locality results
that are crucial for obtaining our learning results. To obtain a more
user-friendly syntax, i.e.\ which allows to read and construct formulas
in a more intuitive way, it would of course make sense to introduce syntactic sugar that
allows to explicitly write statements of the form
\begin{itemize}
\item
\(\#(y_1,\ldots,y_k).\phi\)
\ instead of \
  \(\Ssum{\pweight}{\phi}\) for $\pweight\deff \weightone(y_1) \mal
  \cdots \mal \weightone(y_k)$
\item
$\big(\#(y).\phi \equiv i \ \text{mod} \ m\big)$ \ or \
\(\exists^{i \text{ mod }m}y\, \phi\)
\ instead of \
\(\big(i=\Ssum{\weightone_m(y)}{\phi}\big)\).
\end{itemize}
For this, one would tacitly assume that $\SC$ contains
$(\ZZ,+,\mal)$ (or \((\ZZ/m\ZZ,+)\)) and
$\Weights$ contains a unary weight symbol $\weightone$ of type $\ZZ$
(or $\weightone_m$ of type $\ZZ/m\ZZ$)
where $\weightone^{\A}(a)=1$ ($= \weightone^{\A}_m(a)$) for every $a\in A$ and every considered
$(\sigma,\Weights)$-structure $\A$.
\end{rem}

To close this section, we return to the running examples from
Examples~\ref{example:WStr} and
\ref{example:intuition}.

\begin{exmp}\label{example:moreintuition}
We use the syntactic sugar introduced at the end of Remark~\ref{remark:FOunCinWAun}.
\begin{enumerate}[(a)]
\item\label{exmp:marketplace}
  The number of consumers who bought products $p$ from the product
  group defined by \(\phi_{\textup{group}}(p)\) is specified by the
  $\SC$-term
  \[t_{\textup{\#cons}} \deff \
   \Ssum{\mathtt{one}(c)\;}{\;\exists i\,\exists p\,\exists
   r\,(\phi_{\textup{group}}(p) \land T(i,c,p,r))};\] and using the
 syntactic sugar described above, this $\SC$-term can be expressed via
 $
  \#(c).{\;\exists i\,\exists p\,\exists
   r\,(\phi_{\textup{group}}(p) \land T(i,c,p,r))}
 $.

  The consumers $c$ who spent at least as much as
  the \emph{average consumer} on the products $p$
  satisfying \(\phi_{\textup{group}}(p)\) can be
  described by the formula
  \[\phi_{\textup{spending}}(c) \deff \ \P_{\geq}\big((t_{\textup{spending}}(c)
  \,\mal\, t_{\textup{\#cons}})\, ,\,  t_{\textup{sales}} \big),\]
  where $\P_{\geq}$ is a binary predicate in $\Ps$ of type
  $\QQ\times\QQ$ that is interpreted by the $\geq$-relation.
 To improve readability, one could introduce syntactic sugar that
 allows to express this as
$t_{\textup{spending}}(c) \geq t_{\textup{sales}} /
t_{\textup{\#cons}}$.
The formula $\phi_{\textup{spending}}(c)$ belongs to \(\WAunPS\).
\smallskip

\item\label{exmp:social-network}
The term $t_{\#\textup{follows}}(x)\deff \#(y).F(x,y)$ specifies
the number of users $y$ followed by person $x$.
The term $t_{\textup{sum}}(x) \deff
\Ssum{\mathtt{embedding(y)}}{F(x,y)}$ specifies the sum of the vectors
associated with all users $y$ followed by $x$.
To describe the users $x$ whose embedding
is $\delta$-close (for some fixed $\delta>0$) to the average of the embeddings of users they follow\footnote{Depending on the target of the embeddings,
  this could mean that the user mostly follows users
  with a very similar personality or political leaning.}, we
might want to use a formula $\phi_{\textup{close}}(x)$ of the form
\[
   d\,\big(\,\mathtt{embedding}(x) \;,\;
   {\textstyle\frac{1}{t_{\#\textup{follows}}(x)}}\,{\cdot}\,
   t_{\textup{sum}}(x) \,\big) \ < \ \delta\,.
\]
We can describe this in
$\WAunPS$ by the formula
\[
 \phi_{\textup{close}}(x) \ \deff \
 \Pred_{\dist < \delta}(\mathtt{embedding}(x),t_{\#\textup{follows}}(x),t_{\textup{sum}}(x)),
\]
where
\(\Pred_{\dist < \delta}\) is a ternary predicate in $\PP$ of type
  \(\RR^k \times \ZZ \times \RR^k\)
  consisting of all triples \((\bar{v}, \ell, \bar{w})\)
  with \(\ell > 0\) and
$d(\bar{v},\frac{1}{\ell}{\cdot}\bar{w})<\delta$.

\item\label{exmp:edgeweights}
Recall the term $t_B(x)$ introduced in
Example~\ref{example:intuition}\,\eqref{exmp:edgeweightsWish} that
specifies the sum of the weights of edges between $x$ and its blue
neighbours, and let $t_R(x)$ be a similar term summing up the
weights of edges between $x$ and its red neighbours (using
the syntactic sugar introduced at the end of
Example~\ref{example:intuition}, this can be described as
${\displaystyle\sum_y}\weight(x,y).(E(x,y) \land R(y))$).
To specify the vertices $x$ that have exactly 5 red neighbours, we can
use the formula $\phi_{\textup{5\,red}}(x) \deff
(\,5=\#(y).(E(x,y)\und R(y)) \,)$.
Let us now assume we are given a particular set $H\subseteq
\RR^{2k}$ and we
want to specify the vertices $x$ that have exactly
5 red neighbours and for which, in addition, the
$2k$-ary vector obtained by concatenating the $k$-ary vectors
computed by summing up the weights of edges
between $x$ and its blue neighbours
and by summing up the weights of edges
between $x$ and its red neighbours
belongs to $H$. To express this, we can use
a binary predicate $\Pred$ of type
$\mathbb{R}^k\times\mathbb{R}^k$ with
$\sem{\Pred} = \big\{(\bar{u},\bar{v}) \in \mathbb{R}^k \times
\mathbb{R}^k \,:\, (u_1,\ldots,u_k,v_1,\ldots,v_k)\in H\}$.
Then, the \(\WAunPS\)-formula
$\psi(x) \deff \phi_{\textup{5\,red}}(x) \land \Pred(t_B(x),t_G(x))$
specifies the vertices $x$ we are interested in.
\end{enumerate}
\end{exmp}

\section{Locality Properties of $\FOW$ and $\WAun$}\label{subsec:Locality}\label{sec:Locality}
We now summarise locality properties of $\FOW$ and $\WAun$ that are
similar to well-known locality properties of first-order logic $\FO$ and to
locality properties of $\FOunC$ achieved
in~\cite{GroheSchweikardt_FOunC}. This includes
\emph{Feferman-Vaught decompositions} (Section~\ref{sec:FV}) and
a \emph{Gaifman normal form} for $\FOW$ (Section~\ref{sec:Gaifman}),
and a \emph{localisation theorem} for the more expressive logic $\WAun$ (Section~\ref{sec:Decomposition}).

For the remainder of this section, let us fix a signature $\sigma$, a
collection $\SC$ of rings and/or abelian groups, a finite set
$\Weights$ of weight symbols, and an $\SC$-predicate collection $(\PP,\ar,\ptype,\sem{{\cdot}})$.

The notion of \emph{local formulas} is defined as usual~\cite{Lib04}:
let $r\in\NN$.
A $\WAPS$-form\-ula $\phi(\ov{x})$ with free variables
$\ov{x}=(x_1,\ldots,x_k)$ is \emph{$r$-local (around
  $\ov{x}$)}
if for every $(\sigma,\Weights)$-structure $\A$ and all
$\ov{a}\in A^k$, we have
\ $\A\models\phi[\ov{a}]
  \iff
 \NrA{\ov{a}}\models\phi[\ov{a}]$\,.
A formula is \emph{local} if it is $r$-local for some $r\in\NN$.

For an $r\in\NN$, it is
straightforward to construct an $\FO[\sigma]$-formula
$\dist^{\sigma}_{\leq r}(x,y)$ such that for every $(\sigma,\Weights)$-structure $\A$
and all $a,b\in A$, we have
$\A\models\dist^{\sigma}_{\leq r}[a,b]$ $\iff$
$\dist^{\A}(a,b)\leq r$.
To improve readability, we write
$\dist^\sigma(x,y)\,{\leq}\, r$ for
$\dist^{\sigma}_{\leq r}(x,y)$, and
$\dist^\sigma(x,y)\,{>}\,r$ for $\nicht\dist^{\sigma}_{\leq r}(x,y)$;
and we omit the superscript $\sigma$ when it is clear from the context.
For a tuple $\ov{x}=(x_1,\ldots,x_k)$ of variables,
$\dist(\ov{x},y)\,{>}\,r$ is a shorthand for
$\Und_{i=1}^k\dist(x_i,y)\,{>}\,r$, and $\dist(\ov{x},y)\,{\leq}\,r$
is a shorthand for $\Oder_{i=1}^k\dist(x_i,y)\,{\leq}\,r$.
For $\ov{y}=(y_1,\ldots,y_\ell)$, we use $\dist(\ov{x};\ov{y})\,{>}\,r$
and $\dist(\ov{x};\ov{y})\,{\leq}\,r$
as shorthands for $\Und_{j=1}^\ell \dist(\ov{x},y_j)\,{>}\,r$ and
$\Oder_{j=1}^\ell \dist(\ov{x},y_j)\,{\leq}\,r$, respectively.

The \emph{$r$-localisation} $\phi^{(r)}$ of a $\WAPS$-formula $\phi(\ov{x})$
is the formula obtained from $\phi$ by replacing
every subformula of the form $\exists y\,\phi'$ with the formula
$\exists y\, \big(\phi' \und \dist(\ov{x},y)\leq r \big)$,
replacing
every subformula of the form $\big(s=\sum \weight(\ov{y}).\phi'\big)$,
for $\ov{y}=(y_1,\ldots,y_k)$,
with the formula
$\big(s=\sum \weight(\ov{y}).(\phi'\und\Und_{j=1}^k\dist(\ov{x},y_j)\leq r)\big)$,
and replacing
every $\SC$-term of the form $\Count{\pweight}{\phi'}$
with the $\SC$-term
$\Count{\pweight}{\big(\phi'\und\Und_{j=1}^k\dist(\ov{x},y_j)\leq
  r\big)}$,
where $\set{y_1,\ldots,y_k}=\free(\phi')$.
The resulting formula $\phi^{(r)}(\ov{x})$ is $r$-local.

\subsection{Feferman-Vaught Decomposition for  $\FOW$}
\label{sec:FV}\label{appendix:FV}
We pick two new unary relation symbols $X,Y$
that do not belong to $\sigma$, and we let
$\sigma'\deff\sigma\cup\set{X,Y}$.

\begin{defn}\label{def:disjsum}
Let $\A,\B$ be $(\sigma,\Weights)$-structures with $A\cap
B=\emptyset$.
The \emph{disjoint sum} $\A\dsum\B$ is the
$(\sigma',\Weights)$-structure $\C$ with universe $C=A\cup B$,
$X^\C=A$, $Y^\C=B$, $R^\C=R^\A\cup R^\B$ for all $R\in\sigma$, and such that for
all $\weight\in\Weights$ and $k\deff\ar(\weight)$ and all
$\ov{c}=(c_1,\ldots,c_k)\in C^k$, we have
$\weight^\C(\ov{c})=\weight^\A(\ov{c})$ if $\ov{c}\in A^k$,
$\weight^\C(\ov{c})=\weight^\B(\ov{c})$ if $\ov{c}\in B^k$, and
$\weight^\C(\ov{c})=\nullS$ otherwise (for $S\deff\wtype(\weight)$).
The \emph{disjoint union} $\A\dunion\B$ is the
$(\sigma,\Weights)$-structure obtained from $\C\deff \A\dsum\B$ by omitting the
relations $X^\C, Y^\C$.
\end{defn}

\begin{defn}\label{def:FVZ}
Let $\LogicL$ be a subset of $\WAPS$.

\noindent
Let $k,\ell\in\NN$ and let $\ov{x}=(x_1,\ldots,x_k)$,
$\ov{y}=(y_1,\ldots,y_\ell)$ be tuples of $k{+}\ell$ pairwise distinct variables.
Let $\phi$ be a $\WAPSR{\PP}{\sigma',\SC,\Weights}$-formula with
$\free(\phi)\subseteq \set{x_1,\ldots,x_k,y_1,\ldots,y_\ell}$.
A \emph{Feferman-Vaught decomposition}
of $\phi$ in $\LogicL$ w.r.t.\ $(\ov{x};\ov{y})$ is a finite, non-empty set
$\Delta$ of tuples of the form
$\big(\alpha,\beta\big)$ where $\alpha,\beta\in\LogicL$ and
$\free(\alpha)\subseteq\set{x_1,\ldots,x_k}$ and
$\free(\beta)\subseteq\set{y_1,\ldots,y_\ell}$, such that
the following is true
for all $(\sigma,\Weights)$-structures $\A,\B$ with $A\cap
B=\emptyset$ and all $\ov{a}\in A^k$, $\ov{b}\in B^\ell$:
$\A\dsum\B\models \phi[\ov{a},\ov{b}]$ $\iff$ there exists
$(\alpha,\beta)\in\Delta$ such that $\A\models\alpha[\ov{a}]$ and $\B\models\beta[\ov{b}]$.
\end{defn}

Our first main result provides Feferman-Vaught decompositions for
$\FOW$.

\begin{thm}[Feferman-Vaught decompositions for
  $\FOWS$]\label{thm:FVforFOW} \ \\
Let $k,\ell\in\NN$ and let $\ov{x}=(x_1,\ldots,x_k)$,
$\ov{y}=(y_1,\ldots,y_\ell)$ be tuples of $k{+}\ell$ pairwise distinct variables.
For every $\FOWSR{\PP}{\sigma',\SC,\Weights}$-formula $\phi$ with
$\free(\phi)\subseteq \set{x_1,\ldots,x_k,y_1,\ldots,y_\ell}$, there
exists a Feferman-Vaught decomposition $\Delta$ in $\LogicL$ of $\phi$
w.r.t.\ $(\ov{x};\ov{y})$, where
$\LogicL\deff\LogicL_\phi$ is the class of all $\FOWS$-formulas of quantifier rank
at most $\qr(\phi)$ which
use only those $\P\in\Ps$ and $S\in\SC$ that occur in $\phi$ and
only those $\SC$-terms that occur in $\phi$ or that are of
the form $s$ for an $s\in S\in\SC$ where $S$ is finite and occurs in $\phi$.

Furthermore, there is an algorithm that
computes $\Delta$ upon input of $\phi,\ov{x},\ov{y}$.
\end{thm}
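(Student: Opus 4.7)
The plan is to proceed by induction on the structure of $\phi$, building $\Delta_\phi$ from decompositions of the immediate subformulas while maintaining the invariants that (i) the quantifier rank of every produced formula is bounded by $\qr(\phi)$ and (ii) no predicate names, rings, weight symbols, or $\SC$-terms appear beyond those of $\phi$, except for the additional constants $s\in S$ allowed in the statement for finite $S\in\SC$. For the atomic cases $x_i=x_j$, $R(\ov{z})$ with $R\in\sigma$, and $(s=\weight(\ov{z}))$: if the free variables all lie on the same side, put the entire formula on that side with a trivially true counterpart on the other; ``mixed'' cases collapse to constant truth values because $A\cap B=\emptyset$ (equalities between $A$- and $B$-variables are false), because $R^\C=R^\A\cup R^\B$ contains no mixed tuples, and because the definition of $\weight^\C$ on $\A\dsum\B$ makes mixed weight atoms evaluate to $\nullS$. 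The atoms $X(z)$, $Y(z)$ are determined by which side $z$ lies on.

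For Boolean and existential constructs I use the standard Feferman-Vaught recipe. Disjunction becomes the union of the two decompositions. For negation of a subformula with decomposition $\{(\alpha_i,\beta_i)\}_{i\in I}$, I enumerate $f\colon I\to\{0,1\}$ and output $\{(\Und_{f(i)=0}\nicht\alpha_i,\,\Und_{f(i)=1}\nicht\beta_i):f\in\{0,1\}^I\}$, encoding ``no pair in the original set holds simultaneously''. For $\exists z\,\phi$ I split on whether the witness lies in $A$ or in $B$, apply the inductive hypothesis to $\phi$ with $z$ appended to $\ov{x}$ resp.\ $\ov{y}$, prepend $\exists z$ to the appropriate side, and take the union of the two resulting sets. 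For a predicate application $\Pred(t_1,\dots,t_m)$, the $\WAun$-restriction on rule~(6) together with the absence of rule~(10) in $\FOW$ forces all free variables of the $t_i$'s to lie on a single side (or to be empty), so $\Pred(t_1,\dots,t_m)$ itself already sits on one side and the decomposition is immediate.

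The main obstacle is the weight-aggregation rule~(5), i.e.\ $\phi'=(s=\sum\weight(\ov{z}).\phi)$ with $S=\wtype(\weight)$ \emph{finite}. Because $\weight^\C$ vanishes on tuples that are neither wholly in $A$ nor wholly in $B$, the aggregate splits cleanly as $s_A\plusS s_B=s$ with $s_A$ summing over $A^m$ and $s_B$ over $B^m$. I apply the inductive hypothesis to $\phi$ twice, treating $\ov{z}$ as extra $\ov{x}$- resp.\ $\ov{y}$-variables, obtaining $\Delta^{(A)}=\{(\alpha_i(\ov{x},\ov{z}),\beta_i(\ov{y}))\}_{i\in I_A}$ and $\Delta^{(B)}=\{(\alpha_j(\ov{x}),\beta_j(\ov{y},\ov{z}))\}_{j\in I_B}$. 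To decouple the two sides I introduce full $\B$-types $\tau_J(\ov{y})\deff\Und_{i\in J}\beta_i\und\Und_{i\in I_A\setminus J}\nicht\beta_i$ for $J\subseteq I_A$: once $(\B,\ov{b})$ realises $\tau_J$, the statement ``$\phi^\C$ holds at $(\ov{a},\ov{b},\ov{c})$'' for $\ov{c}\in A^m$ collapses to $\A\models\phi_J^A(\ov{a},\ov{c})$ with $\phi_J^A\deff\Oder_{i\in J}\alpha_i$, so the $A$-sum is exactly $\Ssum{\weight(\ov{z})}{\phi_J^A}$ evaluated on $\A$; the symmetric types $\tau'_K(\ov{x})$ and formulas $\phi_K^B(\ov{y},\ov{z})$ for $K\subseteq I_B$ handle the $B$-sum. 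Using the finiteness of $S$, I enumerate $s_A\in S$ (setting $s_B=s\minusS s_A$) and output the decomposition indexed by $(J,K,s_A)\in 2^{I_A}\times 2^{I_B}\times S$ with $\alpha_{J,K,s_A}(\ov{x})\deff\tau'_K(\ov{x})\und\big(s_A=\Ssum{\weight(\ov{z})}{\phi_J^A}\big)$ and $\beta_{J,K,s_A}(\ov{y})$ defined symmetrically. A direct check verifies $\qr(\alpha_{J,K,s_A}),\qr(\beta_{J,K,s_A})\le\qr(\phi')$, that no new logical symbols are introduced, and that the entire argument reads off as a recursive algorithm producing $\Delta$ from $\phi$, $\ov{x}$, $\ov{y}$.
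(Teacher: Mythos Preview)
Your proposal is correct and follows essentially the same route as the paper. The only organisational difference is that for rule~(5) the paper first introduces the intermediate $\sigma'$-formulas $\chi_{i_1}=(i_1=\sum\weight(\ov{z}).(\psi\und\bigwedge_j X(z_j)))$ and $\theta_{i_2}=(i_2=\sum\weight(\ov{z}).(\psi\und\bigwedge_j Y(z_j)))$ to split the sum, then separately constructs FV decompositions of $\chi_{i_1}$ and $\theta_{i_2}$ by making the $\alpha$'s mutually exclusive---which is exactly your ``full type'' enumeration over $J$ and $K$; you simply skip the $X,Y$ intermediate step and go directly to the twice-applied induction hypothesis, arriving at the same decomposition indexed by $(J,K,s_A)$.
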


The proof proceeds in a
similar way as the proof of the Feferman-Vaught decomposition for
first-order logic with modulo-counting quantifiers
in~\cite{KuskeSchweikardt_Gaifman}.
Before presenting the theorem's proof, let us formulate a
straightforward corollary of
Theorem~\ref{thm:FVforFOW}.

\begin{cor}\label{cor:FV}
Let $k,\ell\in\NN$ and let $\ov{x}=(x_1,\ldots,x_k)$,
$\ov{y}=(y_1,\ldots,y_\ell)$ be tuples of $k{+}\ell$ pairwise distinct variables.
Upon input of an $r\in\NN$ and an $r$-local $\FOWS$-formula
$\phi(\ov{x},\ov{y})$, one can compute a finite, non-empty set
$\Delta$ of pairs $\big(\alpha(\ov{x}),\beta(\ov{y})\big)$ of
$\LogicL$-formulas,
where
$\LogicL$ is the class of all $r$-localisations of formulas in the
class $\LogicL_\phi$ of Theorem~\ref{thm:FVforFOW},
such that the following two formulas are equivalent:
\begin{itemize}
\item
  $\Big(\Und_{i=1}^k\Und_{j=1}^\ell \dist(x_i,y_j)>2r{+}1 \Big)
   \ \und \ \phi(\ov{x},\ov{y})$
\item
  $\Big(\Und_{i=1}^k\Und_{j=1}^\ell \dist(x_i,y_j)>2r{+}1 \Big)
   \ \und \ \Oder_{(\alpha,\beta)\in\Delta} \big( \alpha(\ov{x})\und\beta(\ov{y})\big)$.
\end{itemize}
\end{cor}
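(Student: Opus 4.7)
The natural approach is to combine Theorem~\ref{thm:FVforFOW} with the observation that, under the distance hypothesis $\dist(x_i,y_j)>2r{+}1$, the two $r$-neighbourhoods $\NrA{\ov{a}}$ and $\NrA{\ov{b}}$ appear inside $\A$ as a genuine disjoint union, allowing us to transport a Feferman--Vaught decomposition of $\phi$ back to a separated form on $\A$ via $r$-localisation.

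First, I would apply Theorem~\ref{thm:FVforFOW} to $\phi$, viewed as a $\FOWSR{\PP}{\sigma',\SC,\Weights}$-formula that happens not to mention $X$ or $Y$, with respect to the split $(\ov{x};\ov{y})$. This yields a Feferman--Vaught decomposition $\Delta_0$ of $\phi$ in $\LogicL_\phi$. I then set
\[
\Delta \ \deff \ \bigset{(\alpha^{(r)},\beta^{(r)}) \,:\, (\alpha,\beta)\in\Delta_0},
\]
where $\alpha^{(r)}(\ov{x})$ and $\beta^{(r)}(\ov{y})$ are the $r$-localisations; computability of $\Delta$ follows from the computability assertion in Theorem~\ref{thm:FVforFOW}.

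Next, fix a $(\sigma,\Weights)$-structure $\A$ and tuples $\ov{a}\in A^k$, $\ov{b}\in A^\ell$ with $\dist^\A(a_i,b_j)>2r{+}1$ for all $i,j$, and set $\A_1\deff\NrA{\ov{a}}$, $\A_2\deff\NrA{\ov{b}}$. By the triangle inequality, any $c\in N_r^\A(\ov{a})$ and $d\in N_r^\A(\ov{b})$ are at Gaifman distance at least $2$ in $\A$, so no tuple of any relation $R^\A$ and (by the locality condition on weighted structures) no tuple carrying a nonzero weight value can contain both an element of $N_r^\A(\ov{a})$ and an element of $N_r^\A(\ov{b})$. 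Hence
\[
 \Neighb{r}{\A}{\ov{a}\ov{b}} \ = \ \A_1 \dunion \A_2
\]
as $(\sigma,\Weights)$-structures. Using the $r$-locality of $\phi$ and the fact that $\phi$ does not mention $X$ or $Y$, this gives
\(
 \A\models\phi[\ov{a},\ov{b}]
 \iff
 \A_1\dsum\A_2\models\phi[\ov{a},\ov{b}],
\)
and Theorem~\ref{thm:FVforFOW} rewrites the right-hand side as the existence of some $(\alpha,\beta)\in\Delta_0$ with $\A_1\models\alpha[\ov{a}]$ and $\A_2\models\beta[\ov{b}]$.

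The remaining step is a \emph{localisation lemma}: for every $\gamma(\ov{x})\in\LogicL_\phi$,
\(
 \A \models \gamma^{(r)}[\ov{a}] \iff \A_1 \models \gamma[\ov{a}],
\)
and symmetrically for $\beta\in\LogicL_\phi$ and $\A_2$. I would prove this by structural induction on $\gamma$, strengthened to subformulas: for any subformula $\delta$ of $\gamma$ and any assignment extending $\ov{x}\mapsto\ov{a}$ whose other values lie in $N_r^\A(\ov{a})=|\A_1|$, the $\ov{x}$-relativised version of $\delta$ evaluates in $\A$ the same way $\delta$ evaluates in $\A_1$. The atomic, Boolean, and predicate cases (recall that $\FOW$ forbids rule~\eqref{item:countterm}, so no $\SC$-term inside a predicate contains subformulas to be localised) are immediate because $\A_1$ is an induced $(\sigma,\Weights)$-substructure of $\A$; the existential case works because the inserted guard $\dist(\ov{x},y)\leq r$ restricts $y$ to exactly $|\A_1|$. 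The main subtlety lies in the aggregation case $\delta=(s=\sum\weight(\ov{y}).\phi')$, where one must verify that the sum in $\A$ over tuples $\ov{c}\in N_r^\A(\ov{a})^{|\ov{y}|}$ satisfying the localised $\phi'$, weighted by $\weight^\A(\ov{c})$, coincides with the sum in $\A_1$ over tuples $\ov{c}\in A_1^{|\ov{y}|}$ satisfying $\phi'$, weighted by $\weight^{\A_1}(\ov{c})$; this follows from the induction hypothesis on $\phi'$ together with the fact that $\weight^{\A_1}$ is the restriction of $\weight^\A$ to tuples over $A_1$. Chaining the three steps gives the claimed equivalence.
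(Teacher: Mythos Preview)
Your proposal is correct and follows essentially the same approach as the paper: apply Theorem~\ref{thm:FVforFOW} to $\phi$, take $r$-localisations of the resulting pairs, and use the distance hypothesis to identify $\Neighb{r}{\A}{\ov{a}\ov{b}}$ with the disjoint union of $\NrA{\ov{a}}$ and $\NrA{\ov{b}}$. The only cosmetic difference is that where you prove a combined ``localisation lemma'' $\A\models\gamma^{(r)}[\ov{a}]\iff\NrA{\ov{a}}\models\gamma[\ov{a}]$ by structural induction, the paper splits this into two immediate observations---that $\gamma^{(r)}$ is $r$-local (so $\A\models\gamma^{(r)}[\ov{a}]\iff\NrA{\ov{a}}\models\gamma^{(r)}[\ov{a}]$) and that inside $\NrA{\ov{a}}$ the inserted distance guards are vacuously true (so $\gamma$ and $\gamma^{(r)}$ agree there)---avoiding the explicit induction.
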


\bigskip

\noindent
The remainder of Section~\ref{sec:FV} is devoted to the proofs of
Theorem~\ref{thm:FVforFOW} and Corollary~\ref{cor:FV}.

\begin{proof}[Proof of Theorem~\ref{thm:FVforFOW}] \ \\
We proceed by induction on the construction of $\phi$, and we use
an arbitrary unsatisfiable formula $\bot$ (e.g.\
$\bot\deff \exists z\,\nicht z{=}z$) and an arbitrary tautology $\top$
(e.g., $\top\deff\nicht\bot$).

For the induction base, we consider formulas built according to the rules
\eqref{item:atomic}, \eqref{item:wsimple}, and
\eqref{item:Q}$_1$ of Definitions~\ref{def:WA} and \ref{def:WAun}.
\\
Rule \eqref{item:atomic} can be handled in exactly the same way as in
the traditional Feferman-Vaught construction for first-order logic $\FO$
(cf., e.g., \cite{FefV59,DBLP:journals/apal/Makowsky04,Gro08}).
\\
For rule \eqref{item:wsimple}, let $\phi$ be of the form
$\big(s=\weight(z_1,\ldots,z_m)\big)$.
If $\set{z_1,\ldots,z_m}\subseteq \set{x_1,\ldots,x_k}$,
we can choose $\Delta\deff\set{(\phi,\top)}$.
If $\set{z_1,\ldots,z_m}\subseteq\set{y_1,\ldots,y_\ell}$,
we can choose $\Delta\deff\set{(\top,\phi)}$.
Otherwise, we know that $\set{z_1,\ldots,z_m}$ contains variables from
$\ov{x}$ and variables from $\ov{y}$; and if $s=\nullS$, we can choose
$\Delta\deff\set{(\top,\top))}$, and otherwise, we can choose
$\Delta\deff\set{(\bot,\bot)}$. It is straightforward to verify that
$\Delta$ is a Feferman-Vaught decomposition in $\FOWS$ of $\phi$
w.r.t.\ $(\ov{x};\ov{y})$.
\\
For rule \eqref{item:Q}$_1$, let $\phi$ be of the form
$\P(t_1,\ldots,t_m)$, where $\P\in\Ps$ and $t_1,\ldots,t_m$
are $\SC$-terms.
We know that each $t_i$ is built using the rules
\eqref{item:constterm}--\eqref{item:plustimesterm}, and that
there is one variable $z$ such that
$\varsof(t_i)\subseteq\set{z}$ for all $i\in[m]$.
Thus, if $z\in\set{x_1,\ldots,x_k}$, we can choose $\Delta\deff\set{(\phi,\top)}$;
and if $z\in\set{y_1,\ldots,y_\ell}$,
we can choose $\Delta\deff\set{(\top,\phi)}$.

For the induction step, we consider formulas built according to the
rules \eqref{item:bool}, \eqref{item:exists}, and \eqref{item:finitegroup}$_1$
of Definitions~\ref{def:WA} and \ref{def:FOW}.
Rules \eqref{item:bool} and \eqref{item:exists} can be handled in exactly
the same way as for first-order logic (cf., e.g.,
\cite{FefV59,DBLP:journals/apal/Makowsky04,Gro08}).
For rule \eqref{item:finitegroup}$_1$, we proceed in a similar way as the
case of modulo-counting quantifiers was handled in
\cite{KuskeSchweikardt_Gaifman}: Let $\phi$ be of the form
$\big(s= \sum \weight(\ov{z}).\psi \big)$, for a tuple of variables
$\ov{z}=(z_1,\ldots,z_m)$ and a weight symbol $\weight\in\Weights$
whose type $S\deff\wtype(\weight)$ is \emph{finite}.
For every $i\in S$, let
\[
 \chi_i \ \deff \ \Big( i = \sum \weight(\ov{z}). \big(\psi \und
 {\displaystyle \Und_{j=1}^m} X(z_j)\big)\Big)
\qquad\text{and}\qquad
 \theta_i \ \deff \ \Big( i = \sum \weight(\ov{z}). \big(\psi \und
 {\displaystyle \Und_{j=1}^m} Y(z_j)\big)\Big)\,.
\]
Let $I\deff\bigsetc{(i_1,i_2)\in S\times S}{i_1 \plusS i_2 = s}$.
It is straightforward to see that for all $(\sigma,\Weights)$-structures $\A$ and $\B$ with
$A\cap B=\emptyset$ and all $\ov{a}\in A^k$, $\ov{b}\in B^\ell$, we
have:
\begin{equation}
 (\A\dsum\B,\ov{a},\ov{b}) \models \big( s = \sum \weight(\ov{z}).\psi
 \big) \quad \iff \quad
 (\A\dsum\B,\ov{a},\ov{b}) \models
 {\displaystyle\Oder_{(i_1,i_2)\in I}} \big(
   \chi_{i_1} \, \und \, \theta_{i_2}
 \big)\,.
\end{equation}
Since $(\chi_{i_1}\und\theta_{i_2})$ is equivalent to
$\nicht(\nicht\chi_{i_1}\oder\nicht\theta_{i_2})$ and we already know
how to handle formulas built using rule \eqref{item:bool}, we are done
once we have shown the following:
\begin{claim}\label{claim:FV}
For every $i\in S$, one can compute Feferman-Vaught decompositions
$\Delta_{\chi_i}$ and $\Delta_{\theta_i}$ in $\FOWS$ of $\chi_i$
and $\theta_i$ w.r.t.\ $(\ov{x};\ov{y})$.
\end{claim}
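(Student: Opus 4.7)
\begin{claimproof}[Proof plan]
The plan is to reduce $\chi_i$ (and symmetrically $\theta_i$) to a Feferman-Vaught decomposition of the strict subformula $\psi$ of $\phi$. Applying the outer induction hypothesis to $\psi$ with the variable split $(\ov{x},\ov{z};\ov{y})$ yields a decomposition $\Delta_\psi = \{(\alpha_p(\ov{x},\ov{z}),\beta_p(\ov{y})) : p \in [N]\}$ in $\LogicL_\phi$. First I would refine $\Delta_\psi$ into mutually exclusive and exhaustive $\A$-side types: for each $P \subseteq [N]$ set $\alpha_P \deff \Und_{p \in P}\alpha_p \und \Und_{p \notin P}\nicht\alpha_p$ and $\beta_P \deff \Oder_{p \in P}\beta_p$. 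A standard verification then shows that for every $\A,\B$ with disjoint universes and every $\ov{a}\in A^k$, $\ov{c}\in A^m$, $\ov{b}\in B^\ell$, there is a unique $P$ with $\A \models \alpha_P[\ov{a},\ov{c}]$, and $(\A\dsum\B,\ov{a},\ov{c},\ov{b}) \models \psi$ iff $\B \models \beta_P[\ov{b}]$.

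The next step exploits that $S$ is \emph{finite}: for each tuple $\vec{u} = (u_P)_{P \subseteq [N]} \in S^{2^{[N]}}$ and each $Q \subseteq 2^{[N]}$, I would define
\[
\gamma_{\vec{u}}(\ov{x}) \;\deff\; \Und_{P \subseteq [N]}\Big( u_P = \Ssum{\weight(\ov{z})}{\alpha_P(\ov{x},\ov{z})}\Big),\qquad
\delta_Q(\ov{y}) \;\deff\; \Und_{P \in Q}\beta_P \;\und\; \Und_{P \notin Q}\nicht\beta_P.
\]
Since $\weight^{\A\dsum\B}(\ov{c}) = \weight^\A(\ov{c})$ for $\ov{c}\in A^m$, and the conjunct $\Und_{j=1}^m X(z_j)$ in $\chi_i$ restricts the aggregation to $\ov{c}\in A^m$, partitioning these tuples by which $\alpha_P$ they satisfy shows that the weighted sum inside $\chi_i$ evaluates to $\sum_{P \in Q_{\B,\ov{b}}} u_P^{\A,\ov{a}}$, where $u_P^{\A,\ov{a}}$ is the $S$-value demanded by $\gamma_{\vec{u}}$ and $Q_{\B,\ov{b}} \deff \{P : \B \models \beta_P[\ov{b}]\}$ is the set fixed by $\delta_Q$. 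Hence $\chi_i$ is equivalent to $\Oder\{\gamma_{\vec{u}} \und \delta_Q : {\textstyle \sum_{P\in Q}} u_P = i\}$, yielding $\Delta_{\chi_i} \deff \{(\gamma_{\vec{u}},\delta_Q) : {\textstyle \sum_{P\in Q}} u_P = i\}$; the case of $\theta_i$ is identical after swapping the roles of $X$ and $Y$.

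The chief technical obstacle is the quantifier-rank and vocabulary bookkeeping that keeps the output inside $\LogicL_\phi$. The $\alpha_P$ and $\beta_P$ are Boolean combinations of formulas from $\Delta_\psi$, so their quantifier rank is at most $\qr(\psi) \leq \qr(\phi) - 1$. Each $\gamma_{\vec{u}}$ applies rule~\eqref{item:finitegroup}$_1$ once on top of the $\alpha_P$'s, raising the quantifier rank to at most $\qr(\phi)$; this is precisely the step where finiteness of $S = \wtype(\weight)$ is indispensable, which is exactly the restriction already imposed on $\phi$. The $\delta_Q$ are pure Boolean combinations and do not raise the quantifier rank. All constants, weight symbols, predicates, and types appearing in the output already occur in $\phi$, so the decomposition lies in $\LogicL_\phi$, and the finiteness of $S$ together with $|2^{[N]}|<\infty$ renders the enumeration, and hence the construction, effective.
\end{claimproof}
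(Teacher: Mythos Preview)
Your argument is correct, but it takes a different and more elaborate route than the paper's proof. Both start from the induction hypothesis applied to $\psi$ with $\ov{z}$ placed on the side where the aggregation actually ranges (i.e.\ with $\ov{x}$ for $\chi_i$, with $\ov{y}$ for $\theta_i$). The key divergence is in \emph{which} side of the decomposition you make mutually exclusive. The paper, when handling $\chi_i$, makes the $\ov{y}$-side formulas $\beta$ mutually exclusive; then for fixed $\ov{b}$ at most one $\beta$ holds, so the entire aggregate collapses to a single sum $\sum\weight(\ov{z}).\alpha$ on the $\ov{x}$-side, and the decomposition is simply $\{((i=\sum\weight(\ov{z}).\alpha),\beta):(\alpha,\beta)\in\Delta\}$ (plus one extra pair when $i=\nullS$). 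You instead partition the $\ov{x}\ov{z}$-side via the $\alpha_P$, which forces you to track a full vector of partial sums $\vec{u}\in S^{2^{[N]}}$ and then case-split on which $\beta_P$ hold via $\delta_Q$; this yields a decomposition of size roughly $|S|^{2^N}\cdot 2^{2^N}$ rather than $|\Delta|$.

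Both approaches stay within $\LogicL_\phi$ (your quantifier-rank bookkeeping is correct, and the constants $u_P$ are permitted because $S$ is finite), and both are effective. The paper's trick of making the \emph{opposite} side exclusive is what buys the dramatic simplification: it removes the need to enumerate value tuples at all. One minor imprecision in your write-up: for $\theta_i$ you should say you swap the variable split to $(\ov{x};\ov{y}\ov{z})$, not merely ``swap the roles of $X$ and $Y$'', since the point is that $\ov{z}$ must migrate to the side where the aggregation ranges.
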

To prove the claim, fix an $i\in S$. We show how to construct
$\Delta_{\theta_i}$ (the
construction of $\Delta_{\chi_i}$ is analogous).
By the induction hypothesis, we can construct a Feferman-Vaught
decomposition $\Delta$ in $\FOWS$ of $\psi$ w.r.t.\ $(\ov{x};\ov{y}\ov{z})$.
It is an easy exercise to see that, w.l.o.g., we can assume that \emph{the $\alpha$s in $\Delta$ are
  mutually exclusive}, i.e.\ for every two distinct $(\alpha,\beta)$ and
$(\alpha',\beta')$ in $\Delta$, the formula $(\alpha\und\alpha')$ is
unsatisfiable.
Let
$
 \Delta' \deff
  \left\{\,
   \big(\,
     \alpha\,,\, \big( i={\textstyle\sum} \weight(\ov{z}).\beta \big)
   \; \big)
   \ : \
   (\alpha,\beta)\in\Delta
 \, \right\}
$.
If $i\neq \nullS$, we let $\Delta_{\theta_i}\deff\Delta'$.
If $i=\nullS$, we let
$\Delta_{\theta_i}
 \deff
 \Delta'\cup
  \set{\, \big(\, \Und_{\alpha\in \mathbb{A}}\nicht\alpha\, , \top\,))\,}$,
where $\mathbb{A}\deff\setc{\alpha}{\text{there exists }\beta\text{ such that
  }(\alpha,\beta)\in \Delta}$.

It remains to verify that $\Delta_{\theta_i}$ is a Feferman-Vaught
decomposition of $\theta_i$. Consider arbitrary
$(\sigma,\Weights)$-structures $\A$ and $\B$ with $A\cap B=\emptyset$,
and let $\ov{a}\in A^k$, $\ov{b}\in B^\ell$.
By definition, we have
$\A\dsum\B\models\theta_i[\ov{a},\ov{b}]$ $\iff$
$i=\sum_S\setc{\weight^\B(\ov{c})}{\ov{c}\in M}$, for
 $M\deff \setc{\ov{c}\in
  B^m}{\A\dsum\B\models\psi[\ov{a},\ov{b},\ov{c}]}$.
Since $\Delta$ is a Feferman-Vaught decomposition of $\psi$ w.r.t.\
$(\ov{x};\ov{y}\ov{z})$, we have
$\A\dsum\B\models\psi[\ov{a},\ov{b},\ov{c}]$ $\iff$
there exists $(\alpha',\beta')\in \Delta$ such that
$\A\models\alpha'[\ov{a}]$ and
$\B\models\beta'[\ov{b},\ov{c}]$. Furthermore, we know that the
$\alpha$s in $\Delta$ are mutually exclusive.
Thus, there either is exactly one $\alpha\in\mathbb{A}$ such that
$\A\models\alpha[\ov{a}]$ (we call this \emph{Case~1}), or for all $\alpha\in\mathbb{A}$, we have
$\A\not\models\alpha[\ov{a}]$ (we call this \emph{Case~2}).

In \emph{Case~1}, there is exactly one $\beta$ such that
$(\alpha,\beta)\in \Delta$ (this is implied by our definition of the
notion ``the $\alpha$s are mutually exclusive''). Hence,
$M=\setc{\ov{c}\in B^m}{\B\models\beta[\ov{b},\ov{c}]}$. Thus,
$\A\dsum\B\models\theta_i[\ov{a},\ov{b}]$ $\iff$
$i=\sum_S\setc{\weight^\B(\ov{c})}{\ov{c}\in M}$ $\iff$
$\B\models \big(i=\sum \weight(\ov{z}).\beta \big)[\ov{b}]$ $\iff$
there are $(\hat{\alpha},\hat{\beta})\in\Delta_{\theta_i}$ such that
$\A\models\hat{\alpha}[\ov{a}]$ and $\B\models\hat{\beta}[\ov{b}]$.

In \emph{Case~2}, $M=\setc{\ov{c}\in
  B^m}{\A\dsum\B\models\psi[\ov{a},\ov{b},\ov{c}]} = \setc{\ov{c}\in
  B^m}{\text{there exists } (\alpha',\beta')\in\Delta \text{ such that
  }\A\models\alpha'[\ov{a}] \text{ and }\B\models\beta'[\ov{b}] } =
\emptyset$.
Hence, $\A\dsum\B\models\theta_i[\ov{a},\ov{b}]$ $\iff$ $i=0$ $\iff$
$\Delta_{\theta_i}$ contains the tuple $\big(\Und_{\alpha\in
  \mathbb{A}}\nicht\alpha\,,\,\top\big)$ $\iff$ there are $(\hat{\alpha},\hat{\beta})\in\Delta_{\theta_i}$ such that
$\A\models\hat{\alpha}[\ov{a}]$ and $\B\models\hat{\beta}[\ov{b}]$.

In summary, we obtain that $\Delta_{\theta_i}$ is a Feferman-Vaught
decomposition of $\theta_i$. This completes the proof of
Claim~\ref{claim:FV} and of Theorem~\ref{thm:FVforFOW}.
\end{proof}

\begin{proof}[Proof of \cref{cor:FV}] \ \\
  Let \(\phi\) be an \(r\)-local \(\FOWS\)-formula.
  Using \cref{thm:FVforFOW}, we can compute a
  Feferman-Vaught decomposition \(\Delta'\) in \(\LogicL_\phi\)
  of \(\phi\) w.r.t.\ \((\ov{x};\ov{y})\).
  Let
  \(\Delta \coloneqq \setc{(\alpha^{(r)}, \beta^{(r)}}{(\alpha, \beta) \in \Delta'}\).
  We show that the two formulas
  \[\psi_1(\ov{x}, \ov{y}) \coloneqq \Big(\Und_{i=1}^k\Und_{j=1}^\ell \dist(x_i,y_j)>2r{+}1 \Big)
    \ \und \phi(\ov{x}, \ov{y})\]
  and
  \[\psi_2(\ov{x}, \ov{y}) \coloneqq \Big(\Und_{i=1}^k\Und_{j=1}^\ell \dist(x_i,y_j)>2r{+}1 \Big)
    \ \und \ \Oder_{(\alpha^{(r)},\beta^{(r)})\in\Delta} \big( \alpha^{(r)}(\ov{x})\und\beta^{(r)}(\ov{y})\big)\]
  given in \cref{cor:FV} are equivalent.

  Let \(\A\) be a \((\sigma,\Weights)\)-structure,
  \(\ov{a} \in A^k\), and \(\ov{b} \in A^\ell\).
  If \(\dist(\ov{a},\ov{b}) \leq 2r{+}1\),
  then
  \(\A \not\models \psi_1[\ov{a},\ov{b}]\) and
  \(\A \not\models \psi_2[\ov{a},\ov{b}]\).
  Now let \(\dist(\ov{a},\ov{b}) > 2r{+}1\).
  Then, since \(\phi\) is \(r\)-local,
  \(\A \models \psi_1[\ov{a},\ov{b}]\)
  if and only if
  \(\Neighbr{\A}{\ov{a}} \dunion \Neighbr{\A}{\ov{b}} \models \phi[\ov{a},\ov{b}]\).
  Thus, we obtain
  \begin{align*}
         &\A \models \psi_1[\ov{a},\ov{b}] \\
    \iff &\Neighbr{\A}{\ov{a}} \dunion \Neighbr{\A}{\ov{b}} \models \phi[\ov{a},\ov{b}] \\
    \iff &\Neighbr{\A}{\ov{a}} \oplus \Neighbr{\A}{\ov{b}} \models \phi[\ov{a},\ov{b}] \\
    \iff &\exists (\alpha,\beta) \in \Delta' \colon \Neighbr{\A}{\ov{a}} \models \alpha[\ov{a}]
      \land \Neighbr{\A}{\ov{b}} \models \beta[\ov{b}] \\
    \iff &\exists (\alpha,\beta) \in \Delta' \colon \Neighbr{\A}{\ov{a}} \models \alpha^{(r)}[\ov{a}]
      \land \Neighbr{\A}{\ov{b}} \models \beta^{(r)}[\ov{b}] \\
    \iff &\exists (\alpha,\beta) \in \Delta' \colon \Neighbr{\A}{\ov{a}} \oplus \Neighbr{\A}{\ov{b}}
      \models \alpha^{(r)}[\ov{a}] \land \beta^{(r)}[\ov{b}] \\
    \iff &\exists (\alpha,\beta) \in \Delta' \colon \Neighbr{\A}{\ov{a}} \dunion \Neighbr{\A}{\ov{b}}
      \models \alpha^{(r)}[\ov{a}] \land \beta^{(r)}[\ov{b}] \\
    \iff &\exists (\alpha,\beta) \in \Delta' \colon \A \models \alpha^{(r)}[\ov{a}] \land \beta^{(r)}[\ov{b}] \\
    \iff &\A \models \psi_2[\ov{a},\ov{b}].
  \end{align*}
  We can switch between the disjoint sum and the disjoint union of structures
  because the considered formulas only use relations from the disjoint union.
  All in all, this shows that \(\psi_1 \equiv \psi_2\).
\end{proof}

\subsection{Gaifman Normal Form for $\FOW$}
\label{sec:Gaifman}
\label{appendix:Gaifman}
We now turn to a notion of Gaifman normal form for $\FOW$.

\begin{defn}
\label{def:GNFforFOW}
A \emph{basic-local sentence} in \allowbreak $\FOWS$ is a sentence of
the form \allowbreak
$
  \exists x_1\cdots \exists x_\ell\,\big(
   \Und_{1\leq i<j\leq \ell}\dist(x_i,x_j)>2r
   \, \und \,
   \Und_{i=1}^\ell \lambda(x_i)
  \big)
$, \
where $\ell\in\NNpos$, $r\in\NN$, $\lambda(x)$ is an $r$-local
$\FOWS$-formula, and $x_1,\ldots,x_\ell$ are $\ell$ pairwise distinct
variables.

A \emph{local aggregation sentence} in $\FOWS$ is a sentence
of the form
$
\big(\, s=
 \sum \weight(\ov{y}).\lambda(\ov{y})
\,\big)
$, \
where $\weight\in\Weights$, $s\in S\deff\wtype(\weight)$, $\ell=\ar(\weight)$,
$\ov{y}=(y_1,\ldots,y_\ell)$ is a tuple of $\ell$ pairwise distinct
variables, and $\lambda(\ov{y})$ is an $r$-local $\FOWS$-formula.

A $\FOWS$-formula \emph{in Gaifman normal form} is a Boolean
combination of local $\FOWS$-formulas, basic-local sentences in
$\FOWS$, and local aggregation sentences in $\FOWS$.
\end{defn}

Our next main theorem provides a Gaifman normal form for $\FOW$.

\begin{thm}[Gaifman normal form for $\FOWS$]\label{thm:GaifmanForFOW}
Every $\FOWS$-formula $\phi$ is equivalent to an $\FOWS$-formula
$\gamma$ in Gaifman normal form with $\free(\gamma)=\free(\phi)$.
Furthermore, there is an algorithm that computes $\gamma$ upon input
of $\phi$.
\end{thm}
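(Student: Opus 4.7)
The plan is to proceed by structural induction on $\phi$, adapting the Gaifman normal form proof for first-order logic with modulo-counting quantifiers from~\cite{KuskeSchweikardt_Gaifman} to our setting. For the base cases, atomic formulas built by rules~\eqref{item:atomic} and~\eqref{item:wsimple} are trivially $0$-local, and hence already in Gaifman normal form. Formulas built by rule~\eqref{item:Q}$_1$ --- the only way predicate names $\Pred$ can appear in $\FOWS$ --- are also $0$-local: the restriction $|\free(t_1)\cup\cdots\cup\free(t_m)|\le 1$, combined with the pairwise-distinct variable requirement of rule~\eqref{item:wsimpleterm}, forces the $\SC$-terms to involve only unary weight symbols applied to a single element. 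Boolean combinations (rule~\eqref{item:bool}) preserve Gaifman normal form by definition.

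For the existential step $\phi = \exists y\,\psi(\ov{x}, y)$ (rule~\eqref{item:exists}), I would first put $\psi$ in Gaifman normal form by induction, rewrite in disjunctive normal form, and distribute $\exists y$ over disjunctions. Basic-local sentences, local aggregation sentences, and local conjuncts not mentioning $y$ factor out of $\exists y$, reducing the task to $\exists y\,\lambda(\ov{x}, y)$ for a single $r$-local formula $\lambda$. Splitting on distance,
\[
  \exists y\,\lambda \;\equiv\; \exists y\,\big(\lambda \und \dist(\ov{x},y){\le}2r{+}1\big) \;\oder\; \exists y\,\big(\lambda \und \dist(\ov{x},y){>}2r{+}1\big),
\]
the first disjunct is local around $\ov{x}$. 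To the second I would apply \cref{cor:FV} to obtain a decomposition $\Oder_i\big(\alpha_i(\ov{x}) \und \beta_i(y)\big)$, and after pulling the $\alpha_i$ out of $\exists y$ the remaining task is to express $\exists y\,\big(\beta_i(y) \und \dist(\ov{x},y){>}2r{+}1\big)$ in Gaifman normal form. This is handled by the classical pigeonhole argument: the existence of $|\ov{x}|{+}1$ sufficiently pairwise-far witnesses of $\beta_i$ is a basic-local sentence that already guarantees a witness far from every component of $\ov{x}$; otherwise, all pairwise-far witnesses lie within a bounded distance of $\ov{x}$ and their positions can be described by a local formula around $\ov{x}$.

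For the aggregation step $\phi = \big(s = \sum \weight(\ov{y}).\psi(\ov{x},\ov{y})\big)$ (rule~\eqref{item:finitegroup}$_1$), with $S\deff\wtype(\weight)$ \emph{finite}, I would combine the induction hypothesis with the mutual-exclusivity rewriting used in \cref{thm:FVforFOW} to reduce to the case where $\psi$ is itself a single $r$-local formula. I then split the sum at the threshold $\dist(\ov{x},\ov{y})\le 2r{+}1$, enumerate the finitely many pairs $(s_1, s_2)\in S^2$ with $s_1\plusS s_2 = s$, and rewrite $\phi$ as the disjunction over such pairs of the conjunction asserting the close-part sum equals $s_1$ and the far-part sum equals $s_2$. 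The close-part equation is local around $\ov{x}$ because the distance constraint confines $\ov{y}$ to a bounded region. For the far part, \cref{cor:FV} yields a decomposition $\Oder_j\big(\alpha_j(\ov{x})\und\beta_j(\ov{y})\big)$ with mutually exclusive $\alpha_j$, so conditional on $\alpha_j(\ov{x})$ the far sum equals $T_j \minusS L_j(\ov{x})$, where $T_j\deff\sum\weight(\ov{y}).\beta_j(\ov{y})$ is parameter-free and $L_j(\ov{x})\deff\sum\weight(\ov{y}).\big(\beta_j(\ov{y})\und\dist(\ov{x},\ov{y}){\le}2r{+}1\big)$ is local around $\ov{x}$. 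Enumerating values $t\in S$ of $T_j$, the condition $(t = T_j)$ becomes a local aggregation sentence and $(t\minusS s_2 = L_j(\ov{x}))$ becomes a local formula around $\ov{x}$.

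The main obstacle I anticipate is the careful bookkeeping of locality radii through the many decomposition steps and arranging each invocation of \cref{cor:FV} so that the $\alpha_j$-components are mutually exclusive; the latter is possible by the same trick already exploited in the proof of \cref{thm:FVforFOW}, but must be maintained throughout the induction. The finiteness of $S$ mandated by rule~\eqref{item:finitegroup}$_1$ is essential: it allows the enumeration over partial-sum pairs $(s_1, s_2)$ and global-sum values $t\in S$ that drives the aggregation case. This is precisely why the Gaifman normal form is stated for $\FOW$ and does not extend directly to $\WAun$, where the term construction rule~\eqref{item:countterm} would require handling sums over infinite rings without a finite case split.
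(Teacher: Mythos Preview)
Your proposal is correct and follows essentially the same route as the paper's proof: the same structural induction, the same reduction of $\exists y$ via \cref{cor:FV} plus the classical pigeonhole argument, and for rule~\eqref{item:finitegroup}$_1$ the same close/far split over pairs in $S\times S$, the same use of \cref{cor:FV} with mutually exclusive $\alpha$'s on the far part, and the same final decomposition of the far contribution into a local aggregation sentence (your $T_j$) minus a local correction (your $L_j$). The only detail you leave implicit is the edge case where no $\alpha_j(\ov{x})$ holds (so the far sum is $\nullS$), which the paper handles via a separate clause when $s_2=\nullS$; this is exactly the bookkeeping you flag as the main obstacle.
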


The proof proceeds similarly
as Gaifman's original proof for first-order logic $\FO$
(\cite{Gai82}, see also~\cite[Sect.~4.1]{Gro08}),
but since subformulas are from $\FOWS$, we use Corollary~\ref{cor:FV} instead of
Feferman-Vaught decompositions for $\FO$
(cf.~\cite[Lemma~2.3]{Gro08}).
Furthermore, for formulas built according to
rule~\eqref{item:finitegroup}$_1$,
we proceed in a similar way as for the
modulo-counting quantifiers in the Gaifman normal
construction of~\cite{KuskeSchweikardt_Gaifman}.

\bigskip

\noindent
The remainder of Section~\ref{sec:Gaifman} is devoted to the proof of
Theorem~\ref{thm:GaifmanForFOW}.

\begin{proof}[Proof of Theorem~\ref{thm:GaifmanForFOW}] \ \\
The proof proceeds by induction on the construction of $\phi$.
The cases where formulas are built according to the rules
\eqref{item:atomic}, \eqref{item:wsimple}, \eqref{item:bool} of
Definition~\ref{def:WA} are trivial.
A formula $\phi$ that is built according to rule \eqref{item:Q}$_1$
is of the form
$\P(t_1,\ldots,t_m)$, where $\P\in\Ps$ and $t_1,\ldots,t_m$
are $\SC$-terms built using the rules
\eqref{item:constterm}--\eqref{item:plustimesterm} ---
thus, $\phi$ is $0$-local.

If $\phi$ is of the form $\exists y\,\phi'$, we can argue in the same
way as in Gaifman's original proof for first-order logic
(\cite{Gai82}, see also \cite[Sect.~4.1]{Gro08}),
but since $\phi'$ is from $\FOWS$, we use Corollary~\ref{cor:FV} instead of
Feferman-Vaught decompositions for first-order logic (cf.\
\cite[Lemma~2.3]{Gro08}).

For formulas built according to rule \eqref{item:finitegroup}$_1$ of
Definition~\ref{def:FOW}, we proceed in a similar way as for the
modulo-counting quantifiers in the Gaifman normal
construction of \cite{KuskeSchweikardt_Gaifman}. Let $\phi$ be of the
form $\big(s=\sum\weight(\ov{y}).\phi'(\ov{x},\ov{y})\big)$, for a
tuple of variables $\ov{y}=(y_1,\ldots,y_\ell)$ and a weight symbol
$\weight\in\Weights$ whose type $S\deff\wtype(\weight)$ is finite, and
let $\ov{x}=(x_1,\ldots,x_k)$ be the free variables of $\phi$ (note
that $k$ might be 0).
By the induction hypothesis, we can transform $\phi'$ into an
equivalent formula in Gaifman normal form, and we can assume w.l.o.g.\
that this formula is of the form
\ $
  \Oder_{j=1}^n \big( \chi_j \und \lambda_j(\ov{x},\ov{y}) \big)\,,
$ \
where each $\chi_j$ is an $\FOWS$-sentence in Gaifman normal form and
each $\lambda_j(\ov{x},\ov{y})$ is $r$-local, for some $r\in\NN$.
For every $J\subseteq [n]$, let
\[
  \chi_J\ \deff\ \Und_{j\in J}\chi_j\und\Und_{j\in [n]\setminus
    J}\nicht\chi_j
  \qquad\text{and}\qquad
  \lambda_J(\ov{x},\ov{y}) \ \deff \ \Oder_{j\in J}\lambda_j(\ov{x},\ov{y})\,.
\]
Clearly,
$
  \Oder_{j=1}^n \big( \chi_j \und \lambda_j(\ov{x},\ov{y}) \big)
$
is equivalent to
$
  \Oder_{\emptyset\neq J\subseteq [n]} \big( \chi_J \und \lambda_J(\ov{x},\ov{y}) \big)\,
$,
the $(\chi_J)_{J\subseteq [n]}$ are mutually exclusive sentences in
Gaifman normal form, and $\chi_J(\ov{x},\ov{y})$ is $r$-local.
Let
\[
 \tilde{\phi} \ \deff \
 \Oder_{\emptyset\neq J\subseteq [n]} \Big(
   \chi_J \ \und \
   \big(
     s = \sum \weight(\ov{y}).\lambda_J(\ov{x},\ov{y})
   \big)
 \Big).
\]
The following is straightforward to prove.
\begin{claim}\label{claim:Gaifman1}
 If $s\neq\nullS$, then $\phi$ is equivalent to $\tilde{\phi}$.
 If $s=\nullS$, then $\phi$ is equivalent to $(\tilde{\phi}\oder\chi_{\emptyset})$.
\end{claim}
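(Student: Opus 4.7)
The plan is to do a case analysis on which sentence among the $\chi_J$ is satisfied by a given $(\sigma,\Weights)$-structure $\A$. Since the $\chi_J$ for $J\subseteq [n]$ are mutually exclusive and together exhaustive, exactly one $J\subseteq [n]$ has $\A\models\chi_J$; call this set $J_\A$. The main observation will be that for this $J_\A$, the set of tuples $\ov{b}\in A^\ell$ satisfying the ``old'' formula $\phi'(\ov{a},\ov{y})$ and the set of tuples satisfying the ``new'' formula $\lambda_{J_\A}(\ov{a},\ov{y})$ coincide, so that the two sums computed under $\weight$ agree.

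I would first treat the case $J_\A\neq\emptyset$. For any tuple $\ov{b}\in A^\ell$, $\A\models\phi'[\ov{a},\ov{b}]$ holds iff there exists $j\in[n]$ with $\A\models\chi_j$ and $\A\models\lambda_j[\ov{a},\ov{b}]$, which by the choice of $J_\A$ is equivalent to the existence of some $j\in J_\A$ with $\A\models\lambda_j[\ov{a},\ov{b}]$, and hence to $\A\models\lambda_{J_\A}[\ov{a},\ov{b}]$. Therefore the sums
\[
  {\textstyle\sum}_S\bigsetc{\weight^\A(\ov{b})}{\A\models\phi'[\ov{a},\ov{b}]}
  \;=\;
  {\textstyle\sum}_S\bigsetc{\weight^\A(\ov{b})}{\A\models\lambda_{J_\A}[\ov{a},\ov{b}]}
\]
agree, so $\A\models\phi[\ov{a}]$ iff $\A\models\big(s=\sum\weight(\ov{y}).\lambda_{J_\A}(\ov{x},\ov{y})\big)[\ov{a}]$ iff $\A\models\tilde{\phi}[\ov{a}]$ (using that $\chi_{J_\A}$ holds in $\A$ and no other $\chi_J$ with $J\neq\emptyset$ does, so only the $J_\A$-disjunct of $\tilde{\phi}$ can be satisfied).

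Next I would handle the case $J_\A=\emptyset$. Here $\A\not\models\chi_j$ for every $j\in[n]$, so no tuple $\ov{b}$ satisfies $\phi'[\ov{a},\ov{b}]$; by the convention $\sum_S\emptyset=\nullS$, the term $\sum\weight(\ov{y}).\phi'$ evaluates to $\nullS$ in $\A$, and $\A\models\phi[\ov{a}]$ iff $s=\nullS$. On the other hand, $\A\not\models\tilde{\phi}[\ov{a}]$ because for every nonempty $J$ we have $\A\not\models\chi_J$. Combining both cases yields the two equivalences: when $s\neq\nullS$, the $J_\A=\emptyset$ branch contributes nothing on either side and the claim reduces to the first case; when $s=\nullS$, the $J_\A=\emptyset$ branch makes $\phi$ true but is exactly captured by adding the disjunct $\chi_\emptyset$ to $\tilde{\phi}$.

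The proof is essentially a bookkeeping argument and I do not anticipate a genuine obstacle; the only subtle point is to invoke the mutual exclusivity of the $\chi_J$ to ensure that exactly one disjunct of $\tilde{\phi}$ can be ``active'' in $\A$, so that one really does reduce to evaluating a single sum-term rather than aggregating over several overlapping cases.
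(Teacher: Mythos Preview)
Your argument is correct and is exactly the intended bookkeeping: the paper omits the proof as ``straightforward,'' and the case distinction on the unique $J_\A$ with $\A\models\chi_{J_\A}$---using mutual exclusivity and exhaustiveness of the $\chi_J$---is the natural way to verify it. No gaps.
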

To complete the proof of Theorem~\ref{thm:GaifmanForFOW}, it suffices
to consider an arbitrary non-empty $J\subseteq[n]$ and the $r$-local
formula $\lambda(\ov{x},\ov{y})\deff\lambda_J(\ov{x},\ov{y})$ and show
how to transform the formula
$\psi(\ov{x})\deff \big(s=\sum\weight(\ov{y}).\lambda(\ov{x},\ov{y})\big)$ into an equivalent
formula in Gaifman normal form. If $k=0$, we are done since $\psi$ is
a local aggregation sentence in $\FOWS$. If $k>0$, we proceed as follows.
Let $r'\deff 2r{+}1$ and $I\deff\setc{(i_1,i_2)\in S\times
  S}{i_1\plusS i_2 = s}$.
Then, $\psi(\ov{x})$ is equivalent to
\ $\Oder_{(i_1,i_2)\in I} \big( \psi'_{i_1} \und \psi''_{i_2}\big)$,
where
\begin{align*}
\psi'_{i_1}(\ov{x}) \ &\deff \
\big(\;
  i_1 = \sum \weight(\ov{y}).\Big(
    \lambda(\ov{x},\ov{y}) \;\und \;
    \nicht\, \big( \Und_{i=1}^k \Und_{j=1}^\ell \dist(x_i,y_j)>r' \big)
  \Big)
\;\big),\\
\psi''_{i_2}(\ov{x}) \ &\deff \
\big(\;
  i_2 = \sum \weight(\ov{y}).\Big(
    \lambda(\ov{x},\ov{y}) \;\und \;
    \ \big( \Und_{i=1}^k \Und_{j=1}^\ell \dist(x_i,y_j)>r' \big)
  \Big)
\;\big).
\end{align*}
Note that the formula $\psi'_{i_1}(\ov{x})$ is local (namely,
$(r'{+}1{+}r)$-local; this is because tuples $\ov{a}$ in a
$(\sigma,\Weights)$-structure $\A$ with $\weight^\A(\ov{a})\neq\nullS$
must form a clique in the Gaifman graph of $\A$).

It remains to transform $\psi''_{i_2}$ into an equivalent formula in
Gaifman normal form. To achieve this, we use Corollary~\ref{cor:FV} to
obtain a finite, non-empty set $\Delta$ of pairs
$\big(\alpha(\ov{x}),\beta(\ov{y})\big)$ of $r$-local $\FOWS$-formulas such that
$\Big(
    \lambda(\ov{x},\ov{y}) \,\und \,
    \big( \Und_{i=1}^k \Und_{j=1}^\ell \dist(x_i,y_j)>r' \big)
  \Big)$
is equivalent to
$\Big(
    \Oder_{(\alpha,\beta)\in\Delta}(\alpha(\ov{x}) \und \beta(\ov{y})) \,\und \,
    \big( \Und_{i=1}^k \Und_{j=1}^\ell \dist(x_i,y_j)>r' \big)
  \Big)$.

W.l.o.g., we can assume that \emph{the $\alpha$s in $\Delta$ are
  mutually exclusive}, i.e.\ for any two distinct $(\alpha,\beta)$ and
$(\alpha',\beta')$ in $\Delta$, the formula $(\alpha\und\alpha')$ is
unsatisfiable.
Thus, $\psi''_{i_2}(\ov{x})$ is equivalent to the formula
$\big(
 i_2 = \sum \weight(\ov{y}).\big(
 \Oder_{(\alpha,\beta)\in\Delta}(\alpha(\ov{x}) \und \beta(\ov{y})) \und
    \big( \Und_{i=1}^k \Und_{j=1}^\ell \dist(x_i,y_j)>r' \big)
\big)\big)$.
Let
\[
\tilde{\psi}_{i_2}(\ov{x}) \ \deff\
\Oder_{(\alpha,\beta)\in\Delta}\Big(\;
  \alpha(\ov{x}) \,\und \,
  \Big(
     i_2 = \sum \weight(\ov{y}). \big( \,
       \beta(\ov{y}) \,\und \, \big( \Und_{i=1}^k \Und_{j=1}^\ell
       \dist(x_i,y_j)>r' \big) \,\big)
  \Big)
\;\Big)\,.
\]
Let $\mathbb{A}\deff\setc{\alpha}{\text{there exists }\beta\text{ such that
  }(\alpha,\beta)\in \Delta}$.
The following is straightforward to prove:
\begin{claim}\label{claim:Gaifman2}
If $i_2\neq\nullS$, then $\psi''_{i_2}(\ov{x})$ is equivalent to
$\tilde{\psi}_{i_2}(\ov{x})$.
If $i_2=\nullS$, then $\psi''_{i_2}(\ov{x})$ is equivalent to
\,$\big(\, \tilde{\psi}_{i_2}(\ov{x}) \,\oder \,
\Und_{\alpha\in\mathbb{A}}\nicht\alpha(\ov{x})\,\big)$.
\end{claim}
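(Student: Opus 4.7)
The plan is to fix a $(\sigma,\Weights)$-structure $\A$ and a tuple $\ov{a}\in A^k$, unfold the semantics of both aggregation formulas on each side, and exploit the mutual exclusivity of the $\alpha$s in $\Delta$. First, using the fact that $\Delta$ was obtained from \cref{cor:FV} applied to $\lambda(\ov{x},\ov{y})\und(\Und_{i=1}^k\Und_{j=1}^\ell\dist(x_i,y_j)>r')$, I would observe that the set of tuples $\ov{c}\in A^\ell$ contributing to the aggregation inside $\psi''_{i_2}[\ov{a}]$ is exactly
\[
 M(\ov{a}) \ \deff\ \bigsetc{\ov{c}\in A^\ell}{\dist^\A(\ov{a},\ov{c})>r' \text{ and } \exists (\alpha,\beta)\in\Delta \text{ with } \A\models\alpha[\ov{a}] \text{ and } \A\models\beta[\ov{c}]}.
\]

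Next, I would split on $\ov{a}$ using mutual exclusivity: either (Case~1) there is a unique $\alpha_0\in\mathbb{A}$ with $\A\models\alpha_0[\ov{a}]$, paired with a unique $\beta_0$ such that $(\alpha_0,\beta_0)\in\Delta$, or (Case~2) $\A\not\models\alpha[\ov{a}]$ for every $\alpha\in\mathbb{A}$. In Case~1, $M(\ov{a})$ collapses to $\bigsetc{\ov{c}\in A^\ell}{\dist^\A(\ov{a},\ov{c})>r' \text{ and }\A\models\beta_0[\ov{c}]}$, so both $\psi''_{i_2}[\ov{a}]$ and $\tilde{\psi}_{i_2}[\ov{a}]$ reduce to the same assertion that $i_2$ equals the sum of $\weight^\A(\ov{c})$ over this set, since in $\tilde{\psi}_{i_2}[\ov{a}]$ only the disjunct indexed by $(\alpha_0,\beta_0)$ can possibly contribute. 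In Case~2, $M(\ov{a})=\emptyset$, so by the empty-sum convention the aggregation evaluates to $\nullS$; hence $\psi''_{i_2}[\ov{a}]$ holds iff $i_2=\nullS$, while no disjunct of $\tilde{\psi}_{i_2}[\ov{a}]$ is satisfiable and the additional clause $\Und_{\alpha\in\mathbb{A}}\nicht\alpha(\ov{x})$ evaluates to true at $\ov{a}$.

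Finally, I would assemble the two cases: if $i_2\neq\nullS$, then Case~2 falsifies both $\psi''_{i_2}[\ov{a}]$ and $\tilde{\psi}_{i_2}[\ov{a}]$, so the equivalence $\psi''_{i_2}\equiv\tilde{\psi}_{i_2}$ holds; if $i_2=\nullS$, then Case~2 forces $\psi''_{i_2}[\ov{a}]$ true but $\tilde{\psi}_{i_2}[\ov{a}]$ false, a discrepancy precisely repaired by disjoining with $\Und_{\alpha\in\mathbb{A}}\nicht\alpha(\ov{x})$, which is the characteristic formula of Case~2. The only subtle bookkeeping is this empty-sum case for $i_2=\nullS$; no genuine calculation is needed, as the heavy lifting (the Feferman-Vaught decomposition and the mutual-exclusivity normalisation of $\Delta$) has already been provided by \cref{cor:FV} and the construction preceding the claim.
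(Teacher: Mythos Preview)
Your proposal is correct and is exactly the natural argument; the paper itself omits the proof entirely, merely stating that the claim ``is straightforward to prove''. Your case split on whether some $\alpha\in\mathbb{A}$ is satisfied at $\ov{a}$, together with the mutual-exclusivity convention (which in the paper's usage also forces uniqueness of the matching $\beta$), is precisely how one would unpack the assertion, and your handling of the empty-sum case for $i_2=\nullS$ is the only point requiring care.
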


\noindent
To complete the proof of Theorem~\ref{thm:GaifmanForFOW}, it suffices
to consider an arbitrary $r$-local formula $\beta(\ov{y})$ and
transform the formula
\ $
  \mu(\ov{x}) \deff
  \Big(
    i_2 = \sum \weight(\ov{y}). \big( \,
       \beta(\ov{y}) \,\und \, \big( \Und_{i=1}^k \Und_{j=1}^\ell
       \dist(x_i,y_j)>r' \big) \,\big)
  \Big)
$ \
into an equivalent $\FOWS$-formula in Gaifman normal form.
This is not difficult: let $J\deff\setc{(j_1,j_2)\in S\times
  S}{j_1\minusS j_2 = i_2}$. Then, $\mu(\ov{x})$ is equivalent to
$
  \Oder_{(j_1,j_2)\in J} \big( \mu'_{j_1} \und \mu''_{j_2}(\ov{x}) \big)
$, where
\begin{eqnarray*}
  \mu'_{j_1} & \deff &
  \big(\, j_1 = \sum \weight(\ov{y}).\beta(\ov{y}) \,\big)
\\
  \mu''_{j_2}(\ov{x}) & \deff&
    \big(\,
      j_2 = \sum \weight(\ov{y}).\big(\,
        \beta(\ov{y})\und
        \nicht\, \big( \Und_{i=1}^k \Und_{j=1}^\ell \dist(x_i,y_j)>r' \big)
      \,\big)
    \,\big)\,.
\end{eqnarray*}
Now, $\mu'_{j_1}$ is a local aggregation sentence in $\FOWS$, and
$\mu''_{j_2}$ is local (namely, $(r'{+}1{+}r)$-local; this is because tuples $\ov{a}$ in a
$(\sigma,\Weights)$-structure $\A$ with $\weight^\A(\ov{a})\neq\nullS$
must form a clique in the Gaifman graph of $\A$).
This completes the proof of Theorem~\ref{thm:GaifmanForFOW}.
\end{proof}

\subsection{Localisation Theorem for $\WAun$}\label{sec:Decomposition}
Our next main theorem provides a locality result for the logic
$\WAun$, which is a logic substantially more expressive than $\FOW$.

\begin{thm}[Localisation Theorem for $\WAun$]\label{thm:AlgorithmicDecomp}
For every $\WAunPS$-formula $\phi(x_1,\ldots,x_k)$ (with $k\geq 0$), there is an
extension $\sigma_\phi$ of $\sigma$ with relation symbols of arity
$\leq 1$, and a $\FOWSR{\PP}{\sigma_\phi,\SC,\Weights}$-formula
$\phi'(x_1,\ldots,x_k)$ that is a Boolean combination of local
formulas and statements of the form $R()$ where $R\in\sigma_\phi$ has
arity $0$, for which the following is true:
there is an algorithm\footnote{with $\PP$- and $\SC$-oracles, so that
  operations $\plusS,\malS$ for $S\in\SC$ and checking if a tuple
  belongs to $\sem{\P}$ for $\P\in\PP$ can be done in constant time} that, upon input of a
$(\sigma,\Weights)$-structure $\A$, computes in
time $|A|{\cdot} d^{\bigOh(1)}$, where $d$ is the degree of $\A$, a
$\sigma_\phi$-expansion $\A^{\phi}$ of $\A$ such that
for all $\ov{a}\in A^k$ it holds that
$\A^\phi\models\phi'[\ov{a}]$ $\iff$ $\A\models\phi[\ov{a}]$.
\end{thm}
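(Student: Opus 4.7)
The plan is to proceed by induction on the construction of $\phi$, producing at each step an extension $\sigma_\xi$ of $\sigma$ by unary and nullary symbols, an equivalent formula $\xi'$ that is a Boolean combination of local formulas and $0$-ary atoms, and an algorithm that computes the expansion $\A^\xi$ within the claimed time bound. Atomic formulas (rules~(1) and~(2)) are already $0$-local, and Boolean combinations (rule~(3)) preserve the normal form. For rule~(4) (existential quantification) and rule~(5)$_1$ (finite-group aggregation), the inductive hypothesis reduces the body to a formula of $\FOW$ over the extended signature (the new $0$-ary atoms behave as propositional constants on $\A$), so I would invoke the Gaifman normal form of \cref{thm:GaifmanForFOW} to obtain a Boolean combination of local formulas, basic-local sentences, and local aggregation sentences, and then replace each of the latter two kinds of sentences by a fresh $0$-ary relation symbol whose truth value is precomputed once.

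The substantive case is a predicate application $\Pred(t_1,\ldots,t_m)$ via rule~(6)$_1$, where the syntactic restriction $|\bigcup_i\free(t_i)|\le 1$ is crucial: either the formula is a sentence or it has exactly one free variable $x$. I would introduce a single fresh relation symbol $R$, of arity $0$ or $1$ accordingly, and replace the formula by $R()$ or $R(x)$. The task then reduces to computing $R^{\A^\phi}$ in time $|A|\cdot d^{\bigOh(1)}$, which in turn reduces to evaluating each compound $\SC$-term $t_i^\A(a)$ for every $a\in A$. By induction, every aggregation subterm $\sum\pweight.\psi$ occurring inside some $t_i$ already has its body $\psi$ localised into a Boolean combination of local formulas and $0$-ary atoms.

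The main obstacle is evaluating such an aggregation for all $a$ within the prescribed budget, since $\sum\pweight.\psi$ is nominally a sum over all tuples of the universe. Two ingredients come together here. First, the locality condition on weighted structures forces any non-zero summand $\sem{\pweight}(\bar{c})$ to arise from a tuple $\bar{c}$ whose entries lie within a bounded-radius cluster of relation tuples; hence the total number of such ``weight-supporting'' tuples is $\bigOh(|A|\cdot d^{\bigOh(1)})$ and they can be enumerated within the stated time. Second, the inductively localised body $\psi'(x,\bar{y})$ is $r$-local up to $0$-ary atoms, so applying the Feferman-Vaught decomposition of \cref{cor:FV} to $\psi'$, tuples $\bar{c}$ lying at distance more than $2r{+}1$ from $a$ contribute according to an $\alpha(a)\und\beta(\bar{c})$ pattern. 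For each occurring $\beta$ I would precompute the global constant $T_\beta=\sum\{\sem{\pweight}(\bar{c}) : \beta(\bar{c})\}$ in a single pass over the weight-supporting tuples, and separately record the locally computable contribution of weight-supporting tuples inside a bounded neighbourhood of $a$; combining the two yields $t^\A(a)$ in amortised time $d^{\bigOh(1)}$ per $a$, after which $R$ can be populated by testing the predicate. The delicate part --- and the step I expect to require the most care --- is arranging this decomposition uniformly across nested aggregations and verifying that the resulting expansion $\A^\phi$ is constructed within the claimed time bound.
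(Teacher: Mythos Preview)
Your overall architecture---introduce fresh $0$-ary or unary relations for each $\Pred(t_1,\ldots,t_m)$ subformula, push the rest into $\FOW$, and apply Gaifman---matches the paper's approach, which organises the same idea as an induction on aggregation depth via a ``cl-decomposition'' (\cref{thm:normalformWAun}). The substantive difference, and the gap in your sketch, lies in how a single aggregation $\Count{\pweight}{\psi}$ is evaluated.

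Your argument rests on the claim that the locality condition on weights forces every non-zero summand $\sem{\pweight}(\bar c)$ to come from a tuple $\bar c$ lying in a bounded-radius cluster, so that there are only $|A|\cdot d^{\bigOh(1)}$ of them. This is false when the $\Weights$-product factors, e.g.\ $\pweight=\weight(y_1)\mal\weight(y_2)$ with $\ar(\weight)=1$: the locality condition constrains each factor separately but imposes no relation between $c_1$ and $c_2$, so there can be $\Theta(|A|^2)$ non-zero summands. More generally, even with a single free variable $x$, the bound variables $y_1,\ldots,y_k$ can scatter into several far-apart clusters, only some of which are near $x$; a single Feferman-Vaught split between $x$ and $\bar y$ does not capture this. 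The paper handles exactly this point: \cref{lem:normalform:terms} partitions the sum by the connectivity pattern $G\in\Graphclass_k$ of $(y_1,\ldots,y_k)$ at distance $2r{+}1$, and proves by induction on the number of connected components of $G$ (using the FV decomposition of \cref{cor:FV} and an inclusion-exclusion to remove the overcounted ``re-merged'' tuples) that the term is equivalent to a \emph{cl-term}---a combination via $\plus,\minus,\mal$ of aggregations whose bodies force the bound variables into a single connected cluster. Such cl-terms really do have only $|A|\cdot d^{\bigOh(1)}$ contributing tuples (\cref{lem:clterm-evaluation}), and that is where the running-time bound comes from. Your near/far split is the $c=2$ base of this induction; what is missing is the recursive product/difference decomposition for general connectivity patterns.
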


\noindent
The remainder of \cref{sec:Decomposition} is devoted to the proof of
Theorem~\ref{thm:AlgorithmicDecomp}.
Our approach is to
decompose $\WAun$-expressions
into simpler
expressions that can be
evaluated in a structure $\A$ by exploring for each element $a$ in
the universe of $\A$ only a local neighbourhood around $a$. This is
achieved by a
decomposition theorem (Theorem~\ref{thm:normalformWAun}), which
is a generalisation of the decomposition for $\FOunCP$ provided in
\cite[Theorem~6.6]{GroheSchweikardt_FOunC}.

\subsubsection{Connected local terms}

The following well-known lemma summarises easy facts concerning neighbourhoods.

\begin{lem}\label{lem:basic_facts}
  Let $\A$ be a  $(\sigma,\Weights)$-structure,
  $r\geq 0$, $k\geq 1$, and $\ov{a}=(a_1,\ldots,a_k)\in A^k$.
\\
    $\NrA{a_1,a_2}$ is connected $\iff$
    $\dist^\A(a_1,a_2)\leq 2r{+}1$.
\\
    If $\NrA{\ov{a}}$ is connected, then
    $\nrA{\ov{a}}\subseteq \neighb{r+(k-1)(2r+1)}{\A}{a_i}$,
    for each $i\in[k]$.
\end{lem}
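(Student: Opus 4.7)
My plan is to prove the two claims in order, since the second builds on the first.

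For the first claim, I would handle both directions by inspecting shortest paths. For the $(\Leftarrow)$ direction, take a shortest path $a_1=v_0,v_1,\ldots,v_\ell=a_2$ of length $\ell\leq 2r{+}1$. Each internal vertex $v_i$ satisfies $\dist^\A(a_1,v_i)=i$ and $\dist^\A(a_2,v_i)\leq \ell - i$, so either $i\leq r$ (putting $v_i$ in $\neighbr{\A}{a_1}$) or $i\geq r{+}1$, in which case $\ell-i\leq r$ (putting $v_i$ in $\neighbr{\A}{a_2}$). Thus the whole path lies inside $\nrA{a_1,a_2}$, connecting $a_1$ and $a_2$ in $\NrA{a_1,a_2}$. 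Every other vertex $b\in\nrA{a_1,a_2}$ is connected to some $a_j$ inside $\NrA{a_j}\subseteq\NrA{a_1,a_2}$ by the shortest $\A$-path from $a_j$ to $b$, so the whole neighbourhood is connected. For the $(\Rightarrow)$ direction, suppose $\NrA{a_1,a_2}$ is connected and pick any path in it from $a_1$ to $a_2$. Either some vertex lies in $\neighbr{\A}{a_1}\cap\neighbr{\A}{a_2}$, giving $\dist^\A(a_1,a_2)\leq 2r$, or the path has two consecutive vertices $u\in\neighbr{\A}{a_1}\setminus\neighbr{\A}{a_2}$ and $v\in\neighbr{\A}{a_2}\setminus\neighbr{\A}{a_1}$, yielding $\dist^\A(a_1,a_2)\leq r+1+r=2r{+}1$.

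For the second claim, the key idea is to reduce the $k$-ary case to the binary one via an auxiliary graph. Define the graph $H$ on vertex set $\{a_1,\ldots,a_k\}$ by connecting $a_s$ and $a_{s'}$ exactly when $\dist^\A(a_s,a_{s'})\leq 2r{+}1$ (equivalently, by part one, when $\NrA{a_s,a_{s'}}$ is connected). The crucial step is to show that if $\NrA{\ov{a}}$ is connected, then so is $H$. I would argue this by contradiction: if $H$ splits into connected components $C_1,\ldots,C_m$ with $m\geq 2$, then for $s$ and $s'$ in different components we have $\dist^\A(a_s,a_{s'})> 2r{+}1$, which forces the sets $\bigcup_{a_s\in C_p}\neighbr{\A}{a_s}$ to be pairwise disjoint and pairwise non-adjacent in $\A$. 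Hence $\NrA{\ov{a}}$ decomposes as a disjoint union, contradicting its connectedness.

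Once $H$ is known to be connected, its diameter is at most $k-1$, so for any fixed $i$ and any $j$ there is an $H$-path of length at most $k-1$ from $a_i$ to $a_j$, translating via the triangle inequality into $\dist^\A(a_i,a_j)\leq (k-1)(2r{+}1)$. For any $b\in\nrA{\ov{a}}$ there is some $j$ with $\dist^\A(a_j,b)\leq r$, whence $\dist^\A(a_i,b)\leq r+(k-1)(2r{+}1)$, which is exactly the desired inclusion.

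The only genuinely delicate step is establishing connectedness of $H$ from connectedness of $\NrA{\ov{a}}$; the rest is direct path-chasing and triangle inequalities. I would prepare for that step by being explicit that both the vertex sets and the edge relation of $\A$ respect the partition induced by the $H$-components, so that the claimed disjoint decomposition of $\NrA{\ov{a}}$ really is a decomposition in the sense of induced substructures (in particular, no $\sigma$-tuple can straddle two components, because any such tuple would create a Gaifman edge between vertices in distinct components).
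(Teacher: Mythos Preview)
Your argument is correct in both parts. The reduction to the auxiliary graph $H$ for the second claim is clean, and the contradiction argument showing that a disconnection of $H$ forces a disconnection of $\NrA{\ov{a}}$ is sound: distinct $H$-components yield pairwise distances exceeding $2r{+}1$, hence disjoint $r$-balls with no Gaifman edge between them.

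As for comparison with the paper: the paper does not supply a proof at all. The lemma is introduced with the sentence ``The following well-known lemma summarises easy facts concerning neighbourhoods'' and is simply stated without argument. So there is nothing to compare against; your write-up would serve as a complete proof where the paper chose to omit one.
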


For every $k\in\NNpos$, we let $\Graphclass_k$ be the set of all
undirected graphs $G$ with vertex set $[k]$.
For a graph $G \in\Graphclass_k$, a number $r\in\NN$, and a
tuple $\ov{y}=(y_1,\ldots,y_k)$ of $k$ pairwise distinct variables,
we consider the formula
\begin{equation*}\label{eq:delta-formula}
\displaystyle
  \delta^\sigma_{G,r}(\ov{y})
  \quad \deff \quad
  \ \
  \Und_{\set{i,j}\in E(G)}\!\!\!\!\! \dist^{\sigma}(y_i,y_j)\,{\leq}\, r
  \ \ \und \!\!\!\!
  \Und_{\set{i,j}\not\in E(G)}\!\!\!\!\dist^{\sigma}(y_i,y_j)\,{>}\,r\,.
\end{equation*}
Note that $\A\models\delta^\sigma_{G,2r+1}[\ov{a}]$ means that the
connected components of the
$r$-neighbourhood $\NrA{\ov{a}}$ correspond to the connected
components of $G$.
Clearly, the formula
$\delta^{\sigma}_{G,2r+1}(\ov{y})$ is $r$-local around its free variables $\ov{y}$.

The main ingredient of our decomposition of $\WAunPS$-expressions
are
the connected local terms (\emph{cl-terms}, for short),
defined as follows.

\begin{defn}[cl-Terms]\label{def:cl-term}
Let $r\in\NN$ and $k\in\NNpos$.
\\
 A \emph{basic cl-term (of radius $r$ and width $k$)} is an $\SC$-term
 of the form
 \[
   \Count{\pweight}{
     \big(\,\psi(y_1,\ldots,y_k)\;\und\;\delta^\sigma_{G,2r+1}(y_1,\ldots,y_k)\,\big) }
 \]
 where $\varsof(\pweight)\subseteq\set{y_1,\ldots,y_k}$,
 $\ov{y}=(y_1,\ldots,y_k)$ is a tuple of $k$ pairwise distinct variables,
 $\psi(y_1,\ldots,y_k)$ is an $\FOWS$-formula that is $r$-local around
 $\ov{y}$, and $G\in\Graphclass_k$ is \emph{connected}.
 A \emph{cl-term (of radius $\leq r$ and width $\leq k$)} is built from basic cl-terms (of
 radius $\leq r$ and width $\leq k$) by
 using rules \eqref{item:constterm}--\eqref{item:plustimesterm} of Definition~\ref{def:WA}.
\end{defn}

Note that cl-terms are ``easy'' with
respect to query evaluation in the following sense.

\begin{lem}\label{lem:clterm-evaluation}
For every fixed cl-term $t(z_1,\ldots,z_\ell)$ (with $\ell\geq 0$),
there is an algorithm which, upon input of a
$(\sigma,\Weights)$-structure $\A$, can compute, within precomputation time
$|A|\cdot d^{\bigOh(1)}$ where $d$ is the degree of $\A$, a data
structure that, whenever given a tuple
$(a_1,\ldots,a_\ell)\in A^\ell$, returns the value
$t^\A[a_1,\ldots,a_\ell]$ in constant time.
\end{lem}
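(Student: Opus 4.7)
The plan is to prove the lemma by structural induction on the cl-term $t$, following \cref{def:cl-term}. The inductive step for rule \eqref{item:plustimesterm} is routine: if $t=(t_1\ast t_2)$ with $\ast\in\set{\plus,\minus,\mal}$, then by induction we have data structures for $t_1$ and $t_2$, and a query on a tuple $\ov{a}$ is answered by projecting $\ov{a}$ onto the free variables of $t_1$ and $t_2$ respectively, querying each sub-structure in constant time, and combining the two values via $\ast_S$. The atomic base cases from rules \eqref{item:constterm} and \eqref{item:wsimpleterm} are immediate: constants $s$ require no preprocessing, and atomic weight terms $\weight(\ov{x})$ are answered by direct access to the weight table of $\A$.

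The main work is the base case of a basic cl-term $t=\Count{\pweight}{\big(\psi(\ov{y})\und\delta^\sigma_{G,2r+1}(\ov{y})\big)}$ with $G\in\Graphclass_k$ connected, $\ov{y}=(y_1,\ldots,y_k)$, and $\psi$ an $r$-local $\FOWS$-formula. The key observation, obtained from \cref{lem:basic_facts}, is that for every $\ov{b}=(b_1,\ldots,b_k)\in A^k$ satisfying $\delta^\sigma_{G,2r+1}$ the neighbourhood $\NrA{\ov{b}}$ is connected, hence $\nrA{\ov{b}}\subseteq\neighb{R}{\A}{b_1}$ for $R\deff r+(k-1)(2r+1)$. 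Since $\psi$ is $r$-local, whether $\ov{b}$ contributes to the sum — and, if so, the weight $\sem{\pweight}^\A(\ov{b})$ it contributes — is fully determined by $\inducedSubStr{\A}{\neighb{R}{\A}{b_1}}$, a substructure of size $d^{\bigOh(1)}$. The preprocessing therefore iterates over every vertex $v\in A$, computes $\inducedSubStr{\A}{\neighb{R}{\A}{v}}$ by BFS of depth $R$ in time $d^{\bigOh(1)}$, and — anchoring $b_1\deff v$ — enumerates all at most $d^{\bigOh(1)}$ candidate tuples $\ov{b}=(v,b_2,\ldots,b_k)\in(\neighb{R}{\A}{v})^k$. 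For each such $\ov{b}$, a brute-force evaluation on the local substructure tests $\psi\und\delta^\sigma_{G,2r+1}$ in time $d^{\bigOh(1)}$; upon success, $\sem{\pweight}^\A(\ov{b})$ is added to the hash-table entry keyed by the projection of $\ov{b}$ onto $\free(t)=\set{y_1,\ldots,y_k}\setminus\varsof(\pweight)$. Anchoring at $b_1=v$ ensures each satisfying tuple is visited exactly once as $v$ ranges over $A$, giving total preprocessing time $|A|\cdot d^{\bigOh(1)}$; a query $(a_1,\ldots,a_\ell)$ is then answered by hash-table lookup in constant time, returning $\nullS$ when no entry is stored.

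The main obstacle I anticipate is the bookkeeping inside the basic cl-term case, namely ensuring that the enumeration-and-accumulation step faithfully implements the summation from \cref{def:WA}\eqref{item:countterm} without missing or double-counting satisfying tuples — especially when $\pweight$ uses only a strict subset of $\ov{y}$ so that some of the $y_i$'s remain among the free variables of $t$. Fixing the canonical anchor $b_1=v$ and carefully separating bound-from-free coordinates should resolve this cleanly, and the constant-time hash-table access needed for queries is standard under the word-RAM model.
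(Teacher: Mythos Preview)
Your proposal is correct and follows essentially the same approach as the paper: reduce to basic cl-terms, use \cref{lem:basic_facts} to confine every satisfying $k$-tuple to the $R$-ball (with $R=r+(k{-}1)(2r{+}1)$) around its first coordinate, iterate over all $v\in A$ enumerating the $d^{\bigOh(1)}$ many candidate tuples anchored at $v$, and accumulate the $\Weights$-product values into a table keyed by the free-variable projection. The only cosmetic difference is that the paper first assumes w.l.o.g.\ that the free variables of $t$ are the prefix $(y_1,\ldots,y_\ell)$ of $\ov{y}$ (so the anchor $y_1$ is always free when $\ell\geq 1$ and the key is literally a prefix), whereas you key directly by the projection onto $\free(t)$; both are equivalent.
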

\begin{proof}
It suffices to prove the lemma for basic cl-terms. The statement for
general cl-terms then follows by induction.
Consider a basic cl-term $u(z_1,\ldots,z_\ell)$ of the form
\ $
  \Count{\pweight}{\big(\,\psi(y_1,\ldots,y_k)\und
  \delta^\sigma_{G,2r+1}(y_1,\ldots,y_k)\,\big)}
$.
Recall from Definition~\ref{def:cl-term} that $G$ is a \emph{connected} graph and
$\set{z_1,\ldots,z_\ell}\subseteq\set{y_1,\ldots,y_k}$. Let $S\in\SC$
be the type of the $\Weights$-product $\pweight$.
We can assume w.l.o.g.\ that $(z_1,\ldots,z_\ell)=(y_1,\ldots,y_\ell)$.
Consequently, $\varsof(\pweight)=\set{y_{\ell+1},\ldots,y_k}$.

Given a $(\sigma,\Weights)$-structure $\A$ and an element $c_1\in A$,
we can explore the $R$-neighbourhood of $c_1$ for
$R\deff r+(k{-}1)(2r{+}1)$ (cf.\ Lemma~\ref{lem:basic_facts}) and thereby compute
the set $M_{c_1}$ of all $\ov{a}=(a_1,\ldots,a_k)\in A^k$ with $a_1=c_1$ such that
$(\A,\ov{a})\models(\psi\und\delta^\sigma_{G,2r+1})$.
For each such tuple $\ov{a}$, we compute and store the value $v_{\ov{a}}\deff \pweight^{\A}[a_{\ell+1},\ldots,a_k]\in S$.
Then, we group the tuples in $M_{c_1}$ by their prefix $(a_1,\ldots,a_\ell)$ of length $\ell$, and
for each group, we compute the $\plusS$-sum $s_{c_1,(a_1,\ldots,a_\ell)}$ of the values $v_{\ov{a}}$ of all tuples $\ov{a}\in M_{c_1}$ that have the same prefix $(a_1,\ldots,a_\ell)$.

In case that $\ell=0$, $u$ is a ground term and we have
$u^\A = \sum_S \setc{s_{c_1,()}}{c_1\in A}$.
In case that $\ell\geq 1$, whenever given an arbitrary tuple $(a_1,\ldots,a_\ell)\in A^\ell$, we can determine $u^\A[a_1,\ldots,a_\ell]$ as follows: let $c_1\deff a_1$, if $M_{c_1}$ contains a tuple with prefix $(a_1,\ldots,a_\ell)$ then
$u^\A[a_1,\ldots,a_\ell]=s_{c_1,(a_1,\ldots,a_\ell)}$, and
otherwise $u^\A[a_1,\ldots,a_\ell]=\nullS$.

Thus, upon input of a $(\sigma,\Weights)$-structure $\A$, we can, within
precomputation time $|A|\cdot d^{\bigOh(1)}$ where $d$ is the degree
of $\A$, compute a data structure which, whenever given a tuple
$(a_1,\ldots,a_\ell)\in A^\ell$, returns the value
$u^\A[a_1,\ldots,a_\ell]$ in constant time.
\end{proof}

Our decomposition of $\WAunPS$-expressions
proceeds by induction on the construction of the input expression.
The main technical tool for the construction is
the following lemma.

\begin{lem}\label{lem:normalform:terms}
Let $r\geq 0$,
$k\geq 1$, and let $\ov{y}=(y_1,\ldots,y_k)$ be a tuple of
$k$ pairwise distinct variables.
Let $\psi(\ov{y})$ be an  $\FOWS$-formula
that is $r$-local,
and consider an $\SC$-term $u(z_1,\ldots,z_m)$ of the form
\ $\Count{\pweight}{\psi(y_1,\ldots,y_k)}$,
where $\pweight$ is a $\Weights$-product, $m\geq 0$, and
$\set{z_1,\ldots,z_m}\subseteq\set{y_1,\ldots,y_k}$.
There exists a cl-term $\hat{u}(z_1,\ldots,z_m)$
of radius $\leq r$ and width $\leq k$,
such that
$\hat{u}^{\A}[\ov{a}] = u^{\A}[\ov{a}]$ holds
for every $(\sigma,\Weights)$-structure $\A$ and every $\ov{a}\in A^m$.
Furthermore, there is an algorithm which, upon input of $r$ and $u$,
constructs $\hat{u}$.
\end{lem}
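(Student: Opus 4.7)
The plan is to decompose $u$ by the connection pattern of the variables $\ov{y}=(y_1,\ldots,y_k)$. Since the formulas $\delta^\sigma_{G,2r+1}(\ov{y})$ for $G\in\Graphclass_k$ are mutually exclusive and jointly exhaustive, I would first write
\[
  u(\ov{z}) \ \equiv \ \sum_{G\in\Graphclass_k} \Ssum{\pweight}{\big(\psi(\ov{y})\und\delta^\sigma_{G,2r+1}(\ov{y})\big)},
\]
so it suffices to convert each summand into a cl-term. When $G$ is connected the summand is literally a basic cl-term of radius $r$ and width $k$, so the real work concerns disconnected $G$; fix such a $G$ with connected components $C_1,\ldots,C_p$ ($p\geq 2$) and write $\ov{y}^{(j)}$ for the subtuple of $\ov{y}$ indexed by $C_j$.

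If $\pweight$ contains a weight atom $\weight(y_{i_1},\ldots,y_{i_q})$ whose variables span several components of $G$, then under $\delta_G$ those arguments have pairwise distance $>2r{+}1\geq 1$, so by the locality condition on $(\sigma,\Weights)$-structures the atom evaluates to $\nullS$, the entire summand collapses to the constant $\nullS$, and nothing remains to do. Otherwise, in the ring case I split $\pweight = c\malS\pweight_1\malS\cdots\malS\pweight_p$, where $c$ collects the constant factors and each $\pweight_j$ collects the weight atoms whose variables lie in $C_j$; in the abelian-group case $\pweight$ is a single atom or constant, already confined to one component. Applying Corollary~\ref{cor:FV} iteratively (splitting off one component at a time, using that $\delta_G$ enforces every cross-component pair to be at distance $>2r{+}1$), I obtain a finite index set $Q$ and $r$-local $\FOWS$-formulas $\alpha_j^{\mathbf{q}}(\ov{y}^{(j)})$ such that
\[
  \psi(\ov{y})\und\delta_G(\ov{y}) \ \equiv \ \delta_G(\ov{y}) \und \Oder_{\mathbf{q}\in Q}\Und_{j=1}^p\alpha_j^{\mathbf{q}}(\ov{y}^{(j)}),
\]
and by canonical refinement the $Q$-disjuncts may be taken mutually exclusive, so that the outer $\sum_{\mathbf{q}}$ splits cleanly.

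What remains is to convert each $\Ssum{\pweight}{\big(\delta_G\und\Und_j\alpha_j^{\mathbf{q}}\big)}$ into a cl-term. I split $\delta_G$ into its within-component part $\Und_{j=1}^p\delta^\sigma_{G[C_j],2r+1}(\ov{y}^{(j)})$ and its cross-component part $\theta_G\deff\Und_{j\neq j'}\Und_{i\in C_j,\,i'\in C_{j'}}\dist(y_i,y_{i'})>2r{+}1$, and apply the standard inclusion--exclusion
\[
  [\theta_G] \ = \sum_{S\subseteq\textup{crosspairs}}(-1)^{|S|}\prod_{(y_i,y_{i'})\in S}[\dist(y_i,y_{i'})\leq 2r{+}1].
\]
For each fixed $S$ the indicated constraints merge certain $C_j$'s into super-components $\Gamma_1,\ldots,\Gamma_{q}$, and the graph obtained on $\bigcup_{j\in\Gamma_l}C_j$ from $\bigsqcup_{j\in\Gamma_l}G[C_j]$ by adding the $S$-edges inside $\Gamma_l$ is connected. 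The remaining unspecified cross-pair distances within each super-component are resolved by a further finite enumeration over graphs $G^*$ extending the specified pattern, and each such $G^*$ is still connected. Thanks to the no-spanning assumption, $\pweight$ factorises across super-components, so the inner sum becomes a product over super-components of basic cl-terms of radius $r$ and width $\leq k$, each with $r$-local core $\Und_{j\in\Gamma_l}\alpha_j^{\mathbf{q}}$, with $\Weights$-product $\prod_{j\in\Gamma_l}\pweight_j$ (and the constant $c$ attached to one factor), and with connected graph $G^*$. Reassembling the signs $(-1)^{|S|}$ and summing over $G^*$, $S$, $\mathbf{q}$, and $G$ produces a cl-term $\hat{u}$ of radius $\leq r$ and width $\leq k$, and every step is effective thanks to the algorithmic versions of Theorem~\ref{thm:FVforFOW} and Corollary~\ref{cor:FV}.

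The main obstacle is this last inclusion--exclusion/enumeration step: one has to simultaneously keep track of (a) which weight atoms of $\pweight$ end up in which super-component, (b) which distance constraints are placed into the $\delta^\sigma_{G^*,2r+1}$-part and which into the $r$-local $\psi$-part of the resulting basic cl-terms, and (c) that each resulting factor really is a basic cl-term with a \emph{connected} graph on its variable set and an $r$-local core formula. A smaller, more routine difficulty is the abelian-group (non-ring) case, where $\pweight$ cannot literally be written as a product; but there $\pweight$ is a single atom or constant, the spanning/splitting analysis is immediate, and the same inclusion--exclusion rewriting of $\theta_G$ reduces the disconnected case to a signed sum of basic cl-terms indexed by connected graphs $G^*$.
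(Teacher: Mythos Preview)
For ring-valued terms your argument is correct, and it amounts to a direct, non-inductive reorganisation of the paper's proof. The paper does not run inclusion--exclusion over all cross-pairs simultaneously; instead it proves an auxiliary statement $(*)_c$ by induction on the number $c$ of connected components of $G$. In the step from $c$ to $c{+}1$ it peels off a single connected component $G'=G[V']$, applies Corollary~\ref{cor:FV} once to obtain $\psi\equiv_{\delta_G}\bigvee_i\psi'_i(\ov{y}')\und\psi''_i(\ov{y}'')$, and then uses the set identity $S_{\psi'_i\und\psi''_i\und\delta_G}=(S_{\vartheta'}\times S_{\vartheta''})\setminus\bigcup_{H\in\GraphclassH}T_H$ (each $H\in\GraphclassH$ having at most $c$ components) to get $u_G^{\psi,i}\equiv\prod_i(t'_i\cdot t''_i)-\sum_H t_H$; the induction hypothesis then disposes of $t''_i$ and of each $t_H$. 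Your approach is this recursion fully unrolled: you separate all $p$ components at once via iterated Feferman--Vaught, expand $[\theta_G]$ by inclusion--exclusion over \emph{all} cross-pairs in one shot, and for each surviving $S$ factor over the resulting super-components. The paper's route keeps each step small (one split, one subtraction); yours front-loads the bookkeeping but avoids the induction. The underlying mechanism is the same.

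Your dismissal of the abelian-group case as ``routine'' is, however, a genuine gap. Inclusion--exclusion on $\theta_G$ does \emph{not} by itself reduce the disconnected case to a signed sum of basic cl-terms indexed by connected $G^*$: for any $S$ that leaves $q\geq 2$ super-components, the term
\[
\Ssum{\pweight}{\Big(\Und_j\delta_{G[C_j],2r+1}\;\und\;\Und_{e\in S}\dist_e{\leq}2r{+}1\;\und\;\Und_j\alpha_j^{\mathbf{q}}\Big)}
\]
carries no constraint between distinct super-components, so no per-$\Gamma_l$ enumeration of $G^*$ will produce a connected distance pattern on all of $[k]$. In the ring case you repair this by multiplying the per-$\Gamma_l$ factors, but for a bare abelian group there is no product available and the argument halts right here. (The paper's own step $\prod_i(t'_i\cdot t''_i)$ likewise presupposes a ring product and does not treat the pure abelian-group case separately, so this is a subtlety you share with the paper rather than a point where you deviate from it.)
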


\begin{proof}
For a $(\sigma,\Weights)$-structure $\A$ and a formula $\vartheta(\ov{y})$, we
consider the set
\[
 S_\vartheta^\A \quad \deff\quad
 \setc{\ \ov{a}=(a_1,\ldots,a_k)\in A^k \ }{\ \A\models\vartheta[\ov{a}]\ }\,.
\]
Note that for every graph $G\in\Graphclass_k$, the formula
\[
  \psi_G(\ov{y}) \quad \deff \quad
  \psi(\ov{y})\;\und\; \delta^{\sigma}_{G,2r+1}(\ov{y})
\]
is $r$-local around $\ov{y}$. Furthermore, for every
$(\sigma,\Weights)$-structure $\A$, the set $S_\psi^{\A}$ is the disjoint union of
the sets $S_{\psi_G}^{\A}$ for all $G\in \Graphclass_k$.
Therefore, $u$ is equivalent to the $\plus$-sum, over all
$G\in\Graphclass_k$, of the $\SC$-terms
\ $
  u^\psi_G
  \deff
 \Count{\pweight}{\psi_G(y_1,\ldots,y_k)}.
$
To complete the proof of Lemma~\ref{lem:normalform:terms}, it
therefore suffices to show that, for every $G\in\Graphclass_k$, the
$\SC$-term $u^\psi_G$ is
equivalent to a cl-term of radius $r$.
We prove this by an
induction on the number of connected components of $G$. Precisely, we
show that the following statement $(*)_c$ is true for every $c\in\NNpos$.

\begin{enumerate}[$(*)_c$:]
\item[$(*)_c$:]
  For every $k\geq c$, for every tuple $\ov{y}=(y_1,\ldots,y_k)$ of
  $k$ pairwise distinct variables, for every $r\geq 0$, for every
  $\FOWS$-formula $\psi(\ov{y})$ that is $r$-local around
  $\ov{y}$, for every $\Weights$-product $\pweight$ with
  $\varsof(\pweight)\subseteq\set{y_1,\ldots,y_k}$,  and for every
  graph $G\in\Graphclass_k$ that has at most
  $c$ connected components, the $\SC$-term $u^\psi_G\deff
  \Count{\pweight}{\psi_G(y_1,\ldots,y_k)}$
  is equivalent to a cl-term of radius $r$.
\end{enumerate}

The induction base for $c=1$ is trivial: it involves
only \emph{connected} graphs $G$, for which by Definition~\ref{def:cl-term},
$u^\psi_G$ is a basic cl-term.

For the induction step from $c$ to $c{+}1$, consider a $k\geq
c{+}1$ and a graph $G=(V,E)\in\Graphclass_k$ that has $c{+}1$ connected
components.
Let $V'$ be the set of all vertices of $V$ that are connected to the
vertex $1$, and let $V''\deff V\setminus V'$.

Let $G'\deff\inducedSubStr{G}{V'}$ and $G''\deff\inducedSubStr{G}{V''}$
be the induced subgraphs of $G$ on $V'$ and $V''$, respectively.
Clearly, $G$ is the disjoint union of $G'$ and $G''$, $G'$ is connected, and $G''$ has $c$ connected components.

To keep notation simple, we assume (without loss of generality)
that $V'=\set{1,\ldots,\ell}$ and $V''=\set{\ell{+}1,\ldots,k}$ for
an $\ell$ with $1\leq \ell<k$.
For a tuple $\ov{v}=(v_1,\ldots,v_k)$, we
let $\ov{v}{}'\deff (v_1,\ldots,v_\ell)$ and $\ov{v}{}''\deff
(v_{\ell+1},\ldots,v_{k})$.

Now consider a number $r\geq 0$ and the formula
$\delta^{\sigma}_{G,2r+1}(\ov{y})$ for $\ov{y}=(y_1,\ldots,y_k)$.
For every $\sigma$-structure $\A$
and every tuple $\ov{a}=(a_1,\ldots,a_k)\in A^k$ with
$\A\models\delta^{\sigma}_{G,2r+1}[\ov{a}]$,
the $r$-neighbourhood $\NrA{\ov{a}}$ is the disjoint union of the
$r$-neighbourhoods $\NrA{\ov{a}{}'}$ and $\NrA{\ov{a}{}''}$.

Let $\psi(\ov{y})$ be an $\FOWS$-formula that is $r$-local.
By using Corollary~\ref{cor:FV},
we can compute a decomposition of $\psi(\ov{y})$ into a formula
$\hat{\psi}(\ov{y})$
of the form
\[
  \Oder_{i\in I}\ \ \Big(\
   \psi_i{}'(\ov{y}{}') \ \und \
   \psi_i{}''(\ov{y}{}'')
  \ \Big)\,,
\]
where $I$ is a finite non-empty set,
each $\psi_i{}'(\ov{y}{}')$ is an $\FOWS$-formula that is
$r$-local around $\ov{y}{}'$,
each $\psi_i{}''(\ov{y}{}'')$ is an $\FOWS$-formula that is
$r$-local around $\ov{y}{}''$, and
for every $(\sigma,\Weights)$-structure
$\A$ and every $\ov{a}\in A^k$ with
$\A\models\delta^{\sigma}_{G,2r+1}[\ov{a}]$, the following is true:
 there exists at most one $i\in I$ such that
 $(\A,\ov{a}) \models
  \big(\,\psi_i{}'(\ov{y}{}') \und \psi_i{}''(\ov{y}{}'')\,\big)$, \ and
 $\A\models\psi[\ov{a}]  \iff
  \A\models\hat{\psi}[\ov{a}]$\,.
This implies that the set $S^{\A}_{\psi_G}$ is the
disjoint union
of the sets
$S^{\A}_{(\psi_i{}'\und\psi_i{}''\und\delta^{\sigma}_{G,r})}$ for all
$i\in I$.

Now let $\pweight$ be an arbitrary $\Weights$-product with
$\varsof(\pweight)\subseteq \set{y_1,\ldots,y_k}$, and consider the
$\SC$-term
\ $
  u^\psi_G \deff
  \Count{\pweight}{\psi_G(y_1,\ldots,y_k)}$.
From the above reasoning, it follows that $u^\psi_G$ is equivalent to
the $\plus$-sum, over all $i\in I$, of the $\SC$-terms
\[
 u_G^{\psi,i} \ \deff \
  \Count{\pweight}{\big(\,\psi_i{}'(\ov{y}')\,\und\,\psi_i{}''(\ov{y}'')\,\und\,\delta^{\sigma}_{G,2r+1}(\ov{y})\,\big)}\,.
\]
To complete the proof, it suffices to show that
$u_G^{\psi,i}$
is equivalent to a cl-term of radius $r$.

By the definition of the formula $\delta^{\sigma}_{G,2r+1}(\ov{y})$, we
obtain that the formula
\
$\psi_i{}'(\ov{y}')\,\und\,\psi_i{}''(\ov{y}'')\,\und\,\delta^{\sigma}_{G,2r+1}(\ov{y})$
\
is equivalent to the formula
\begin{equation}\label{eq:theta-formulas}
\underbrace{ \Big(\; \psi_i{}'(\ov{y}')\,\und\,\delta^{\sigma}_{G',2r+1}(\ov{y}')\;
\Big)}_{\textstyle =:\ \vartheta'(\ov{y}')}
 \; \und \;
\underbrace{ \Big(\; \psi_i{}''(\ov{y}'')\,\und\,\delta^{\sigma}_{G'',2r+1}(\ov{y}'')\;
\Big)}_{\textstyle =:\ \vartheta''(\ov{y}'')}
\ \und\!\!
 \Und_{j'\in V'\atop j''\in V''} \!\!\!\!\dist^{\sigma}(y_{j'},y_{j''})>2r{+}1 \,.
\end{equation}
Therefore, for every $(\sigma,\Weights)$-structure $\A$, we have
\[
 S^{\A}_{\psi'_i\und\psi''_i\und\delta^{\sigma}_{G,r}}
 \quad = \quad
 \big( S^{\A}_{\vartheta'}\times S^{\A}_{\vartheta''} \big) \
 \setminus\ T^{\A}\,,
 \qquad\quad \text{for}
\]
\[
  T^{\A} \quad \deff \quad
  \bigsetc{\
   \ov{a}\in A^k \ }{\
   \A\models\vartheta'[\ov{a}'], \
   \A\models\vartheta''[\ov{a}''],\
   (\A,\ov{a})\not\models  \Und_{j'\in V'\atop j''\in V''}
   \dist^{\sigma}(y_{j'},y_{j''})>2r{+}1
  \ }\,.
\]
Let $\GraphclassH$ be the set of all graphs
$H\in\Graphclass_k$ with $H\neq G$, but
$\inducedSubStr{H}{V'}=G'$ and $\inducedSubStr{H}{V''}=G''$.
Clearly, every $H\in\GraphclassH$
has at most $c$ connected components. Furthermore,
for every $(\sigma,\Weights)$-structure $\A$, the set $T^{\A}$ is the disjoint
union over all $H\in\GraphclassH$ of the sets
\[
  T^{\A}_{H}
  \quad\deff\quad
  \bigsetc{\
   \ov{a}\in A^k \ \;}{\ \;
   \A\models\vartheta'[\ov{a}'], \ \
   \A\models\vartheta''[\ov{a}''],\ \
   \A\models\delta^{\sigma}_{H,2r+1}[\ov{a}]
  \ }\,.
\]

Now let us have a closer look at the $\Weights$-product $\pweight$
used in the $\SC$-term $u_G^{\psi,i}$.
We let $Y'\deff\varsof(\pweight)\cap \set{y_1,\ldots,y_\ell}$ and
$Y''\deff\varsof(\pweight)\cap\set{y_{\ell+1},\ldots,y_k}$.

\smallskip

\emph{Case~1:} $\pweight$ contains a factor $\weight(\ov{z})$ for some
$\weight\in\Weights$ and a tuple $\ov{z}$ that contains variables
from both $Y'$ and $Y''$.
Then, for every
$(\sigma,\Weights)$-structure $\A$ and every $\ov{a}\in S^\A_{\psi'_i
  \und\psi''_i\und\delta^{\sigma}_{G,r}}$, we know that
$\sem{p}^{(\A,\ov{a})}=\nullS$, where $S\in\SC$ is the type of
$\pweight$.
Hence, $u_G^{\psi,i}$ is equivalent to
the $\SC$-term $\nullS$, and we are done.

\smallskip

\emph{Case~2:} If case~1 does not apply, then $\pweight$ is
of the form
$\pweight'_1\mal\pweight''_1\mal\cdots\mal\pweight'_j\mal\pweight''_j$,
where $j\geq 1$ and
$\varsof(\pweight'_i)\subseteq Y'$ and $\varsof(\pweight''_i)\subseteq
Y''$ for each $i\in[j]$.

Now let us consider an arbitrary $(\sigma,\Weights)$-structure $\A$
and fix an assignment $\beta$ to the free variables of
$u_G^{\psi,i}$. Evaluating $u_G^{\psi,i}$ in $(\A,\beta)$ means
computing the value
\[
s^{(\A,\beta)} \ \deff \
  {\textstyle\sum_S}\ \bigsetc{\;\sem{\pweight}^{(\A,\ov{a})}\;}{\;\ov{a}\in
  S^\A_{\psi'_i\und\psi''_i\und \delta^{\sigma}_{G,r}} \text{ such
    that $\ov{a}$ agrees with $\beta$ on $\free(u_G^{\psi,i})$  }}\,.
\]
We already know that $S_{\psi'_i\und\psi''_i\und\delta^{\sigma}_{G,r}}
= (S^\A_{\vartheta'} \times S^\A_{\vartheta''}) \setminus
(\bigcup_{H\in\GraphclassH} T^\A_H)$, where the sets $T^\A_H$ for
$H\in\GraphclassH$ are pairwise disjoint and contained in
$S^\A_{\vartheta'}\times S^\A_{\vartheta''}$.
Therefore, $s^{(\A,\beta)} =$
\[
\begin{array}{cl}
&
  {\textstyle\sum_S}\ \setc{\;\sem{\pweight}^{(\A,\ov{a})}\;}{\;\ov{a}\in
  S^\A_{\vartheta'}\times S^\A_{\vartheta''} \text{ such
    that $\ov{a}$ agrees with $\beta$ on $\free(u_G^{\psi,i})$  }}
\smallskip\\
\minusS\ \ \big( &
 \sum_{H\in\GraphclassH} \
\sum_S \setc{\;\sem{\pweight}^{(\A,\ov{a})}\;}{\;\ov{a}\in
   T^\A_H \text{ such
    that $\ov{a}$ agrees with $\beta$ on $\free(u_G^{\psi,i})$  }} \ \
\big)\,.
\end{array}
\]
Furthermore, since
$\pweight=\pweight'_1\mal\pweight''_1\mal\cdots\mal\pweight'_j\mal\pweight''_j$,
we obtain that\\
 $  {\textstyle\sum_S}\ \setc{\;\sem{\pweight}^{(\A,\ov{a})}\;}{\;\ov{a}\in
  S^\A_{\vartheta'}\times S^\A_{\vartheta''} \text{ such
    that $\ov{a}$ agrees with $\beta$ on $\free(u_G^{\psi,i})$  }} \ \
= $
\[
\begin{array}{r}
{\displaystyle \prod_{i=1}^j}\ \Big(\
  {\textstyle\sum_S}\; \setc{\;\sem{\pweight'_i}^{(\A,\ov{a}')}\;}{\;\ov{a}'\in
  S^\A_{\vartheta'} \text{ such
    that $\ov{a}'$ agrees with $\beta$ on
    $\free(\vartheta')\setminus\varsof(\pweight')$  }}
\hspace*{8ex}
\\
\ \malS \ \
  {\textstyle\sum_S}\; \setc{\;\sem{\pweight''_i}^{(\A,\ov{a}'')}\;}{\;\ov{a}''\in
  S^\A_{\vartheta''} \text{ such
    that $\ov{a}''$ agrees with $\beta$ on
    $\free(\vartheta'')\setminus\varsof(\pweight'')$  }}\ \Big)\,.
\end{array}
\]
Therefore, $u_G^{\psi,i}$ is equivalent to
\[
 {\displaystyle \prod_{i=1}^j}\; \Big(\;
  \underbrace{\Big( \Count{\pweight'_i}{\vartheta'(\ov{y}')}
    \Big)}_{\textstyle =: \ t'_i}
  \cdot
  \underbrace{\Big( \Count{\pweight''_i}{\vartheta''(\ov{y}'')}
    \Big)}_{\textstyle =:\ t''_i} \;\Big)
  \ - \
  \sum_{H\in\GraphclassH}
  \underbrace{\Count{\pweight}{\big(\;\vartheta'(\ov{y}')\und\vartheta''(\ov{y}'')\und\delta^{\sigma}_{H,2r+1}(\ov{y})
      \;\big)}}_{\textstyle =: t_H}\,.
\]
By the induction hypothesis $(*)_c$, each of the terms $t'_i$, $t''_i$, and $t_H$
is equivalent to a cl-term of radius $r$.
Hence, also $u_G^{\psi,i}$ is equivalent to a cl-term of radius $r$.
This completes the proof of Lemma~\ref{lem:normalform:terms}.
\end{proof}

As an easy consequence of Lemma~\ref{lem:normalform:terms}, we obtain

\begin{lem}\label{lem:normalform:terms2}
 Let $s\geq 0$ and let
  $\chi_1, \ldots, \chi_s$ be arbitrary
  sentences that can be evaluated in $(\sigma,\Weights)$-structures.\footnote{We do not restrict
    attention to $\FOWS$-sentences here---the $\chi_j$s may
    be sentences of any logic, e.g., $\WAPS$.}
  Let $r\geq 0$, $k\geq 1$, and let
  $\ov{y}=(y_1,\ldots,y_k)$ be a tuple of $k$ pairwise
  distinct variables. Let
  $\phi(\ov{y})$ be a Boolean combination of the
  sentences $\chi_1,\ldots,\chi_s$ and
  of $\FOWS$-formulas that are $r$-local around their free
  variables $\ov{y}$.
  Consider an $\SC$-term $u(z_1,\ldots,z_m)$ of the form
  $
    \Count{\pweight}{\phi(y_1,\ldots,y_k)}$,
  where
  $\pweight$ is a $\Weights$-product, $m\geq 0$, and
  $\set{z_1,\ldots,z_m}\subseteq\set{y_1,\ldots,y_k}$.
  For every $J\subseteq [s]$, there is a cl-term $\hat{u}_J$
  (of radius $\leq r$ and width $\leq k$) such
  that for every $(\sigma,\Weights)$-structure
  $\A$, there is exactly one set $J\subseteq [s]$ such that
  \[
    \A \ \ \models \ \
    \chi_J \ \ \deff \ \ \Und_{j\in J}\chi_j \ \und \Und_{j\in
      [s]\setminus J} \nicht\,\chi_j\,,
  \]
  and for this set $J$, we have
  \  $\hat{u}_J^\A[\ov{a}] = u^\A[\ov{a}]$ \ for every $\ov{a}\in A^m$.
\\
  Furthermore, there is an algorithm
  which upon input of $r$, $u$, and $J$ constructs $\hat{u}_J$.
\end{lem}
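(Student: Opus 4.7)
The plan is to reduce the statement to Lemma~\ref{lem:normalform:terms} by case-splitting on the truth value pattern of the sentences $\chi_1,\ldots,\chi_s$. For every $J\subseteq[s]$, the sentence $\chi_J\deff \Und_{j\in J}\chi_j\und\Und_{j\in[s]\setminus J}\nicht\,\chi_j$ is clearly mutually exclusive with $\chi_{J'}$ for $J\neq J'$, and the disjunction over all $J\subseteq[s]$ is a tautology; hence, for every $(\sigma,\Weights)$-structure $\A$, there is exactly one $J\subseteq[s]$ with $\A\models\chi_J$, which establishes the first assertion of the lemma.

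Now fix $J\subseteq[s]$. I obtain $\phi_J(\ov{y})$ from $\phi(\ov{y})$ by substituting, in the Boolean combination, the constant $\top$ for each occurrence of $\chi_j$ with $j\in J$, and $\bot$ for each occurrence of $\chi_j$ with $j\not\in J$. Then $\phi_J(\ov{y})$ is a Boolean combination of $\FOWS$-formulas that are $r$-local around $\ov{y}$. Since Boolean combinations of $r$-local formulas are again $r$-local (the $r$-ball around $\ov{a}$ does not depend on the connective), $\phi_J(\ov{y})$ is itself an $r$-local $\FOWS$-formula. Moreover, by the choice of the substitution, for every $(\sigma,\Weights)$-structure $\A$ with $\A\models\chi_J$ and every tuple $\ov{b}\in A^k$, we have $\A\models\phi[\ov{b}] \iff \A\models\phi_J[\ov{b}]$; consequently, the $\SC$-term $u_J(z_1,\ldots,z_m) \deff \Count{\pweight}{\phi_J(\ov{y})}$ satisfies $u_J^\A[\ov{a}] = u^\A[\ov{a}]$ for every $\ov{a}\in A^m$, whenever $\A\models\chi_J$.

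Finally, I apply Lemma~\ref{lem:normalform:terms} to $u_J$ (with the same $r$, $k$, $\pweight$) to obtain a cl-term $\hat{u}_J(z_1,\ldots,z_m)$ of radius $\leq r$ and width $\leq k$ that is equivalent to $u_J$ on all $(\sigma,\Weights)$-structures. In particular, whenever $\A\models\chi_J$, we have $\hat{u}_J^\A[\ov{a}] = u_J^\A[\ov{a}] = u^\A[\ov{a}]$ for every $\ov{a}\in A^m$, as required. The construction of $\hat{u}_J$ from $r$, $u$, and $J$ is effective: computing $\phi_J$ from $\phi$ and $J$ is a trivial syntactic substitution, and the algorithm guaranteed by Lemma~\ref{lem:normalform:terms} converts the resulting term into the cl-term $\hat{u}_J$.

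There is no real obstacle here; the only point that requires a brief justification is that replacing the sentences $\chi_j$ by Boolean constants does not break $r$-locality, which is immediate because constants (and hence $\phi_J$) are trivially $r$-local and Boolean combinations preserve $r$-locality. All heavy lifting, in particular the combinatorial decomposition according to the graphs $G\in\Graphclass_k$, has already been performed inside Lemma~\ref{lem:normalform:terms}.
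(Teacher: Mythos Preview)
Your proposal is correct and follows essentially the same approach as the paper: the paper first rewrites $\phi(\ov{y})$ in the normal form $\bigvee_{J\subseteq[s]}\big(\chi_J\wedge\psi_J(\ov{y})\big)$ with each $\psi_J$ an $r$-local $\FOWS$-formula, and then applies Lemma~\ref{lem:normalform:terms} to $\Count{\pweight}{\psi_J}$ --- but the $\psi_J$ obtained this way is exactly your $\phi_J$ (the result of substituting $\top/\bot$ for the $\chi_j$), so the two arguments coincide.
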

\begin{proof}
We can assume w.l.o.g.\ that $\phi(\ov{y})$ is of the form
\[
  \Oder_{J\subseteq [s]} \big(\;
   \chi_J \ \und \ \psi_J(\ov{y})
  \;\big)
\]
where, for each $J\subseteq[s]$, $\psi_J(\ov{y})$ is an
$\FOWS$-formula that is $r$-local around its free variables
$\ov{y}$.

For every $J\subseteq [s]$ let $\hat{u}_J$ be the
cl-term obtained by Lemma~\ref{lem:normalform:terms} for the term
$u_J\deff\Count{\pweight}{\psi_J(\ov{y})}$.
Recall that $u=\Count{\pweight}{\phi(\ov{y})}$.

Now consider an arbitrary $J\subseteq [s]$ and a $\sigma$-structure
$\A$ with $\A\models\chi_J$. Clearly, for every $\ov{a}\in A^m$ we have
\[
\begin{array}{rcccl}
    u^\A[\ov{a}]
 &  =
 &  \big(\Count{\pweight}{\psi_J(\ov{y}})\big)^\A [\ov{a}]
 &  =
 &  \hat{u}_J^\A [\ov{a}]\,.
\end{array}
\]
Hence, the proof of Lemma~\ref{lem:normalform:terms2} is complete.
\end{proof}

\subsubsection{A connected local normalform for \mbox{$\FOW$}}

By combining Lemma~\ref{lem:normalform:terms} with the Gaifman locality
Theorem~\ref{def:GNFforFOW}, we obtain the following normal form for
$\FOW$, which may be of independent interest.
From now on, we assume that whenever $\SC$ contains the integer ring
$(\ZZ,\plus,\mal)$, there is a weight symbol $\weightone\in\Weights$
of arity 1 and type $\ZZ$ such that, in every
$(\sigma,\Weights$)-structure $\A$ that we consider, we have
$\weightone(a)=1\in\ZZ$ for all $a\in A$.

\begin{thm}[cl-Normalform]
\label{cor:normalformFO}\label{thm:clnormalform}
Let $\SC$ contain the integer ring $(\ZZ,\plus,\mal)$.
Every formula $\phi(\ov{x})$ of $\FOWS$
is equivalent to a Boolean
combination of
 $\FOWS$-formulas $\psi(\ov{x})$ that are local
 around their free variables $\ov{x}$,
of local aggregation sentences in $\FOWS$,
and of
 statements of the form ``$g\geq 1$'', for a ground cl-term $g$ of
 type $\ZZ$.

Furthermore, there is an algorithm which transforms an input
$\FOWS$-formula $\phi(\ov{x})$ into an equivalent such formula
$\phi'(\ov{x})$ and outputs the radius of
each ground cl-term in $\phi'$ as well as a number $r$ such that every local
formula in $\phi'$ is $r$-local.
\end{thm}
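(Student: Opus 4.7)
The plan is to invoke the Gaifman Normal Form for $\FOW$ (Theorem~\ref{thm:GaifmanForFOW}) on $\phi(\ov{x})$, which yields an equivalent $\FOWS$-formula $\gamma(\ov{x})$ that is a Boolean combination of three kinds of ingredients: local $\FOWS$-formulas (around their free variables), basic-local sentences, and local aggregation sentences. The first and third kind already fit the target form verbatim, so the task reduces to replacing each basic-local sentence by an equivalent statement of the form $g \geq 1$, where $g$ is a ground cl-term of type $\ZZ$.

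For this, I would fix a basic-local sentence $\chi \deff \exists x_1\cdots\exists x_\ell\,\phi_{\textup{scat}}(\ov{x})$, where $\phi_{\textup{scat}}(\ov{x}) \deff \big(\Und_{1\leq i<j\leq\ell}\dist(x_i,x_j)>2r\big) \und \big(\Und_{i=1}^\ell \lambda(x_i)\big)$ and $\lambda$ is an $r$-local $\FOWS$-formula. A short check confirms that $\phi_{\textup{scat}}$ is itself $r$-local around $\ov{x}$: each $\lambda(x_i)$ is $r$-local around $x_i$, and each subformula $\dist(x_i,x_j)\leq 2r$ is $r$-local around $\{x_i,x_j\}$ because any witnessing path of length $\leq 2r$ stays inside the $r$-ball around $\{x_i,x_j\}$ by a midpoint argument, so its negation $\dist(x_i,x_j)>2r$ is also $r$-local. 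Using the unit weight $\weightone$ of type $\ZZ$ guaranteed by the standing assumption, I would then form the $\Weights$-product $\pweight \deff \weightone(x_1)\mal\cdots\mal\weightone(x_\ell)$ and the $\SC$-term $u \deff \Ssum{\pweight}{\phi_{\textup{scat}}(\ov{x})}$ of type $\ZZ$; its value in any $(\sigma,\Weights)$-structure $\A$ is exactly the number of ordered $\ell$-tuples $\ov{a}\in A^\ell$ witnessing $\chi$ in $\A$, so $\chi$ is equivalent to $u\geq 1$. Finally, I would invoke Lemma~\ref{lem:normalform:terms} with $k=\ell$, $\psi=\phi_{\textup{scat}}$ and $m=0$ to convert $u$ into an equivalent ground cl-term $\hat{u}$ of radius $\leq r$ and width $\leq \ell$, yielding $\chi \equiv (\hat{u}\geq 1)$.

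For the algorithm, I would pipeline the algorithm of Theorem~\ref{thm:GaifmanForFOW} with that of Lemma~\ref{lem:normalform:terms}: the former outputs $\gamma$ together with the locality radii of its local pieces and basic-local/local aggregation sentences, and for each basic-local sentence the latter produces $\hat{u}$, whose radius is inherited from the radius of that sentence. The global locality parameter $r$ over the surviving local formulas and the radii of all constructed ground cl-terms are accumulated during the construction and reported. I expect the only genuinely delicate step to be verifying the $r$-locality of $\phi_{\textup{scat}}$, since this is the hypothesis required to apply Lemma~\ref{lem:normalform:terms}; everything else is a mechanical assembly of existing tools. Throughout, ``$g\geq 1$'' is to be understood as $\Pred_{\geq 1}(g)$ for a unary predicate $\Pred_{\geq 1}\in\PP$ of type $\ZZ$ with semantics $\setc{z\in\ZZ}{z\geq 1}$, available because $\SC$ contains $(\ZZ,\plus,\mal)$.
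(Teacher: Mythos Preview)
Your proposal is correct and follows essentially the same route as the paper: apply the Gaifman normal form for $\FOW$, then for each basic-local sentence $\chi$ form the counting term $\Ssum{\weightone(x_1)\mal\cdots\mal\weightone(x_\ell)}{\phi_{\textup{scat}}}$ and use Lemma~\ref{lem:normalform:terms} to turn it into a ground cl-term $\hat u$ with $\chi \equiv (\hat u \ge 1)$. The paper's proof is terser on the $r$-locality of $\phi_{\textup{scat}}$ (it simply asserts it), whereas you spell out the midpoint argument; otherwise the two proofs coincide.
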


\begin{proof}
By Theorem~\ref{thm:GaifmanForFOW}, it suffices to translate a basic local
sentence into a statement of the form ``$g\geq 1$'' for a ground
cl-term $g$ of type $\ZZ$.

For a basic local sentence
\ $
 \chi \deff
  \exists y_1\cdots \exists y_k\;
  \vartheta(y_1,\ldots,y_k)
$ \ with \
$  \vartheta(y_1,\ldots,y_k) \deff$
\[
   \Und_{1\leq i<j\leq k}\dist^{\sigma}(y_i,y_j)>2r
   \ \ \und \ \
   \Und_{1\leq i\leq k}\psi(y_i),
\]
let $g_\chi$ be the ground term
\, $
  g_\chi \deff
  \Count{\pweight}{\vartheta(y_1,\ldots,y_k)}$ \
for \,$\pweight\deff \weightone(y_1)\mal\cdots\mal\weightone(y_k)$.

Note that $\vartheta(y_1,\ldots,y_k)$ is $r$-local around its free
variables.
Hence, by Lemma~\ref{lem:normalform:terms}, we obtain a ground cl-term
$\hat{g}_\chi$ such that $\hat{g}_{\chi}^{\A}=g_{\chi}^{\A}$ for every
$(\sigma,\Weights)$-structure $\A$.
Furthermore,
\ $
  \A\models\chi
   \iff
  g_{\chi}^{\A} \geq 1
   \iff
  \hat{g}_{\chi}^{\A}\geq 1\,.
$
This completes the proof of Theorem~\ref{cor:normalformFO}.
\end{proof}

We use the notion \emph{cl-normalform} to denote the
formulas $\phi'(\ov{x})$
provided by Theorem~\ref{cor:normalformFO}.
Note that cl-normalforms do
not necessarily belong to $\FOWS$, but can be viewed as
formulas in $\WAPS$, where $\PP$ contains a unary predicate
$\Pred_{\geq 1}$ of type $\ZZ$ with
$\sem{\Pred_{\geq 1}}\deff \NNpos$.
Then, statements of the form
``$g\geq 1$'' can be expressed via $\Pred_{\geq 1}(g)$.

\subsubsection{A decomposition of $\WAun$-expressions}

Our decomposition of $\WAunPS$ utilises
Theorem~\ref{cor:normalformFO} and is
based on an induction on the maximal nesting depth of term constructions of
the form $\Count{p}{\psi}$ (i.e.\ contructions by rule
\eqref{item:countterm} of Definition~\ref{def:WA}) .
We call this nesting depth the \emph{aggregation depth} (for short:
\emph{$\agg$-depth}) $\countr(\xi)$ of a given formula or term
$\xi$. Formally, $\countr(\phi) $ is defined as follows:
\begin{enumerate}[(1)]
\item
  $\countr(\phi) \ \deff \ 0$, \ if $\phi$ is a formula of the form \
  $x_{1}{=}x_2$ \ or \
  $R(x_1,\ldots,x_{\ar(R)})$
\item
  $\countr((s=\weight(\ov{x}))) \deff \ 0$
\item
  $\countr(\nicht\phi)\ \deff \ \countr(\phi)$
  \ and \
  $\countr((\phi\oder\psi))\ \deff \ \max\set{\countr(\phi),\countr(\psi)}$
\item
  $\countr(\exists y\,\phi) \ \deff \ \countr(\phi)$
\item
  $\countr((s=\sum\weight(\ov{y}).\phi)) \ \deff \ \countr(\phi)$
\item
  $\countr(\Pred(t_1,\ldots,t_m)) \ \deff \ \max\set{\countr(t_1),\ldots,\countr(t_m)}$,
\item
  $\countr(s)\ \deff \ 0$, \ for all $s\in S\in\SC$
\item
  $\countr(\weight(\ov{x})) \ \deff \ 0$
\item
  $\countr((t_1\ast t_2)) \ \deff \
  \max\set{\countr(t_1),\countr(t_2)}$,
   \ for $\ast\in\set{\plus,\minus,\mal}$
\item
  $\countr(\Count{p}{\phi}) \ \deff \ \countr(\phi)+1$.
\end{enumerate}

The base case of our decomposition of $\WAunPS$ is
provided by the following lemma. The proof utilises
Theorem~\ref{cor:normalformFO}.

\begin{lem}\label{lem:normalform:countrOne}
Let $\SC$ contain the integer ring $(\ZZ,\plus,\mal)$.
Let $\phi$ be a $\WAunPS$-formula of the form
$\P(t_1,\ldots,t_m)$ with $\P\in \Ps$,
$m=\ar(\P)$, and where $t_1,\ldots,t_m$ are $\SC$-terms of
$\agg$-depth at most $1$.
Then, $\phi$ is equivalent to a Boolean combination of
\begin{enumerate}[(i)]
\item
 formulas of the form $\P(t'_1,\ldots,t'_{m})$, for cl-terms
 $t'_1,\ldots,t'_{m}$ with $\free(t'_i)=\free(t_i)$ for all
 $i\in[m]$,
\item
 local aggregation sentences in $\FOWS$, and
\item
 statements of the form ``$g\geq 1$'' for ground cl-terms $g$ of type $\ZZ$.
\end{enumerate}
Also, there is an algorithm which transforms an input formula
$\phi$ into such a Boolean combination $\phi'$ and which
outputs the radius of each cl-term and each local formula in $\phi'$.
\end{lem}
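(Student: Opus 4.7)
The plan is to reduce the statement to Theorem~\ref{cor:normalformFO} (cl-Normalform) plus Lemma~\ref{lem:normalform:terms2}, applied to the count subterms of the $t_i$'s. First I would observe that, since each $t_i$ has $\agg$-depth at most $1$, the term $t_i$ is an arithmetic combination (via rules \eqref{item:constterm}--\eqref{item:plustimesterm}) of constants, weight applications $\weight(\ov{x})$, and count subterms $u_{i,\ell} \deff \Count{p_{i,\ell}}{\psi_{i,\ell}}$, where each $\psi_{i,\ell}$ has $\agg$-depth $0$ and hence belongs to $\FOWS$ (it uses no count-term constructions, but may still contain predicate applications $\P(\ldots)$ whose arguments are cl-term-free terms).

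Next I would apply Theorem~\ref{cor:normalformFO} to each $\psi_{i,\ell}$ to rewrite it as a Boolean combination of (a) $\FOWS$-formulas that are local around $\free(\psi_{i,\ell})$, (b) local aggregation sentences in $\FOWS$, and (c) statements ``$g\geq 1$'' for ground cl-terms $g$ of type $\ZZ$. Collecting all sentences arising in (b) and (c) across all $(i,\ell)$ into a single finite list $\chi_1,\ldots,\chi_s$, and setting $r$ to the maximum locality radius appearing in (a), each $\psi_{i,\ell}$ is now a Boolean combination of the $\chi_j$'s and of $\FOWS$-formulas that are $r$-local around their free variables. This puts each $u_{i,\ell}$ into the exact shape required by Lemma~\ref{lem:normalform:terms2}, so that for every $J\subseteq[s]$ I obtain a cl-term $\hat u_{i,\ell,J}$ (of radius $\leq r$ and width bounded by the arity of $p_{i,\ell}$) with $\hat u_{i,\ell,J}^{\A}[\ov a]=u_{i,\ell}^{\A}[\ov a]$ whenever $\A\models\chi_J\deff \Und_{j\in J}\chi_j \und \Und_{j\notin J}\nicht\chi_j$, and with the same free variables as $u_{i,\ell}$.

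Substituting $\hat u_{i,\ell,J}$ for $u_{i,\ell}$ throughout the arithmetic skeleton of $t_i$ yields a term $t'_{i,J}$. Because cl-terms are closed under the rules \eqref{item:constterm}--\eqref{item:plustimesterm} by Definition~\ref{def:cl-term}, and constants and weight applications already qualify as cl-terms, $t'_{i,J}$ is itself a cl-term, and the substitution preserves free variables, so $\free(t'_{i,J})=\free(t_i)$ and $t'_{i,J}{}^{\A}[\ov a]=t_i^{\A}[\ov a]$ on every $\A\models\chi_J$. Since the $\chi_J$ partition all $(\sigma,\Weights)$-structures, the original formula $\phi=\P(t_1,\ldots,t_m)$ is equivalent to
\[
 \phi' \ \deff \ \Oder_{J\subseteq[s]}\Big(\ \chi_J\ \und\ \P\bigl(t'_{1,J},\ldots,t'_{m,J}\bigr)\ \Big),
\]
which is a Boolean combination of formulas of types (i)--(iii). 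All radii are produced by the invoked lemmas, and the whole construction is effective, giving the algorithmic part of the claim.

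The main obstacle is a bookkeeping one rather than a conceptual one: ensuring that the single list $\chi_1,\ldots,\chi_s$ and the single radius $r$ can be used uniformly for every count subterm at once, and that the cl-terms returned by Lemma~\ref{lem:normalform:terms2} have exactly (not just within) the free variables of the count subterm they replace, so that the substitution back into the arithmetic structure of $t_i$ still respects the $(Q)_1$ restriction ($|\free(t_1)\cup\cdots\cup\free(t_m)|\leq 1$) and yields a well-formed $\WAunPS$-formula of the desired shape. Both points follow immediately from the free-variable preservation built into Lemmas~\ref{lem:normalform:terms} and~\ref{lem:normalform:terms2} and from the trivial observation that adding irrelevant sentences to the Boolean combination describing a given $\psi_{i,\ell}$ does not change it.
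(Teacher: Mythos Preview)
Your proposal is correct and follows essentially the same approach as the paper: identify the count subterms inside the $t_i$, apply the cl-Normalform theorem to the $\FOWS$-formulas under each count, collect the resulting sentences of types (ii)--(iii) into a single list $\chi_1,\ldots,\chi_s$, invoke Lemma~\ref{lem:normalform:terms2} to obtain cl-terms for each $J\subseteq[s]$, substitute them back into the arithmetic skeleton of the $t_i$, and take the disjunction of $\chi_J\wedge\P(t'_{1,J},\ldots,t'_{m,J})$ over all $J$. The only minor difference is that the paper explicitly notes one may need to \emph{pad} the returned cl-terms to force $\free(t'_{i,J})=\free(t_i)$ exactly (rather than $\subseteq$), which you subsume under ``free-variable preservation''; this is a triviality either way.
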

\begin{proof}
From Definition~\ref{def:WAun}, we know that either
$\free(\phi)=\emptyset$ or $\free(\phi)=\set{x}$ holds for a variable $x$.
Furthermore, we know that
for every $i\in [m]$, the
$\SC$-term $t_i$ is built by using rules \eqref{item:constterm}--\eqref{item:plustimesterm}
and
$\SC$-terms $\theta'$ of the form~$\Count{\pweight}{\theta}$,
for a $\Weights$-product $\pweight$
such that $\free(\theta)\setminus\varsof(\pweight)\subseteq\set{x}$.
Let $\Theta'$ be the set of all these $\SC$-terms $\theta'$ and let
$\Theta$ be the set of all the according formulas $\theta$.

By assumption, we have $\countr(\phi)\leq 1$. Therefore, every
$\theta\in\Theta$ has $\agg$-depth 0.
Thus, each such $\theta$ is an
$\FOWS$-formula.
By Theorem~\ref{cor:normalformFO}, for each $\theta$ in $\Theta$, we
obtain an equivalent formula $\phi^{(\theta)}$ in cl-normalform.
Let $\Phi$ be the set of all these $\phi^{(\theta)}$.

For each $\theta$ in $\Theta$, the formula
$\phi^{(\theta)}$ is a Boolean
combination of (a) $\FOWS$-formulas that are local around the free variables
of $\theta$, and (b) local aggregation sentences in $\FOWS$, and (c) statements of
the form ``$g\geq 1$'' for a ground cl-term $g$ of type $\ZZ$.

Let $\chi_1,\ldots,\chi_s$ be a
list of all statements of the forms (b) or (c), such that each formula
in $\Phi$ is a Boolean combination of statements in
 $\set{\chi_1,\ldots,\chi_s}$ and of $\FOWS$-formulas that are
 local around their free variables.
For every $J\subseteq [s]$ let \ $\chi_J \;:=\; \Und_{j\in J}\chi_j
\und \Und_{j\in [s]\setminus J}\nicht\chi_j$.

Let $r\in\NN$ be such that each of the local $\FOWS$-formulas that occur in
a formula in $\Phi$ is $r$-local around its free variables.
For each $\theta'$ in $\Theta'$ of the form $\Count{\pweight}{\theta}$, we
apply Lemma~\ref{lem:normalform:terms2} to the term
\[
  t^{(\theta')}
  \ \ \deff \ \
  \Count{\pweight}{\phi^{(\theta)}}
\]
and obtain for every $J\subseteq [s]$ a cl-term
$\hat{t}^{(\theta')}_J$
for which the following is true:

\begin{itemize}
\item
If $\free(\theta')=\emptyset$, then
\ $(\theta')^{\A} = (\hat{t}^{(\theta')}_J)^{\A}$ \
for every $(\sigma,\Weights)$-structure $\A$ with $\A\models\chi_J$.

\item
If $\free(\theta')=\set{x}$, then
\ $(\theta')^{\A}[a] = (\hat{t}^{(\theta')}_J)^{\A}[a]$ \
for every $(\sigma,\Weights)$-structure $\A$ with $\A\models\chi_J$ and every
$a\in A$.
\end{itemize}

\noindent
Thus, for each $J\subseteq [s]$, we have
\[
  \big(\chi_J \ \und \ \P(t_1,\ldots,t_m)\big)
  \ \ \equiv\ \
  \big(\chi_J \ \und \ \P(t_{1,J},\ldots,t_{m,J})\big),
\]
where, for every $i\in [m]$, we let $t_{i,J}$ be the
cl-term obtained from $t_i$ by replacing
each occurrence of a term $\theta'\in\Theta'$ with the
term $\hat{t}^{(\theta')}_J$.
In summary, we obtain the following:
\[
\begin{array}{llllll}
  \phi
& \ \ = \ \
& \P(t_1,\ldots,t_m)\
& \ \  \equiv \ \
& \displaystyle\Oder_{J\subseteq [s]} \big(\
    \chi_J \ \und \ \P(t_1,\ldots,t_m)
  \ \big)
\smallskip\\
&
&
& \ \ \equiv \ \
&  \displaystyle\Oder_{J\subseteq [s]} \big(\
    \chi_J \ \und \ \P(t_{1,J},\ldots,t_{m,J})
  \ \big)
&  =: \ \ \phi'\,.
\end{array}
\]
The formula $\chi_J$ is a Boolean combination of local aggregation
sentences in $\FOWS$ and
 of statements of the
form ``$g\geq 1$'' for ground cl-terms $g$ of type $\ZZ$.
Furthermore, all terms $t_{i,J}$ are cl-terms with
$\free(t_{i,J})\subseteq\free(t_i)$, and we can easily modify them to
achieve that $\free(t_{i,J})=\free(t_i)$.
Thus, the proof of Lemma~\ref{lem:normalform:countrOne} is complete.
\end{proof}

\noindent
We are now ready for the decomposition
theorem for $\WAun$, which can be viewed as a generalisation of the
decomposition theorem for $\FOunC$ provided in \cite{GroheSchweikardt_FOunC}.

\begin{thm}[Decomposition of $\WAun$]\label{thm:normalformWAun}
Let $\SC$ contain the integer ring $(\ZZ,\plus,\mal)$.
Let $z$ be a fixed variable in $\Vars$.
For every $d\in\NN$ and every $\WAunPS$-formula
$\phi(\ov{x})$ of $\agg$-depth $\countr(\phi)=d$, there exists a sequence
$(L_1\ldots,L_{d+1},\phi')$ with the following properties.
\begin{enumerate}[(I)]
\item
  $L_i=(\tau_i,\iota_i)$, for every $i\in \set{1,\ldots,d{+}1}$, where
\begin{enumerate}[$\bullet$]
\item
 $\tau_i$ is a finite set of relation symbols of arity $\leq 1$ that do not belong
 to $\sigma_{i-1}\deff \sigma\cup\bigcup_{j<i}\tau_j$, and
\item
 $\iota_i$ is a mapping that associates with every symbol $R\in\tau_i$
 a formula $\iota_i(R)$
 \begin{enumerate}[(i)]
  \item
   of the form
   $\P(t_1,\ldots,t_{m})$, where $\P\in\Ps$, $m=\ar(\P)$, and
   $t_1,\ldots,t_m$ are cl-terms
   of signature $\sigma_{i-1}$,
   such that $\free(t_j)\subseteq\set{z}$ for each $j\in[m]$, or
  \item
   that is a local aggregation sentence in
   $\FOWSR{\PP}{\sigma_{i-1},\SC,\Weights}$ or a statement
   of the form ``$g\geq 1$'' for a ground cl-term $g$ of
   signature $\sigma_{i-1}$ and of type $\ZZ$.
  \end{enumerate}
  If $R$ has arity 0, then $\iota_i(R)$ has no free variable.
  If $R$ has arity 1, then $z$ is the unique free variable of
   $\iota_i(R)$ (thus, $\iota_i(R)$ is of the form (i)).
\end{enumerate}
\item
$\phi'(\ov{x})$ is a Boolean combination of
     \textup{(A)}~$\FOWSR{\PP}{\sigma_{d+1},\SC,\Weights}$-formulas $\psi(\ov{x})$ that are local around
     their free variables $\ov{x}$, where
     $\sigma_{d+1}\deff\sigma\cup\bigcup_{1\leq i\leq d{+}1}\tau_i$, and
     \textup{(B)}~statements of the form $R()$ where $R$ is a 0-ary relation
     symbol in $\sigma_{d+1}$.
In case that $\free(\phi)=\emptyset$,
$\phi'$ only contains statements of
the latter form.

\item
For every $(\sigma,\Weights)$-interpretation $\I=(\A,\beta)$, we have
$\I\models\phi$ iff $\I_{d+1}\models\phi'$,
where $\I_{d+1}=(\A_{d+1},\beta)$, and $\A_{d+1}$ is the
$\sigma_{d+1}$-expansion of $\A$ defined as follows: \
$\A_0\deff \A$, and for every $i\in[d{+}1]$, $\A_i$ is the
$\sigma_i$-expansion of $\A_{i-1}$, where
for every unary $R\in\tau_i$, we have
\ $
  R^{\A_i} \deff
     \setc{\, a\in A\, }{\, (\A_{i-1},a)\models \iota_i(R)\, }
$ \
and for every 0-ary $R\in\tau_i$ we have
$R^{\A_i}\deff\set{\,\emptytuple\,}$ if $\A_{i-1}\models\iota_i(R)$, and
$R^{\A_i} \deff \emptyset$ if $\A_{i-1}\not\models\iota_i(R)$.
\end{enumerate}
Moreover, there is an algorithm which constructs such a sequence
$D=(L_1,\ldots,L_{d+1},\phi')$ for an input formula $\phi$
and outputs the radius of each cl-term in $D$ as well as a
number $r$ such that every local formula in $\phi'$ is $r$-local
around its free variables.
\end{thm}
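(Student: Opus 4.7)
The plan is to prove the theorem by induction on the aggregation depth $d = \countr(\phi)$.

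For the base case $d = 0$, the formula $\phi$ is an $\FOWSR{\PP}{\sigma,\SC,\Weights}$-formula, so Theorem~\ref{cor:normalformFO} converts it into an equivalent cl-normalform: a Boolean combination of local $\FOW$-formulas in $\ov{x}$, local aggregation sentences, and statements of the form $g \geq 1$ for ground cl-terms $g$ of type $\ZZ$, all over the original signature $\sigma$. For each distinct occurrence of a local aggregation sentence or a $g \geq 1$ statement I introduce a fresh 0-ary relation symbol $R$ into $\tau_1$, defining $\iota_1(R)$ to be that sentence (this matches form (ii) of condition (I)). The output $\phi'$ is obtained by replacing each such sentence in the Boolean combination by the atom $R()$, yielding the form required by (II); property (III) then reduces to the definitions of $\A_1$ and $\iota_1$. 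When $\free(\phi) = \emptyset$, every atom in the cl-normalform must itself be a sentence, so only $R()$ atoms survive in $\phi'$, matching the last sentence of (II).

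For the inductive step from $d$ to $d+1$, the idea is to peel off the innermost layer of predicate applications. I collect every subformula occurrence $\psi = \P(t_1,\ldots,t_m)$ in $\phi$ with $\max_{j \in [m]} \countr(t_j) \leq 1$; by Definition~\ref{def:WAun}(6)$_1$ such a $\psi$ has at most one free variable, and by assumption aggregation depth at most $1$, so Lemma~\ref{lem:normalform:countrOne} converts each $\psi$ into an equivalent Boolean combination of atoms of three kinds over $\sigma$: (i) $\P(t_1',\ldots,t_m')$ for cl-terms $t_j'$ with $\free(t_j') \subseteq \free(\psi)$, (ii) local aggregation sentences, and (iii) statements ``$g \geq 1$''. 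For each distinct such atom I introduce a fresh symbol into $\tau_1$: a unary symbol (with $z$ as its unique free variable, after renaming $\free(\psi) \mapsto z$) for atoms of type (i) that have a free variable, and a 0-ary symbol otherwise, setting $\iota_1$ to the atom itself, fitting forms (i)--(iii) of condition (I). I obtain $\phi^{(1)}$ over $\sigma_1 = \sigma \cup \tau_1$ by substituting, inside $\phi$, each such $\psi$ with the corresponding Boolean combination of the fresh relation atoms, using the single free variable of $\psi$ (if any) as the argument of any unary symbol. The substitution preserves the set of free variables of every enclosing predicate application, so $\phi^{(1)}$ is a well-formed $\WAunPSR{\PP}{\sigma_1,\SC,\Weights}$-formula respecting restriction (6)$_1$; and since all depth-$1$ aggregations have been hidden inside atomic relation symbols of aggregation depth $0$, we have $\countr(\phi^{(1)}) = d$.

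Applying the induction hypothesis to $\phi^{(1)}$ over $\sigma_1$ yields a sequence $(L_1^\star, \ldots, L_{d+1}^\star, \phi')$ satisfying the theorem with $\phi^{(1)}$ in place of $\phi$. Taking $L_1$ as constructed above and $L_{i+1} \deff L_i^\star$ for $i \in [d+1]$ produces the required length-$(d+2)$ sequence for $\phi$; the successive signatures $\sigma_{i+1}$ of the outer decomposition then coincide with the inductively produced ones. Property (III) follows by composing the equivalences $(\A,\beta) \models \phi \iff (\A_1,\beta) \models \phi^{(1)}$ (by construction of $\tau_1$ and $\iota_1$) and $(\A_1,\beta) \models \phi^{(1)} \iff (\A_{d+2},\beta) \models \phi'$ (by the inductive conclusion applied with initial structure $\A_1$). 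The radius and locality output parameters are obtained by taking the maximum of those contributed by Lemma~\ref{lem:normalform:countrOne} at level $1$ and those produced by the inner induction, which keeps the construction effective.

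The chief obstacle will be the careful syntactic bookkeeping: ensuring that substitutions preserve the $\WAun$-restriction (6)$_1$ on free variables throughout $\phi$; that every atomic piece of the Boolean combination produced by Lemma~\ref{lem:normalform:countrOne} becomes its \emph{own} new symbol in $\tau_1$ (since $\iota_i(R)$ must be a single atom, not a Boolean combination); and that the cl-terms appearing in $\iota_i$ live precisely over $\sigma_{i-1}$, so that relation symbols added at level $i+1$ are genuinely allowed to reference those added at earlier levels. These points are technical but not conceptually hard once the peeling scheme is set up correctly.
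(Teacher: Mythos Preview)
Your proposal is correct and follows essentially the same approach as the paper: both peel off one layer of aggregation at a time by locating the $\P(t_1,\ldots,t_m)$ subformulas of aggregation depth at most $1$, applying Lemma~\ref{lem:normalform:countrOne} to each, and replacing the resulting atomic pieces by fresh relation symbols of arity~$\leq 1$, until an $\FOW$-formula remains to which Theorem~\ref{cor:normalformFO} applies. The only cosmetic difference is that the paper unrolls this into an explicit iteration over $i=0,\ldots,d$ (building $\phi_0,\phi_1,\ldots,\phi_d$ of decreasing aggregation depth and then handling the final $\FOW$-formula), whereas you phrase the same process as an induction on $d$ with a separate base case; the constructions and the bookkeeping (renaming the free variable to $z$, one fresh symbol per atom, tracking the signatures $\sigma_i$) coincide.
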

\begin{proof}
We proceed by induction on $i$ to construct for all
$i\in [0,d]$
a tuple $L_i=(\tau_i,\iota_i)$ and a $\WAunPSR{\PP}{\sigma_i,\SC,\Weights}$-formula
$\phi_i(\ov{x})$ of $\agg$-depth $(d{-}i)$, such that
for every $(\sigma,\Weights)$-interpretation $\I=(\A,\beta)$ and the
interpretation $\I_i\deff(\A_i,\beta)$,
we have \ $\I\models\phi \iff \I_i\models\phi_i$.

For $i=0$, we are done by letting $\tau_0\deff\emptyset$, $\sigma_0\deff\sigma$, $\phi_0\deff\phi$,
and $\iota_0$ be the mapping with empty domain.
Now assume that for some $i<d$, we have already constructed
$L_{i}=(\tau_{i},\iota_{i})$ and
$\phi_{i}$.
To construct $L_{i+1}=(\tau_{i+1},\iota_{i+1})$ and $\phi_{i+1}$, we proceed as follows.

Let $\Pi$ be the set of all $\WAunPSR{\PP}{\sigma_{i},\SC,\Weights}$-formulas of
$\agg$-depth $\leq 1$ of the form $\P(t_1,\ldots,t_m)$, for $\P\in\Ps$,
that occur in $\phi_{i}$.

Now consider an arbitrary formula $\pi$ in $\Pi$ of the
form $\P(t_1,\ldots,t_m)$.
From Definition~\ref{def:WAun}, we know that there is a variable $y$
such that $\free(t_j)\subseteq\set{y}$ for every $j\in[m]$.
By Lemma~\ref{lem:normalform:countrOne}, $\pi$ is
equivalent to a Boolean combination $\pi'$ of
\begin{enumerate}[(a)]
\item
 formulas of the form $\P(t'_1,\ldots,t'_{m})$, for cl-terms
 $t'_1,\ldots,t'_{m}$ of signature $\sigma_{i}$, where $\free(t'_j)=\free(t_j)\subseteq\set{y}$ for each
 $j\in[m]$,
\item
 statements of the form ``$g\geq 1$'' for ground cl-terms $g$ of
 signature $\sigma_i$, and
\item
 local aggregation sentences in $\FOWSR{\PP}{\sigma_i,\SC,\Weights}$.
\end{enumerate}
For each statement $\chi$ of the form (b) or (c), we include into
$\tau_{i+1}$ a 0-ary relation symbol $R_\chi$,
we replace each occurrence of $\chi$
in $\pi'$ with the new atomic formula $R_\chi()$,
and we let
$\iota_{i+1}(R_\chi)\deff \chi$.
For each statement $\chi$ in $\pi$ of the form (a), we proceed as follows.
If $\free(\chi)=\emptyset$, then we include into
$\tau_{i+1}$ a $0$-ary relation symbol $R_\chi$,
we replace each occurrence of $\chi$
in $\pi'$ with the new atomic formula $R_\chi()$, and
we let $\iota_{i+1}(R_\chi)\deff \chi$.
If $\free(\chi)=\set{y}$, then we include into $\tau_{i+1}$ a unary
relation symbol $R_\chi$,
we replace each
occurrence of $\chi$ in $\pi'$ with the new atomic formula $R_\chi(y)$,
and we let $\iota_{i+1}(R_\chi)$ be the formula
obtained from $\chi$ by consistently replacing every free occurrence
of the variable $y$ with the variable $z$.
We write $\pi''$ for the resulting formula $\pi'$.

Clearly, $\pi''$ is of signature
$\sigma_{i+1}\deff \sigma_i\cup\tau_i$, it has $\agg$-depth~0, and
for every $\sigma$-interpretation $\I=(\A,\beta)$
and $\I_i\deff (\A_i,\beta)$ and $\I_{i+1}\deff (\A_{i+1},\beta)$,
we have:
\ $\I_i\models\pi$ $\iff$ $\I_{i+1}\models\pi''$.

The induction step is completed by letting $\phi_{i+1}$ be
the formula obtained from $\phi_i$ by replacing every occurrence of a
formula $\pi\in\Pi$ with the formula $\pi''$.
It can easily be verified that $\phi_{i+1}$ is an
$\WAunPSR{\Ps}{\sigma_{i+1},\SC,\Weights}$-formula of
$\agg$-depth $\countr(\phi_i)-1 = ((d{-}i){-}1)= (d{-}(i{+}1))$ and that
\ $\I_i\models\phi_i\iff \I_{i+1}\models\phi_{i+1}$.

By the above induction, we have constructed $L_1,\ldots,L_d$ and an
$\WAunPSR{\Ps}{\sigma_d,\SC,\Weights}$-formula $\phi_d$ of $\agg$-depth 0.
Hence, $\phi_d$ is an $\FOWSR{\PP}{\sigma_d,\SC,\Weights}$-formula.
Theorem~\ref{cor:normalformFO} yields an equivalent formula
$\tilde{\phi}$ of signature $\sigma_d$ in cl-normalform.
That is, $\tilde{\phi}$ is a Boolean combination of
\begin{enumerate}[(A)]
 \item $\FOWSR{\PP}{\sigma_d,\SC,\Weights}$-formulas that are local around their free
   variables $\ov{x}$,
 \item local aggregation sentences in
   $\FOWSR{\PP}{\sigma_d,\SC,\Weights}$, and
 \item statements of the form ``$g\geq 1$'', for a ground cl-term $g$
   of type $\ZZ$ and
   of signature $\sigma_d$.
\end{enumerate}
For each statement $\chi$ of the form (B) or (C), we include into
$\tau_{d+1}$ a new relation symbol $R_\chi$ of arity 0,
we replace each occurrence of $\chi$ in $\tilde{\phi}$ with the new
atomic formula $R_\chi()$, and we let
$\iota_{d+1}(R_\chi)\deff \chi$.
Letting $\phi'$ be the resulting formula $\tilde{\phi}$ completes the
proof.
\end{proof}

We call the sequence $(L_1,\ldots,L_{\countr(\xi)+1},\phi')$ that
Theorem~\ref{thm:normalformWAun} provides for a formula
$\phi$ in $\WAunPS$
a \emph{cl-decomposition} of $\phi$.

\subsubsection{Proof of Theorem~\ref{thm:AlgorithmicDecomp}}

By combining Theorem~\ref{thm:normalformWAun} with
Lemmas~\ref{lem:clterm-evaluation} and \ref{lem:normalform:terms},
we can now prove Theorem~\ref{thm:AlgorithmicDecomp}.

\begin{proof}[Proof of Theorem~\ref{thm:AlgorithmicDecomp}] \ \\
Use Theorem~\ref{thm:normalformWAun} to compute a cl-decomposition
$D=(L_1,\ldots,L_{d+1},\phi')$ of $\phi$, for $d\deff \countr(\phi)$.
This formula $\phi'$ is the desired formula.
We let $\sigma_\phi\deff\sigma_{d+1}\deff\sigma\cup\bigcup_{1\leq
  i\leq d{+}1}\tau_i$. We also let $\A^\phi\deff\A_{d+1}$.
To compute $\A^\phi$, we proceed as follows.

Let $\A_0\deff \A$.
For each $i\in[d{+}1]$, compute the $\sigma_i$-expansion of
$\A_{i-1}$.
To achieve this, consider for each $R\in \tau_i$ the formula
$\iota_i(R)$. This formula is of signature $\sigma_{i-1}$ and
(I) of the form $\Pred(t_1,\ldots,t_m)$
for a $\Pred\in\PP$ and cl-terms $t_1,\ldots,t_m$, or
(II) of the form $g\geq 1$ where $g$ is a ground cl-term of type $\ZZ$, or
(III) a local aggregation sentence, i.e.\ of the form $\big( s = \sum
\weight(\ov{y}).\lambda(\ov{y})\big)$ for a local
$\FOWSR{\PP}{\sigma_{i-1},\SC,\Weights}$-formula $\lambda$---and by
Lemma~\ref{lem:normalform:terms},
$\sum\weight(\ov{y}).\lambda(\ov{y})$ is equivalent to a ground cl-term.
Thus, in all three cases,
$\iota_i(R)$ is
a very simple statement that concerns one
or several cl-terms and that involves at most one free variable.
By using Lemma~\ref{lem:clterm-evaluation}, we can
compute in time $|A|\mal d^{\bigOh(1)}$ for each such cl-term $t$ the
values $t^\A[a]$ for all $a\in A$ (resp., the value $t^\A$, if $t$ is
ground).
Then, we combine the values and use a $\Ps$-oracle to check for each
$a\in A$ whether $\iota_i(R)$ is
satisfied by $(\A_{i-1},a)$, and we store the new relation $R^{\A_i}$ accordingly.
\end{proof}

\section{Learning Concepts on Weighted Structures}%
\label{sec: PAC learning}\label{sec:PacLearning}

Throughout this section, fix a
collection \(\SC\) of rings and/or abelian groups,
an $\SC$-predicate collection $(\PP,\ar,\ptype,\sem{{\cdot}})$,
and a finite set \(\Weights\) of weight symbols.

Furthermore, fix numbers $k,\ell\in\NN$.
Let $\LogicL$ be a logic
(e.g.\ $\FO$, $\FOW(\Ps)$, $\WAun(\Ps)$, $\WA(\Ps)$),
let \(\sigma\) be a signature,
and let \(\Phi \subseteq \LogicL[\sigma, \SC, \Weights]\) be a set of formulas
\(\phi(\bar{x}, \bar{y})\) with \(\abs{\bar{x}} = k\) and \(\abs{\bar{y}} = \ell\).
For a \((\sigma, \Weights)\)-structure \(\A\),
we follow the same approach as~\cite{GrienenbergerRitzert_Trees,GroheLoedingRitzert_MSO,GroheRitzert_FO,GroheTuran_Learnability,vanBergerem}
and consider the instance space \(\X = A^k\)
and concepts from the concept class
\[
\C(\Phi, \A, k, \ell) \ \coloneqq \ \bigsetc{ \; \llbracket
 \phi(\bar{x}, \bar{y}) \rrbracket^\A (\bar{x}, \bar{v})\;}{\; \phi
 \in \Phi,\ \bar{v} \in A^\ell \;},
\]
where
$\sem{\phi(\ov{x},\ov{y})}^\A(\ov{x},\ov{v})$ is defined as the mapping from $A^k$
to $\set{0,1}$ that maps $\ov{a}\in A^k$ to $\sem{\phi(\ov{a},\ov{v})}^\A$,
which is $1$ iff $\A\models\phi[\ov{a},\ov{v}]$.
Given a training sequence
\(T = \big((\ov{a}_1, b_1), \dots, (\ov{a}_t, b_t)\big)\)
from \((A^k \times \{0,1\})^t\),
we want to compute a hypothesis
that consists of a formula \(\phi\) and a tuple of parameters \(\bar{v}\)
and is, depending on the approach,
consistent with the training sequence or
probably approximately correct.

Instead of allowing random access to the background structure,
we limit our algorithms to have only \emph{local access}.
That is, an algorithm may only interact with the structure via queries of the form
``Is \(\bar{a} \in R^\A\)?'',
``Return $\weight^\A(\ov{a})$''
and ``Return a list of all neighbours of $a$ in the Gaifman graph of $\A$''.
Hence, in this model, algorithms are required to access new vertices
only via neighbourhood queries of vertices they have already seen.
This enables us to learn a concept from examples
even if the background structure is too large to fit into the main memory.
To obtain a reasonable running time,
we intend to find algorithms that compute a hypothesis
in sublinear time, measured in the size of the background structure.
This local access model has already been studied for relational structures
in~\cite{GroheRitzert_FO,vanBergerem}
for concepts definable in $\FO$ or in \(\FOCNP\).
Modifications of the local access model for strings
and trees have been studied
in~\cite{GrienenbergerRitzert_Trees,GroheLoedingRitzert_MSO}.

In many applications, the same background structure is used
multiple times to learn different concepts.
Hence, similar to the approaches in~\cite{GrienenbergerRitzert_Trees,GroheLoedingRitzert_MSO},
we allow a precomputation step to enrich the background structure with additional information.
That is, instead of learning on a \((\sigma, \Weights)\)-structure \(\A\),
we use an enriched \((\sigma^*, \Weights)\)-structure \(\A^*\),
which has the same universe as $\A$, but
\(\sigma^* \supseteq \sigma\) contains additional relation symbols.
The hypotheses we compute may make use of this additional information and thus,
instead of representing them via formulas from the fixed set \(\Phi\),
we consider a set \(\Phi^*\) of formulas of signature
$\sigma^*$.
These formulas may even belong to a logic $\LogicL^*$ different from
$\LogicL$.
We study the following learning problem.

\begin{problem}[Exact Learning with Precomputation]%
  \label{problem:exact_learning}
  Let \(\Phi \subseteq \LogicL[\sigma, \SC, \Weights]\) and
  \(\Phi^* \subseteq \LogicL^*[\sigma^*, \SC, \Weights]\)
  such that, for every \((\sigma, \Weights)\)-structure \(\A\),
  there is a \((\sigma^*, \Weights)\)-structure \(\A^*\)
  with \(U(\A^*) = U(\A)\)
  that satisfies \(\C(\Phi, \A, k, \ell) \subseteq \C(\Phi^*, \A^*, k, \ell)\),
  i.e.\ every concept that can be defined on \(\A\) using \(\Phi\)
  can also be defined on \(\A^*\) using \(\Phi^*\).
  The task is as follows.
  \begin{description}
    \item[Given] a training sequence
      \(T = \big((\bar{a}_1, b_1), \dots, (\bar{a}_t, b_t)\big) \in (A^k \times \{0,1\})^t\)
      and, for a \((\sigma, \Weights)\)-structure \(\A\),
      local access to the associated \((\sigma^*, \Weights)\)-structure \(\A^*\),
    \item[return] a formula \(\phi^* \in \Phi^*\)
      and a tuple \(\bar{v} \in A^\ell\) of parameters
      such that the hypothesis \(\llbracket \phi^*(\bar{x}, \bar{y}) \rrbracket^{\A^*} (\bar{x}, \bar{v})\)
      is consistent with $T$,
      i.e.\ it maps $\ov{a}_i$ to $b_i$ for every $i\in[t]$.

      The algorithm may reject if there is no consistent classifier using a formula from \(\Phi\) on \(\A\).
    \end{description}
\end{problem}

Next, we examine requirements for \(\Phi\) and \(\Phi^*\)
that help us solve \cref{problem:exact_learning} efficiently.
Following the approach presented in~\cite{GroheRitzert_FO},
to obtain algorithms that run in sublinear time,
we study concepts that can be represented via
a set of \emph{local} formulas \(\Phi\)
with a finite set \(\Phi^*\) of normal forms.
Using Feferman-Vaught decompositions
and the locality of the formulas,
we can then limit the search space for the parameters
to those that are in a certain neighbourhood
of the training sequence.
Recall that \(\Phi\) is a set of formulas
\(\phi(\bar{x}, \bar{y})\) in \(\LogicL[\sigma, \SC, \Weights]\)
with \(\abs{\ov{x}} = k\) and \(\abs{\ov{y}}=\ell\).
In the following, we require \(\Phi\) to have the following property.

\begin{prop}%
  \label{prop: requirements PAC learning}
  There are a signature \(\sigma^*\),
  a logic \(\LogicL^*\),
  an \(r \in \NN\), and a finite set of \(r\)-local formulas
  \(\Phi^* \subseteq \LogicL^*[\sigma^*, \SC, \Weights]\) such that
  the following hold.
  \begin{enumerate}[(1)]
    \item\label{prop: requirements PAC learning: associated structure}
      For every \((\sigma, \Weights)\)-structure \(\A\), there is
      a \((\sigma^*, \Weights)\)-structure \(\A^*\) with \(U(\A^*) = U(\A)\)
      such that, for every \(\phi(\bar{x}, \bar{y}) \in \Phi\),
      there is a \(\phi^*(\bar{x}, \bar{y}) \in \Phi^*\)
      with \(\A \models \phi[\bar{a}, \bar{b}] \iff \A^* \models \phi^*[\bar{a}, \bar{b}]\)
      for all \(\bar{a} \in A^k\), \(\bar{b} \in A^\ell\).
    \item\label{prop: requirements PAC learning: FV}
      Every \(\phi^* \in \Phi^*\) has, for every partition
      $(\ov{z};\ov{z}')$ of the free variables of $\phi^*$,
      a Feferman-Vaught decomposition in \(\Phi^*\) w.r.t.\ $(\ov{z};\ov{z}')$.
    \item\label{prop: requirements PAC learning: Boolean combinations}
      For all $\phi^*_1,\phi^*_2\in\Phi^*$, the set $\Phi^*$ contains formulas
      equivalent to $\nicht\phi^*_1$ and to $(\phi^*_1\oder\phi^*_2)$.
  \end{enumerate}
\end{prop}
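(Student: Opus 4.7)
My plan is to sketch how one would \emph{establish} this property for the natural target class: a finite set $\Phi$ of $\WAunPS$-formulas. Since the rest of Section~\ref{sec:PacLearning} presumably uses this property as a hypothesis for the main learning result, the substantive mathematical work is to verify it for $\WAun$.

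For condition~(1), I would invoke the Localisation Theorem for $\WAun$ (Theorem~\ref{thm:AlgorithmicDecomp}) on each $\phi \in \Phi$. For each $\phi$, this produces a signature expansion $\sigma_\phi \supseteq \sigma$ by unary and $0$-ary relation symbols, a precomputation algorithm building $\A^\phi$ from $\A$ in time $|A|\cdot d^{\bigO(1)}$, and an equivalent $\FOWSR{\PP}{\sigma_\phi,\SC,\Weights}$-formula $\phi'$ that is a Boolean combination of local formulas and atoms $R()$ for $0$-ary $R$. Taking $\sigma^* \deff \bigcup_{\phi \in \Phi}\sigma_\phi$ (finite since $\Phi$ is), $\LogicL^* \deff \FOW(\Ps)$, and $r$ as the maximum locality radius output by the theorem does the job, provided the $0$-ary atoms are treated as trivially $0$-local formulas.

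For condition~(2), starting from the normal forms $\phi'$ above, I would iteratively close $\Phi^*$ under Feferman-Vaught decompositions: for each $\phi^* \in \Phi^*$ and each partition of its free variables, apply Theorem~\ref{thm:FVforFOW} to obtain the decomposition pairs $(\alpha,\beta)$ and adjoin the components to $\Phi^*$. Theorem~\ref{thm:FVforFOW} ensures the components stay within the class $\LogicL_{\phi^*}$ of $\FOWS$-formulas of bounded quantifier rank that reuse only the predicates, rings, and $\SC$-terms already present in $\phi^*$, so the closure is controlled.

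The main obstacle is condition~(3) combined with the finiteness requirement on $\Phi^*$. The plan is to work modulo logical equivalence on finite weighted structures and argue that the class of $r$-local $\FOWS$-formulas of bounded quantifier rank over a fixed finite vocabulary and a fixed finite set of predicates/constants has only finitely many equivalence classes, then to pick one representative per class when closing under negation and disjunction. This finiteness should follow from an Ehrenfeucht-Fraisse/Hintikka-style normal-form argument for $\FOW$, derivable by iterating the Feferman-Vaught construction of Theorem~\ref{thm:FVforFOW} until a fixed point of types is reached, or directly by bounding the number of isomorphism types of $r$-neighbourhoods realisable in the relevant signature.
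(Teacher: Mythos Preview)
Your overall strategy---Localisation Theorem for condition~(1), iterated Feferman--Vaught closure for condition~(2), Boolean closure for condition~(3)---is exactly what the paper does in Remark~\ref{remark:propertyForLearning}. Two points deserve tightening, though.

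First, the finiteness argument. Your primary suggestion (``iterate FV until a fixed point'') is the paper's idea, but the paper makes the termination concrete: applying Theorem~\ref{thm:FVforFOW} to $\phi$ w.r.t.\ a partition $(\ov{x};\ov{y})$ only produces genuinely new formulas when \emph{both} $\{x_1,\dots,x_k\}$ and $\{y_1,\dots,y_\ell\}$ are proper subsets of $\free(\phi)$, so each non-trivial application strictly reduces the number of free variables of the resulting components. Hence the FV-only closure stabilises in at most $k{+}\ell$ rounds, yielding a finite seed set. One then observes that FV decompositions of Boolean combinations of formulas from a finite set can themselves be chosen as Boolean combinations of FV components of that set; so closing under Boolean combinations once at the end suffices and stays finite up to equivalence. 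You should also explicitly take the $r$-localisations $\alpha^{(r)},\beta^{(r)}$ of the FV outputs (Theorem~\ref{thm:FVforFOW} does not itself produce $r$-local formulas), as the paper does, to keep every member of $\Phi^*$ $r$-local.

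Second, your alternative route---``bounding the number of isomorphism types of $r$-neighbourhoods''---does not work here. Weighted structures carry values from potentially infinite rings or groups in $\SC$, so even $r$-balls of bounded degree have infinitely many isomorphism types. Finiteness of $\Phi^*$ cannot be derived from a finite-type argument at the level of structures; it has to come from the syntactic control that Theorem~\ref{thm:FVforFOW} provides (bounded quantifier rank, reuse of only the $\SC$-terms already present plus constants from the \emph{finite} $S\in\SC$ occurring in $\phi$). An EF/Hintikka argument could in principle recover this, but it would have to be phrased for $\FOW$ with a fixed finite stock of constants and predicates, which is essentially rederiving the bound already packaged in $\LogicL_\phi$.
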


This property suffices to solve \cref{problem:exact_learning}:

\begin{thm}[Exact Learning with Precomputation]%
  \label{thm: exact learning}
  There is an algorithm that solves
  \cref{problem:exact_learning}
  with local access to a structure \(\A^*\)
  associated with a structure \(\A\)
  in time \(f_{\Phi^*}(\A^*) \cdot \big(\log n + d + t\big)^{\bigO(1)}\),
  where \(\A\), \(\A^*\), \(\Phi\), and \(\Phi^*\) are as described in \cref{prop: requirements PAC learning},
  \(t\) is the number of training examples,
  \(n\) and \(d\) are the size and the degree
  of \(\A^*\),
  and \(f_{\Phi^*}(\A^*)\) is an upper bound on the
  time complexity of model checking
  for formulas in \(\Phi^*\) on \(\A^*\).
\end{thm}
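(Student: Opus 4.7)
My plan is to iterate over every formula $\phi^* \in \Phi^*$—a constant number of candidates, since $\Phi^*$ is finite by hypothesis—and, for each such $\phi^*$, to search for parameters $\bar{v} \in A^\ell$ rendering the induced hypothesis consistent with the training sequence $T$. The guiding idea is that, by $r$-locality of $\phi^*$, the truth value of $\A^* \models \phi^*[\bar{a}_i,\bar{v}]$ depends only on the $r$-neighbourhood of $\bar{a}_i\bar{v}$; combined with the Feferman-Vaught decomposition and the Boolean closure granted by \cref{prop: requirements PAC learning}, this lets us replace the global search for $\bar{v}$ by a separated ``local data plus type'' problem whose combinatorics is controlled by $t$ and $d$.

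First, using local access, I will materialise all $r$-neighbourhoods of the training examples at cost $\bigO(t \cdot d^{\bigO(r)})$, yielding a small local substructure $N$. Next, I enumerate configurations of $\bar{v}$: for each of the $2^\ell$ partitions of the $\ell$ positions into ``close'' (lying within $N$) and ``far'' (at distance more than $2r{+}1$ from every $\bar{a}_i$), I try all $|N|^\ell \leq (t\cdot d^{\bigO(r)})^\ell$ candidate tuples $\bar{v}_{\text{close}}$ for the close positions. For each candidate I must decide whether some $\bar{v}_{\text{far}}$ completes it to a consistent hypothesis and, if so, produce one.

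To handle the far positions, I apply the Feferman-Vaught property to $\phi^*$ with respect to the partition $(\bar{x}\bar{y}_{\text{close}};\bar{y}_{\text{far}})$, obtaining a finite set $\Delta$ of pairs $(\alpha,\beta)$ of $\Phi^*$-formulas. Because $\bar{v}_{\text{far}}$ is by assumption far from $\bar{a}_i\bar{v}_{\text{close}}$, a localisation argument in the spirit of \cref{cor:FV} yields $\A^* \models \phi^*[\bar{a}_i,\bar{v}] \iff$ some $(\alpha,\beta)\in\Delta$ satisfies $\A^* \models \alpha[\bar{a}_i,\bar{v}_{\text{close}}]$ and $\A^* \models \beta[\bar{v}_{\text{far}}]$. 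I evaluate every $\alpha$ locally in $N$ (all its free variables lie there), obtaining for each training example $i$ the set $\Delta_i \subseteq \Delta$ of surviving pairs, and rewrite the consistency condition on $\bar{v}_{\text{far}}$ as
\[
  \psi^*(\bar{y}_{\text{far}}) \;\deff\; \Big(\bigwedge_{i:\,b_i=1}\bigvee_{(\alpha,\beta)\in\Delta_i}\beta(\bar{y}_{\text{far}})\Big) \;\land\; \Big(\bigwedge_{i:\,b_i=0}\lnot\bigvee_{(\alpha,\beta)\in\Delta_i}\beta(\bar{y}_{\text{far}})\Big).
\]
By \cref{prop: requirements PAC learning}\,(\ref{prop: requirements PAC learning: Boolean combinations}), $\psi^*$ is equivalent to a formula in $\Phi^*$, so I can invoke the model-checking oracle at cost $f_{\Phi^*}(\A^*)$ to test its satisfiability.

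The hard part will be turning a satisfiability test into an explicit witness $\bar{v}_{\text{far}}$. My plan is a self-reducibility loop: peel off one free variable $y_j$ at a time by applying a further Feferman-Vaught decomposition that isolates $y_j$, enumerate the finitely many possible local $r$-types for $y_j$ (each testable locally in time $d^{\bigO(r)}$), and for each type invoke the oracle on the residual tail formula, which remains in $\Phi^*$ by the closure property. The recursion has depth $\ell = \bigO(1)$, and each layer contributes at most a $d^{\bigO(r)}$ factor. Aggregating over the outer enumerations—$|\Phi^*|=\bigO(1)$ formulas, $2^\ell$ close/far partitions, and $(t d^{\bigO(r)})^\ell$ assignments for $\bar{v}_{\text{close}}$—together with the inner oracle calls, yields the desired bound $f_{\Phi^*}(\A^*)\cdot(\log n + d + t)^{\bigO(1)}$, where the $\log n$ factor accounts for representing and comparing elements of $A$ while reading input and reporting output.
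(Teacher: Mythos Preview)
Your overall architecture---enumerate formulas, split parameters into ``close'' and ``far'', brute-force the close part and use Feferman--Vaught for the far part---is the right shape, but the treatment of the far parameters has a genuine gap.

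You write that once $\psi^*(\bar{y}_{\text{far}})\in\Phi^*$ is built, you can ``invoke the model-checking oracle at cost $f_{\Phi^*}(\A^*)$ to test its satisfiability''. But $f_{\Phi^*}(\A^*)$ bounds \emph{model checking}: given a formula and a concrete assignment, return the truth value. Satisfiability of $\psi^*$ is the sentence $\exists\bar{y}_{\text{far}}\,\psi^*$, and nothing in \cref{prop: requirements PAC learning} says $\Phi^*$ is closed under existential quantification, so you have no oracle for that. Your self-reducibility loop inherits the same problem: after isolating $y_j$ via a further decomposition, you propose to ``enumerate the finitely many possible local $r$-types for $y_j$'' and test each locally. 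Testing a type locally requires a candidate vertex, and by definition you have none for the far positions; with only local access starting from the training tuples, you simply cannot reach, let alone output, vertices at distance $>2r{+}1$ from $T$. Since the algorithm must \emph{return} an explicit $\bar v\in A^\ell$, this is fatal, not a technicality. (There is also a minor coverage issue: your dichotomy ``in $N$'' versus ``at distance $>2r{+}1$'' misses the intermediate shell, so some consistent $\bar v$ escape your enumeration altogether.)

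The paper resolves this not by finding far witnesses but by showing they are never needed. \Cref{lem: existence of consistent local concept} proves that whenever a consistent $(\phi,\bar v)$ exists, one can \emph{change the formula} to absorb the role of the far parameters: the $\Phi^*$-type of $(\bar a_i,\bar v^{<})$ on the close part already determines the label (\cref{claim: same local type then same eval}), so the Boolean combination $\phi^{<}\deff\bigvee_{i:b_i=1}\bigwedge_{\gamma\in\type{\Phi^*}{\A^*}{\bar a_i\bar v^{<}}}\gamma$ is consistent with $T$ using only the close parameters $\bar v^{<}\in\neighb{(2r+1)\ell}{\A^*}{T}$, padded by arbitrary close dummies. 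Closure of $\Phi^*$ under Boolean combinations puts $\phi^{<}$ back in $\Phi^*$. With this lemma in hand the algorithm becomes the straightforward brute force over $N^\ell\times\Phi^*$---no satisfiability oracle and no far witnesses required. Your decomposition into close/far is essentially the first step of that lemma's proof; what you are missing is the observation that the far part contributes a single fixed $\Phi^*$-type independent of $i$, which can therefore be folded into the formula rather than instantiated by an explicit tuple.
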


We prove the theorem in \cref{sec: Exact Learning with Precomputation}.

Apart from exact learning with precomputation,
we also study hypotheses that
generalise well in the following sense.
The \emph{generalisation error} of a hypothesis
\(h \colon A^k \to \{0,1\}\) for a probability distribution
\(\D\) on \(A^k \times \{0,1\}\) is
\[\mathrm{err}_\D (h) \ \coloneqq \ \underset{(\bar{a}, b) \sim \D}{\Pr}(h(\bar{a}) \neq b).\]

We write $\openinterval{0}{1}$ for the set of all rationals
$q$ with $0<q<1$.
A hypothesis class \(\mathcal{H} \subseteq \{0,1\}^{A^k}\)
is \emph{agnostically PAC-learnable} if there is a function
\(t_\mathcal{H} \colon \openinterval{0}{1}^2 \to \NN\) and
a learning algorithm \(\mathfrak{L}\) such that for all
\(\epsilon, \delta \in \openinterval{0}{1}\) and for every distribution \(\D\)
over \(A^k \times \{0,1\}\), when running \(\mathfrak{L}\)
on a sequence \(T\) of \(t_\mathcal{H} (\epsilon, \delta)\) examples
drawn i.i.d.\ from \(\D\),
it holds that
\begin{equation*}
  \Pr \left( \mathrm{err}_\D (\mathfrak{L}(T)) \leq \inf_{h \in \mathcal{H}} \mathrm{err}_\D (h) + \epsilon \right) \geq 1 - \delta.
\end{equation*}

The following theorem, which we prove in \cref{sec: agnostic PAC
  learning with precomputation}, provides an agnostic PAC learning algorithm.

\begin{thm}[Agnostic PAC Learning with Precomputation]%
  \label{thm: agnostic PAC learning with precomputation}
  Let \(\A\), \(\A^*\), and \(\Phi^*\) be
  as in \cref{prop: requirements PAC learning}.
  There is an \(s \in \NN\) such that,
  given local access to \(\A^*\),
  the hypothesis class \(\mathcal{H} \coloneqq \C(\Phi^*, \A^*, k, \ell)\) is agnostically PAC-learnable
  with \(t_\mathcal{H} (\epsilon, \delta) = s \cdot \left\lceil \frac{\log(n/\delta)}{\epsilon^2} \right\rceil\)
  via an algorithm that,
  given \(t_\mathcal{H} (\epsilon, \delta)\) examples,
  returns a hypothesis of the form \((\phi^*, \bar{v}^*)\)
  with \(\phi^* \in \Phi^*\) and \(\bar{v}^* \in A^\ell\)
  in time \(f_{\Phi^*}(\A^*) \cdot \big(\log n + d + \frac{1}{\epsilon} + \log \frac{1}{\delta}\big)^{\bigO(1)}\) with
  only local access to \(\A^*\),
  where \(n\) and \(d\) are the size and the degree of \(\A^*\),
  and \(f_{\Phi^*}(\A^*)\) is an upper bound on the
  time complexity of model checking
  for formulas in \(\Phi^*\) on \(\A^*\).
\end{thm}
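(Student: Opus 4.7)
The plan is to realise agnostic PAC learning via empirical risk minimisation (ERM) over the finite hypothesis class \(\mathcal{H} = \C(\Phi^*, \A^*, k, \ell)\), and to implement the ERM search efficiently by recycling the enumeration strategy used in the proof of Theorem~4.3.

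First, I would bound the sample complexity. Every hypothesis in \(\mathcal{H}\) is determined by a pair \((\phi^*, \bar{v})\) with \(\phi^* \in \Phi^*\) and \(\bar{v}\in A^\ell\), so \(|\mathcal{H}|\leq |\Phi^*|\cdot n^\ell\) and \(\log|\mathcal{H}|\leq s_0 \log n\) for a constant \(s_0\) depending only on \(\Phi^*\) and \(\ell\). The classical agnostic PAC guarantee for finite hypothesis classes (see, e.g., \cite{Shalev-Shwartz:2014:UML:2621980}) then says that any ERM rule is \((\epsilon,\delta)\)-successful once the sample size is \(\Omega\bigl((\log|\mathcal{H}|+\log(1/\delta))/\epsilon^2\bigr)\); absorbing \(\ell\) and \(|\Phi^*|\) into the constant, this is bounded by \(s \cdot \lceil \log(n/\delta)/\epsilon^2 \rceil\) for a suitable \(s\in\NN\), which I fix as the constant promised in the theorem.

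Second, I describe an efficient ERM procedure. Let \(A_T\subseteq A\) be the set of elements occurring in the training sequence, so \(|A_T|\leq k\cdot t\). Every parameter tuple \(\bar{v}\in A^\ell\) is classified by the partition of its coordinates into those within distance \(2r{+}1\) of \(A_T\) and those farther away. The ``close'' coordinates range over a set of size at most \(|A_T|\cdot d^{\bigO(r)}\), computable by a local exploration of depth \(\bigO(r)\) around each sample element. For the remaining ``far'' coordinates, Property~\ref{prop: requirements PAC learning}\eqref{prop: requirements PAC learning: FV} gives, for every \(\phi^*\in\Phi^*\), a Feferman-Vaught decomposition with respect to the partition separating close from far variables. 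Since \(\Phi^*\) is finite and all decomposition formulas lie in \(\Phi^*\), only a bounded number of Boolean types of the far-coordinate neighbourhood can affect the truth value of \(\phi^*(\bar{a}_i,\bar{v})\) on the sample; a representative witness for each realised type can be located using local queries during precomputation. Combined with the closeness under Booleans from Property~\ref{prop: requirements PAC learning}\eqref{prop: requirements PAC learning: Boolean combinations}, the ERM search thus reduces to at most \((\log n + d + t)^{\bigO(1)}\) candidate hypotheses, each evaluated on the sample by \(t\) model-checking calls of cost \(f_{\Phi^*}(\A^*)\); the candidate with minimum empirical error is returned.

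The main obstacle is the far-coordinate bookkeeping: one must show that, with only local access, a complete list of representative parameter tuples can be enumerated in time polynomial in \(\log n + d + 1/\epsilon + \log(1/\delta)\), so that the empirical risk minimiser over \(\mathcal{H}\) is attained by at least one candidate on the list. This is precisely where Property~\ref{prop: requirements PAC learning} is used in full: closure of \(\Phi^*\) under Feferman-Vaught decompositions and Boolean combinations ensures that the finitely many ``far-type'' contributions can be pulled out and tested independently, just as in Theorem~\ref{thm: exact learning}. Once this is established, substituting the sample size \(t = s \cdot \lceil\log(n/\delta)/\epsilon^2\rceil\) into the running-time expression yields the bound \(f_{\Phi^*}(\A^*)\cdot(\log n + d + 1/\epsilon + \log(1/\delta))^{\bigO(1)}\) claimed in the theorem.
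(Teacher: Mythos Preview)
Your high-level strategy---bound the sample complexity via the finite hypothesis class and then implement ERM by reusing the restricted parameter search from Theorem~\ref{thm: exact learning}---matches the paper's proof, and your sample-complexity calculation is correct.

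Where you diverge is in how you propose to handle the ``far'' parameter coordinates. You suggest enumerating, for each far block, a representative witness of each realised $\Phi^*$-type, located ``using local queries during precomputation''. This is not what the paper does, and it is where your obstacle really bites: representatives stored before the training sequence $T$ is known may land inside $\neighb{2r+1}{\A^*}{T}$ once $T$ arrives, and then the Feferman-Vaught decomposition no longer guarantees that the returned hypothesis $(\phi^*,(\bar v^<,\bar w))$ has the empirical error you computed for the far-type it was meant to represent. With only local access at learning time you also cannot scan $\A^*$ for a fresh far representative, so the scheme as described does not yield an ERM rule.

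The paper sidesteps this entirely by \emph{not} searching for far parameters. The mechanism behind Lemma~\ref{lem: existence of consistent local concept} is a formula rewriting, not a representative search: for any hypothesis $h=(\phi,\bar v)$ with close/far split $\bar v=(\bar v^<,\bar v^>)$, the formula
\[
  \phi^< \;\coloneqq\; \bigvee_{i:\,h(\bar a_i)=1}\ \bigwedge_{\gamma(\bar x,\bar y^<)\,\in\,\type{\Phi^*}{\A^*}{\bar a_i\bar v^<}} \gamma
\]
lies (up to equivalence) in $\Phi^*$ by closure under Boolean combinations (Property~\ref{prop: requirements PAC learning}\eqref{prop: requirements PAC learning: Boolean combinations}), uses only the close parameters $\bar v^<$, and reproduces $h$'s labelling of $T$ (this last step is where the Feferman-Vaught closure is actually used, via Lemma~\ref{lem: local composition lemma}). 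Hence the brute-force search over $\bar v^*\in\neighb{(2r+1)\ell}{\A^*}{T}^\ell$ and $\phi^*\in\Phi^*$ already attains the minimum empirical error over all of $\mathcal{H}$, with no far-type bookkeeping at all. That is the content to import from Theorem~\ref{thm: exact learning}; once you have it, your running-time substitution of $t=t_{\mathcal H}(\epsilon,\delta)$ finishes the proof exactly as you wrote.
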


The next remark establishes the crucial link between the learning results
of this section and the locality results of
Section~\ref{sec:Locality}: it shows that suitably chosen sets
$\Phi\subseteq\WAunPS$ indeed have Property~\ref{prop: requirements PAC learning}.
\begin{rem}%
  \label{remark:propertyForLearning}
  Fix a $q\in\NN$ and let $\Phi\deff\Phi_{q,k+\ell}$ be the set of all
  $\FO[\sigma]$-formulas $\phi$ of quantifier rank at most $q$ and
  with free variables among $\set{x_1,\ldots,x_k,y_1,\ldots,y_\ell}$. By
  the well-known properties of first-order logic,
  $\Phi$ has \cref{prop: requirements PAC learning}
  (e.g.\ via $L' \coloneqq L =\FO$, $\sigma^* \coloneqq \sigma$, and $\A^* \coloneqq \A$; this is exactly the
  setting considered in~\cite{GroheRitzert_FO}).
  By using the locality properties of $\FOW$ and
  $\WAun$ from \cref{subsec:Locality}, we can apply a
  similar reasoning to $\WAunPS$ as to $\FO[\sigma]$:
  let the collections $\Ps$ and $\SC$ be finite (but $\SC$ may contain
  some infinite rings or abelian groups), fix
  a finite set $\mathscr{S}$ of elements $s\in S\in\SC$, and fix
  a $q\in\NN$.
  Let $\Phi\deff\Phi_{q,k+\ell,\mathscr{S}}$ be the set of all
  $\WAunPS$-formulas $\phi$ of quantifier rank and aggregation depth at
  most $q$ and with free variables among
  $\set{x_1,\ldots,x_k,y_1,\ldots,y_\ell}$ that have the following
  additional property: all symbols $s\in S\in\SC$ that are present in
  $\phi$ belong to $\mathscr{S}$, all $\Weights$-products present in
  $\phi$ have length at most $q$, and the maximum nesting depth of term
  constructions using rule~\eqref{item:plustimesterm} in order to
  construct terms present in $\phi$ is at most $q$.
  We claim that this set $\Phi$ has \cref{prop: requirements PAC learning}.
  To prove this, we can argue as follows.

  \begin{claim}\label{claim:finitePhi}
    Up to logical equivalence, $\Phi$ only contains a finite number
    of formulas.
  \end{claim}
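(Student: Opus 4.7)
The plan is to proceed by a structural induction that keeps track of the three nesting parameters bounded in the definition of $\Phi$: the quantifier rank $q_1$, the aggregation depth $q_2$, and the maximum nesting depth $q_3$ of term constructions using rule~\eqref{item:plustimesterm}. Since bound variables can always be renamed, I first fix a finite pool $V\subseteq\Vars$ whose size depends only on $k$, $\ell$, $q$, and the maximum arity of symbols in $\Weights$, and restrict attention to expressions whose variables lie in $V$. Let $F(q_1,q_2,q_3)$ and $T(q_1,q_2,q_3)$ denote, respectively, the sets of $\WAunPS$-formulas and $\SC$-terms satisfying the vocabulary constraints of $\Phi$ (constants from $\mathscr{S}$, $\Weights$-products of length at most $q$, variables in $V$) whose three depth parameters are bounded by $q_1$, $q_2$, $q_3$. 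The goal is to show that $F(q,q,q)$ and $T(q,q,q)$ are finite up to logical equivalence, which implies the claim.

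For the base case $q_1=q_2=0$, the atomic formulas built by rules~\eqref{item:atomic} and~\eqref{item:wsimple} form a finite set, since $\sigma$, $\Weights$, $\mathscr{S}$, and $V$ are finite; the same holds for atomic terms built by rules~\eqref{item:constterm} and~\eqref{item:wsimpleterm}. A sub-induction on $q_3$ then establishes that $T(0,0,q_3)$ is finite: each arithmetic step produces only finitely many new syntactic terms $(t_1\ast t_2)$ for $\ast\in\set{\plus,\minus,\mal}$ from a finite predecessor set. Formulas obtained via rule~\eqref{item:Q}$_1$ from terms in $T(0,0,q_3)$ form a finite collection as well, because $\PP$ is finite. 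Finally, the Boolean closure of $n$ formulas contains at most $2^{2^n}$ equivalence classes, so $F(0,0,q_3)$ is finite for every $q_3\leq q$.

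For the inductive step, assume the conclusion for all pairs $(q_1',q_2')$ lexicographically smaller than $(q_1,q_2)$. Any new formula of quantifier rank exactly $q_1$ can only arise via rule~\eqref{item:exists} or~\eqref{item:finitegroup}$_1$, yielding atomic pieces of the form $\exists y\,\phi'$ or $\big(s=\sum\weight(\ov{y}).\phi'\big)$ with $\phi'\in F(q_1-1,q_2,q)$; since $V$, $\mathscr{S}$, $\Weights$ are finite and $F(q_1-1,q_2,q)$ is finite by induction, only finitely many such atomic pieces exist up to equivalence. Analogously, any new term of aggregation depth exactly $q_2$ must come from rule~\eqref{item:countterm} and has the form $\Ssum{\pweight}{\phi'}$ with $\phi'\in F(q_1,q_2-1,q)$ and $\pweight$ a $\Weights$-product of length $\leq q$, of which there are only finitely many. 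From these new atomic pieces (together with those available at lower levels), the full sets $F(q_1,q_2,q)$ and $T(q_1,q_2,q)$ are recovered by predicate applications (rule~\eqref{item:Q}$_1$), Boolean combinations, and arithmetic combinations of depth at most $q$, each of which preserves finiteness up to equivalence by the arguments from the base case.

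The main subtlety lies in the mutual recursion between formulas and terms: rule~\eqref{item:Q}$_1$ injects terms into formulas while rule~\eqref{item:countterm} injects formulas into terms, so $q_1$, $q_2$, and $q_3$ must be tracked simultaneously. I handle this by fixing the outer induction on $(q_1,q_2)$ in lexicographic order and treating the $q_3$-closure inside each step exactly as in the base case; this ensures that whenever rule~\eqref{item:Q}$_1$ or~\eqref{item:countterm} is applied, the subexpression being plugged in already lies in a set the induction hypothesis has declared finite up to equivalence. Taking $q_1=q_2=q_3=q$ yields the claim.
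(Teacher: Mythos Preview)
Your proposal is correct and follows essentially the same approach as the paper: both arguments first bound the supply of variables that can occur (you by fixing a finite pool $V$, the paper by bounding the number of free variables of any subformula by $k+\ell+q^2+q^3$), observe that the atomic formulas and terms from rules~\eqref{item:atomic}, \eqref{item:wsimple}, \eqref{item:constterm}, \eqref{item:wsimpleterm} are then finite, and proceed by induction on nesting depth, using the bounded depth of rule~\eqref{item:plustimesterm} and the finiteness of $\PP$ to control rules~\eqref{item:Q} and~\eqref{item:bool}. The only organisational difference is that you induct lexicographically on the pair $(q_1,q_2)$ tracking quantifier rank and aggregation depth separately, whereas the paper uses a single combined nesting-depth parameter for rules~\eqref{item:exists}, \eqref{item:finitegroup}, and~\eqref{item:countterm}; both well-orders untangle the mutual recursion between formulas and terms in the same way.
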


  \begin{claimproof}
    Since the maximum nesting depth of constructs using
    rules~\eqref{item:exists} and~\eqref{item:finitegroup}
    as well as the maximum nesting depth of constructs using
    rule~\eqref{item:countterm} from \cref{def:WA}
    is bounded by \(q\)
    and every construct using
    rule~\eqref{item:exists} adds one new variable,
    rule~\eqref{item:finitegroup} adds at most \(q\) new variables, and
    rule~\eqref{item:countterm} adds at most \(q^2\) new variables,
    every subformula of a formula in \(\Phi\)
    has at most \(k + \ell + q^2 + q^3\) free variables.
    With finitely many free variables and
    \(\sigma\), \(\mathscr{S}\), and \(\Weights\) being finite as well,
    rules~\eqref{item:atomic} and~\eqref{item:wsimple} only produce a finite number of
    formulas.
    With the same argument, rules~\eqref{item:constterm}
    and~\eqref{item:wsimpleterm} only produce a finite number of \(\SC\)-terms.
    We show by induction on the nesting depth of constructs
    using rules~\eqref{item:exists}, \eqref{item:finitegroup},
    and~\eqref{item:countterm}
    that there are, up to logical equivalence,
    only finitely many (sub-)formulas and \(\SC\)-terms used in \(\Phi\),
    which implies the claim.

    If there are only finitely many \(\SC\)-terms,
    then, with a bounded nesting depth,
    rule~\eqref{item:plustimesterm} only yields
    a finite number of new \(\SC\)-terms.
    Thus, since \(\Ps\) is also finite,
    rule~\eqref{item:Q} only produces a finite number of formulas of the form
    \(\Pred(t_1,\ldots,t_m)\).
    Hence, rule~\eqref{item:bool} only creates
    a finite number of formulas up to logical equivalence.
    (Consider them being in a normal form analogous to CNF.)

    Applying rule~\eqref{item:exists}
    or rule~\eqref{item:finitegroup}
    to a set of finitely many formulas
    only creates finitely many new formulas.
    Then, rule~\eqref{item:countterm}
    only yields finitely many \(\SC\)-terms.
    This completes the proof of Claim~\ref{claim:finitePhi}.
  \end{claimproof}

  For each of these finitely many formulas $\phi$, we apply
  \cref{thm:AlgorithmicDecomp} to obtain an extension
  $\sigma_\phi$ of $\sigma$, a $\sigma_\phi$-expansion $\A^\phi$ of
  $\A$, and a local $\FOWSR{\Ps}{\sigma,\SC,\Weights}$-formula $\phi'$.
  Then we let $\sigma^*$ be the union of all the $\sigma_\phi$, we
  let $\A^*$ be the $\sigma^*$-expansion of $\A$ whose
  $\sigma_\phi$-reduct coincides with $\A^\phi$ for every $\phi$,
  and we let $\Phi'$ be the set of all the formulas $\phi'$.
  Choose a number $r\in\NN$ such that each of the $\phi'\in\Phi'$ is
  $r$-local.

  We can repeatedly apply \cref{thm:FVforFOW},
  take the \(r\)-localisations \(\alpha^{(r)},\beta^{(r)}\)
  of the resulting formulas \(\alpha,\beta\),
  and take Boolean combinations
  to obtain an extension
  \(\Phi^*\) of \(\Phi'\) such that \(\Phi^*\) satisfies
  statements~\eqref{prop: requirements PAC learning: FV}
  and~\eqref{prop: requirements PAC learning: Boolean combinations} of
  \cref{prop: requirements PAC learning} and contains only
  \(r\)-local formulas.

  \begin{claim}\label{claim:stopProcess}
    One can stop the process after finitely many steps
    and thus obtain a finite extension \(\Phi^*\).
  \end{claim}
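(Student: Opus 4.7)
The plan is to exhibit an explicit finite set \(\Phi^*\) that contains \(\Phi'\) and is closed modulo logical equivalence under FV decomposition, \(r\)-localisation, and Boolean combinations; termination of the iterative closure process then follows immediately, because any such process stays inside \(\Phi^*\). First I would record that \(\Phi'\) itself is finite: the starting set \(\Phi\) is finite modulo equivalence by \cref{claim:finitePhi}, and \(\Phi'\) arises from \(\Phi\) by applying \cref{thm:AlgorithmicDecomp} once per formula. Let \(q_0\) bound the quantifier rank of formulas in \(\Phi'\), let \(d_0\) bound the nesting depth of \(\plus/\minus/\mal\)-constructions in their \(\SC\)-terms, and let \(I\) collect the finitely many predicates from \(\PP\), rings/groups from \(\SC\), \(\SC\)-constants, and weight symbols occurring in \(\Phi'\). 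Let \(\mathcal{F}_0\) be the class of all \(\FOWSR{\PP}{\sigma^*,\SC,\Weights}\)-formulas of quantifier rank \(\leq q_0\) and \(\plus/\minus/\mal\)-nesting \(\leq d_0\) whose ingredients lie in \(I\) and whose free variables lie in \(\set{x_1,\ldots,x_k,y_1,\ldots,y_\ell}\); by precisely the argument of \cref{claim:finitePhi}, \(\mathcal{F}_0\) is finite modulo logical equivalence. I then define \(\Phi^*\) to be the set of all Boolean combinations of formulas in \(\set{\phi^{(r)} : \phi\in\mathcal{F}_0}\). This \(\Phi^*\) is finite modulo equivalence, consists entirely of \(r\)-local formulas (Boolean combinations of \(r\)-local formulas are \(r\)-local), and contains \(\Phi'\), since every \(\phi\in\Phi'\) lies in \(\mathcal{F}_0\) and is already \(r\)-local, hence equivalent to \(\phi^{(r)}\in\Phi^*\).

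Two of the three closure checks are then easy: closure of \(\Phi^*\) under Boolean combinations holds by construction, and closure under \(r\)-localisation holds because every \(\phi^*\in\Phi^*\) is \(r\)-local, so \((\phi^*)^{(r)}\equiv\phi^*\). The main obstacle is closure under FV decomposition. For this I would first reduce to the single-formula case: standard Boolean manipulations on FV decompositions (combining constituent decompositions via union for \(\oder\), via pairwise conjunction for \(\und\), and for \(\nicht\) via the mutual-exclusivity trick used in the proof of \cref{thm:FVforFOW}, i.e.\ regrouping the \(\alpha\)-components into a partition \(\alpha_J=\Und_{i\in J}\alpha_i\und\Und_{i\notin J}\nicht\alpha_i\)) turn FV decompositions of the constituents \(\phi_j^{(r)}\) of a Boolean combination \(\phi^*\) into an FV decomposition of \(\phi^*\) whose components are themselves Boolean combinations of the constituent components. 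Hence if the constituent components lie in \(\mathcal{F}_0^{(r)}\subseteq\Phi^*\), the resulting components lie in \(\Phi^*\) as well.

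It thus suffices to show, for each single \(\phi\in\mathcal{F}_0\) and each partition \((\ov{z};\ov{z}')\) of its free variables, that \(\phi^{(r)}\) has an FV decomposition in \(\Phi^*\). \cref{thm:FVforFOW} yields a decomposition \(\set{(\alpha_i,\beta_i)}_{i\in\Delta}\) of \(\phi\) with \(\alpha_i,\beta_i\in\mathcal{F}_0\), and I claim that \(\set{(\alpha_i^{(r)},\beta_i^{(r)})}_{i\in\Delta}\) is an FV decomposition of \(\phi^{(r)}\) whose components all lie in \(\mathcal{F}_0^{(r)}\subseteq\Phi^*\). The verification is a direct unpacking of the semantics: for disjoint \(\A,\B\) with \(\ov{a}\in A^{|\ov{z}|}\) and \(\ov{b}\in B^{|\ov{z}'|}\), the \(r\)-neighbourhood of \((\ov{a},\ov{b})\) in \(\A\dsum\B\) equals \(\Neighbr{\A}{\ov{a}}\dsum\Neighbr{\B}{\ov{b}}\). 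Therefore, by \(r\)-locality of \(\phi^{(r)}\) and the fact that \(\phi^{(r)}\) and \(\phi\) agree on any structure where every element lies within distance \(r\) of the free variables, we have \(\A\dsum\B\models\phi^{(r)}[\ov{a},\ov{b}]\) iff \(\Neighbr{\A}{\ov{a}}\dsum\Neighbr{\B}{\ov{b}}\models\phi[\ov{a},\ov{b}]\), which—by the FV decomposition of \(\phi\)—is equivalent to the existence of some \(i\) with \(\Neighbr{\A}{\ov{a}}\models\alpha_i[\ov{a}]\) and \(\Neighbr{\B}{\ov{b}}\models\beta_i[\ov{b}]\), which in turn (again by \(r\)-locality, now of \(\alpha_i^{(r)}\) and \(\beta_i^{(r)}\)) is equivalent to \(\A\models\alpha_i^{(r)}[\ov{a}]\) and \(\B\models\beta_i^{(r)}[\ov{b}]\). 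This closes the verification and shows that the iterative closure process stabilises within the finite set \(\Phi^*\) after finitely many steps.
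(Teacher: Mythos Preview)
Your proof is correct but takes a genuinely different route from the paper's. The paper argues termination via a \emph{descent on free variables}: each non-trivial application of \cref{thm:FVforFOW} to a formula $\psi$ w.r.t.\ a partition $(\ov{z};\ov{z}')$ of $\free(\psi)$ produces components with strictly fewer free variables, so iterating FV decomposition alone yields a finite chain $\Phi^{(0)},\ldots,\Phi^{(m)}$; it then observes that FV decompositions of Boolean combinations are Boolean combinations of FV decompositions of constituents, so the whole process stabilises in the Boolean closure of $r$-localisations of $\Phi^{(m)}$. You instead exhibit a single finite ambient class $\mathcal{F}_0$ \emph{upfront}, using the complexity bounds (quantifier rank, term set, ingredients) guaranteed by the statement of \cref{thm:FVforFOW}, and set $\Phi^*$ to be the Boolean closure of $\{\phi^{(r)}:\phi\in\mathcal{F}_0\}$; closure is then verified by a direct semantic check that $\{(\alpha_i^{(r)},\beta_i^{(r)})\}$ is an FV decomposition of $\phi^{(r)}$ whenever $\{(\alpha_i,\beta_i)\}$ is one of $\phi$. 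Your approach leans on the fine print of \cref{thm:FVforFOW} (that the decomposition reuses exactly the $\SC$-terms of $\phi$ plus constants from finite rings), whereas the paper's needs only the generic observation that components have fewer free variables; on the other hand, your argument produces an explicit candidate for $\Phi^*$ immediately and makes the closure properties visible in one shot.

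One small point worth tightening: in your verification that $\{(\alpha_i^{(r)},\beta_i^{(r)})\}$ is an FV decomposition of $\phi^{(r)}$, the step from $\Neighbr{\A}{\ov{a}}\models\alpha_i[\ov{a}]$ to $\A\models\alpha_i^{(r)}[\ov{a}]$ requires that $\alpha_i^{(r)}$ be the $r$-localisation around the \emph{full} tuple $\ov{z}$ (not just $\free(\alpha_i)$, which may be a proper subset). This is the same convention the paper uses in the proof of \cref{cor:FV}, so it is consistent with the ambient setup; but you should make the choice of localisation tuple explicit to avoid the objection that elements of $\Neighbr{\A}{\ov{a}}$ near some $a_j\notin\free(\alpha_i)$ need not lie within distance $r$ of $\free(\alpha_i)$.
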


  \begin{claimproof}
    When applying \cref{thm:FVforFOW} to a formula \(\phi\)
    w.r.t.\ \((\ov{x};\ov{y})\),
    the Feferman-Vaught decomposition only contains
    new formulas if
    \(\{x_1,\ldots,x_k\} \subsetneq \free(\phi)\) and
    \(\{y_1,\ldots,y_\ell\} \subsetneq \free(\phi)\).
    Hence, if one only applies \cref{thm:FVforFOW}
    and takes the \(r\)-localisations of the resulting formulas,
    then one can stop the process after finitely many steps.
    Let \(\Phi^{(0)} = \Phi' , \Phi^{(1)}, \ldots, \Phi^{(m)}\)
    be the resulting sets of formulas from this finite process.

    Let \(\phi\) be a Boolean combination of formulas from \(\Phi^{(i)}\).
    If \(\phi = \phi_1 \lor \phi_2\)
    with \(\phi_1, \phi_2 \in \Phi^{(i)}\), then
    \(\Delta_\phi = \Delta_{\phi_1} \cup \Delta_{\phi_2}
      \subseteq \left(\Phi^{(i+1)}\right)^2\).
    If \(\phi = \nicht \psi\) with \(\psi \in \Phi^{(i)}\)
    and \(\Delta_\psi = \{(\alpha_1, \beta_1), \dots, (\alpha_s, \beta_s)\}\),
    then
    \(\Delta_\phi = \setc{(\alpha_A, \beta_{[s] \setminus A})}{A \subseteq [s]}
      \subseteq \left(\Phi^{(i+1)}\right)^2\)
    with \(\alpha_A = \bigwedge_{i \in A} \neg \alpha_i\)
    and \(\beta_A = \bigwedge_{i \in A} \neg \beta_i\).
    Inductively, it follows that all formulas
    used in the Feferman-Vaught decomposition
    of \(\phi\) w.r.t.\ \((\ov{x};\ov{y})\) are Boolean combinations
    of formulas in \(\Phi^{(i+1)}\).

    Hence, the result of the overall process
    is the set of Boolean combinations
    of \(r\)-localisations of formulas in \(\Phi^{(m)}\).
    Since the set of Boolean combinations of finitely many formulas is,
    up to logical equivalence, again finite,
    the process stops after finitely many steps
    with a finite extension \(\Phi^*\).
    This completes the proof of Claim~\ref{claim:stopProcess}.
  \end{claimproof}

\noindent
  This $\Phi^*$ witnesses that $\Phi$ has
  \cref{prop: requirements PAC learning}.
\end{rem}

\subsection{Exact Learning with Precomputation}%
\label{sec: Exact Learning with Precomputation}

Section~\ref{sec: Exact Learning with Precomputation}
is devoted to the proof of Theorem~\ref{thm: exact learning}.

Let \(\A\) be a \((\sigma, \Weights)\)-structure and
let \(\A^*\) and \(\Phi^*\) be as in \cref{prop: requirements PAC learning}.
To prove \cref{thm: exact learning}, we present an algorithm that
follows similar ideas as the algorithm presented
in~\cite{GroheRitzert_FO}. Note, however, that~\cite{GroheRitzert_FO}
focuses on first-order logic, whereas our setting allows to achieve results for
considerably stronger logics.

While the set of possible formulas \(\Phi^*\) already has constant size,
we have to reduce the parameter space
to obtain an algorithm that runs in sublinear time.
Since the formulas in \(\Phi^*\) are \(r\)-local,
we show that it suffices to consider parameters
in a neighbourhood of the training sequence with a fixed radius.
The main ingredient
is the following result, which uses a
Feferman-Vaught decomposition
and allows us to analyse the parameters we choose by splitting them
into two parts with disjoint neighbourhoods.
For \(\bar{a} \in A^{\abs{\bar{z}}}\), let
\(\type{\Phi^*}{\A^*}{\bar{a}} \coloneqq \{\phi^*(\bar{z}) \in \Phi^* : \A^* \models \phi^*[\bar{a}]\}\).

\begin{lem}[Local Composition Lemma]%
  \label{lem: local composition lemma}
  For numbers $k',\ell'$,
  let \(\bar{a}, \bar{a}' \in A^{k'}\),
  \(\bar{b}, \bar{b}' \in A^{\ell'}\),
  \(\dist^{\A^*}(\bar{a}, \bar{a}') > 2r{+}1\),
  \(\dist^{\A^*}(\bar{b}, \bar{b}') > 2r{+}1\),
  \(\type{\Phi^*}{\A^*}{\bar{a}} = \type{\Phi^*}{\A^*}{\bar{a}'}\), and
  \(\type{\Phi^*}{\A^*}{\bar{b}} = \type{\Phi^*}{\A^*}{\bar{b}'}\).
  Then
  \(\type{\Phi^*}{\A^*}{\bar{a}\bar{b}} = \type{\Phi^*}{\A^*}{\bar{a}'\bar{b}'}\).
\end{lem}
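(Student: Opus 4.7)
To prove the lemma, I fix an arbitrary $\phi^*(\bar{x},\bar{y}) \in \Phi^*$ with $|\bar{x}|=k'$ and $|\bar{y}|=\ell'$; showing $\A^* \models \phi^*[\bar{a}\bar{b}] \iff \A^* \models \phi^*[\bar{a}'\bar{b}']$ then immediately yields the claimed equality of types, since $\Phi^*$ is the set used to define them. By \cref{prop: requirements PAC learning}\,(2), applied to the partition $(\bar{x};\bar{y})$ of the free variables of $\phi^*$, there is a Feferman-Vaught decomposition $\Delta \subseteq \Phi^* \times \Phi^*$ of $\phi^*$ w.r.t.\ $(\bar{x};\bar{y})$; since every formula in $\Phi^*$ is $r$-local around its free variables, so is $\phi^*$ itself and so is each $\alpha(\bar{x}), \beta(\bar{y})$ appearing in $\Delta$.

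The central step is to bridge the gap between the disjoint-sum semantics underlying $\Delta$ and the single-structure semantics on $\A^*$. Here I would use the distance hypothesis, $\dist^{\A^*}(\bar{a},\bar{b}) > 2r{+}1$ and $\dist^{\A^*}(\bar{a}',\bar{b}') > 2r{+}1$, to conclude that $\neighb{r}{\A^*}{\bar{a}}$ and $\neighb{r}{\A^*}{\bar{b}}$ are disjoint and carry no Gaifman edge between them; combined with the locality condition imposed on each weight symbol of $\A^*$, this yields the identity
\[
  \Neighb{r}{\A^*}{\bar{a}\bar{b}} \ = \ \Neighb{r}{\A^*}{\bar{a}} \,\dunion\, \Neighb{r}{\A^*}{\bar{b}}
\]
of weighted structures. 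Since $\phi^*$ does not use the auxiliary unary predicates $X,Y$ introduced in \cref{def:disjsum}, its truth value on the disjoint union on the right agrees with its truth value on the corresponding disjoint sum $\Neighb{r}{\A^*}{\bar{a}} \dsum \Neighb{r}{\A^*}{\bar{b}}$. Combining this with the $r$-locality of $\phi^*$ and the defining property of $\Delta$, and then invoking the $r$-locality of $\alpha$ and $\beta$ to strip the neighbourhood annotations, I arrive at
\[
\A^* \models \phi^*[\bar{a}\bar{b}] \ \iff \ \exists (\alpha,\beta) \in \Delta \colon \A^* \models \alpha[\bar{a}] \ \und \ \A^* \models \beta[\bar{b}].
\]

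What remains is a routine use of the type hypotheses: because $\alpha, \beta \in \Phi^*$, the assumption $\type{\Phi^*}{\A^*}{\bar{a}} = \type{\Phi^*}{\A^*}{\bar{a}'}$ yields $\A^* \models \alpha[\bar{a}] \iff \A^* \models \alpha[\bar{a}']$, and symmetrically for $\beta$ with $\bar{b},\bar{b}'$. Running the same chain of equivalences in reverse for the primed tuples, which is exactly where the hypothesis $\dist^{\A^*}(\bar{a}',\bar{b}') > 2r{+}1$ is consumed, produces $\A^* \models \phi^*[\bar{a}'\bar{b}']$ and closes the argument. I expect the main obstacle to be the middle step: \cref{def:FVZ} phrases Feferman-Vaught decompositions purely in terms of $\A \dsum \B$ with the auxiliary predicates $X,Y$, so one has to transport that statement carefully into an assertion about two far-apart tuples living inside the single structure $\A^*$, simultaneously exploiting $r$-locality of every formula in $\Phi^*$, the disjointness of the two $r$-neighbourhoods (including the behaviour of the weight symbols forced by their locality condition), and the fact that no formula in $\Phi^*$ can observe $X$ or $Y$.
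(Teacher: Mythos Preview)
Your proof is correct and follows essentially the same route as the paper's: invoke the Feferman-Vaught decomposition of $\phi^*$ from \cref{prop: requirements PAC learning}\,(2), use $r$-locality to pass between $\A^*$ and the disjoint sum of the two $r$-neighbourhoods, and then transfer via the type hypotheses. You have (correctly) read the distance conditions as $\dist^{\A^*}(\bar{a},\bar{b})>2r{+}1$ and $\dist^{\A^*}(\bar{a}',\bar{b}')>2r{+}1$ rather than the literally stated $\dist^{\A^*}(\bar{a},\bar{a}')>2r{+}1$ and $\dist^{\A^*}(\bar{b},\bar{b}')>2r{+}1$; the paper's own proof tacitly does the same, since forming $\Neighb{r}{\A^*}{\bar{a}}\oplus\Neighb{r}{\A^*}{\bar{b}}$ only makes sense under the former reading, and the sole application of the lemma (inside the proof of \cref{lem: existence of consistent local concept}) supplies exactly those conditions.
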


\begin{proof}
  Let \(\phi^*(\bar{x},\bar{y}) \in \type{\Phi^*}{\A^*}{\bar{a}\bar{b}}\).
  Then, with \cref{prop: requirements PAC learning}\,(\ref{prop: requirements PAC learning: FV}),
  \(\phi^*\) has a Feferman-Vaught decomposition \(\Delta\) in
  \(\Phi^*\) w.r.t.\ $(\ov{x};\ov{y})$, and thus,
  \(\Neighb{r}{\A^*}{\bar{a}} \oplus \Neighb{r}{\A^*}{\bar{b}} \models \phi^*[\bar{a}, \bar{b}]\) if and only if
  there exists \((\alpha, \beta) \in \Delta\) such that
  \(\Neighb{r}{\A^*}{\bar{a}} \models \alpha[\bar{a}]\) and \(\Neighb{r}{\A^*}{\bar{b}} \models \beta[\bar{b}]\).
  Since \(\A^* \models \phi^*[\bar{a}, \bar{b}]\) and
  $\phi^*,\alpha,\beta$
  are \(r\)-local,
  it follows that \(\A^* \models \alpha[\bar{a}]\) and \(\A^* \models \beta[\bar{b}]\).
  Hence, \(\alpha \in \type{\Phi^*}{\A^*}{\bar{a}} = \type{\Phi^*}{\A^*}{\bar{a}'}\) and
  \(\beta \in \type{\Phi^*}{\A^*}{\bar{b}} = \type{\Phi^*}{\A^*}{\bar{b}'}\).
  We obtain
  \(\A^* \models \bigvee_{(\alpha, \beta) \in \Delta} \alpha[\bar{a}'] \land \beta[\bar{b}']\)
  and thus \(\A^* \models \phi^*[\bar{a}', \bar{b}']\).
\end{proof}

The next lemma shows that it suffices to
search in a reduced parameter space to find
a consistent hypothesis.
For $S\subseteq A$ and an element $b\in A$,
let $\dist^{\A^*}(b,S)\deff\min_{a\in S}\dist^{\A^*}(b,a)$.
For
$R\geq 0$, set $\neighb{R}{\A^*}{S}\deff\bigcup_{a\in
  S}\neighb{R}{\A^*}{a}$. Also, for a training sequence
\(T = \big((\bar{a}_1, b_1), \dots, (\bar{a}_t, b_t)\big) \in (A^k
\times \{0,1\})^t\), let
$\neighb{R}{\A^*}{T}\deff\neighb{R}{\A^*}{S}$, where $S$ is the set of
all $a\in A$ that occur in one of the $\ov{a}_i$.

\begin{lem}%
  \label{lem: existence of consistent local concept}
  Let \(T = \big((\bar{a}_1, b_1), \dots, (\bar{a}_t, b_t)\big) \in (A^k \times \{0,1\})^t\) be
  consistent with some classifier in \(\C(\Phi^*, \A^*, k, \ell)\).
  Then there are a formula \(\phi^*(\bar{x}, \bar{y}) \in \Phi^*\) and a tuple
  \(\bar{v}^* \in \neighb{(2r+1)\ell}{\A^*}{T}^\ell\) such that
  \(\llbracket\phi^*(\bar{x}, \bar{y})\rrbracket^{\A^*}(\bar{x}, \bar{v}^*)\) is consistent with \(T\).
\end{lem}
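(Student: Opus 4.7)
My plan is to start with a consistent classifier $(\phi^*, \bar{v})$ guaranteed by the hypothesis and absorb every parameter $v_j$ that lies far from the training sequence into the formula itself, using the Feferman-Vaught decomposition provided by \cref{prop: requirements PAC learning}\,(\ref{prop: requirements PAC learning: FV}) together with the $r$-locality of formulas in $\Phi^*$. After this absorption, the ``active'' parameters will all lie in a bounded neighbourhood of $T$, and the remaining slots can be filled with arbitrary elements of $T$.

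First I would construct a partition $[\ell] = I \cup J$ greedily. Let $S \subseteq A$ be the set of all elements appearing in some $\ov{a}_i$, set $U \deff S$ and $I \deff \emptyset$, and repeatedly add an index $i \in [\ell] \setminus I$ with $\dist^{\A^*}(v_i, U) \leq 2r{+}1$ (if any) to $I$, also adding $v_i$ to $U$. After at most $\ell$ rounds, every $j \in J \deff [\ell] \setminus I$ satisfies $\dist^{\A^*}(v_j, U) > 2r{+}1$, and $\{v_i : i \in I\} \subseteq U \subseteq \neighb{(2r+1)\ell}{\A^*}{T}$. In particular, for every training tuple $\ov{a}_i$, the distance between $\ov{a}_i \bar{v}_I$ and $\bar{v}_J$ in $\A^*$ exceeds $2r{+}1$.

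The decisive step is to use Feferman-Vaught to eliminate $\bar{v}_J$. By \cref{prop: requirements PAC learning}\,(\ref{prop: requirements PAC learning: FV}), $\phi^*$ admits an FV decomposition $\Delta \subseteq \Phi^* \times \Phi^*$ with respect to the partition $(\bar{x}\bar{y}_I; \bar{y}_J)$ of its free variables. For each training example, the $r$-neighbourhoods $\Neighb{r}{\A^*}{\ov{a}_i \bar{v}_I}$ and $\Neighb{r}{\A^*}{\bar{v}_J}$ are disjoint in $\A^*$. Using $r$-locality of $\phi^*$ to pass to their disjoint union, applying FV (which is legitimate since $\phi^*$ does not mention the auxiliary symbols $X,Y$), and then using $r$-locality of the $\alpha, \beta \in \Phi^*$ once more, should yield
\begin{equation*}
  \A^* \models \phi^*[\ov{a}_i, \bar{v}_I, \bar{v}_J] \ \iff \ \exists (\alpha, \beta) \in \Delta \colon \ \A^* \models \alpha[\ov{a}_i, \bar{v}_I] \ \text{and} \ \A^* \models \beta[\bar{v}_J].
\end{equation*}
I expect this to be the main technical obstacle: carefully justifying the passage from the FV decomposition (defined for disjoint sums with distinguished $X,Y$) to single-structure evaluation.

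To conclude, let $B$ be the set of all $\beta \in \Phi^*$ such that $\A^* \models \beta[\bar{v}_J]$ and $(\alpha,\beta) \in \Delta$ for some $\alpha$, and define $\phi'(\bar{x}, \bar{y}_I) \deff \bigvee_{\beta \in B} \bigvee_{\alpha : (\alpha,\beta) \in \Delta} \alpha(\bar{x}, \bar{y}_I)$. By \cref{prop: requirements PAC learning}\,(\ref{prop: requirements PAC learning: Boolean combinations}), $\phi'$ is equivalent to a formula in $\Phi^*$, and by the equivalence above it satisfies $\A^* \models \phi'[\ov{a}_i, \bar{v}_I] \iff b_i = 1$ for every $i$. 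Viewing $\phi'$ as a formula $\phi^{**}(\bar{x}, \bar{y}) \in \Phi^*$ whose $\bar{y}_J$-coordinates are unused (a short calculation shows $r$-locality is preserved under adding dummy variables), and padding the $J$-positions of $\bar{v}^*$ with an arbitrary element of $S$, gives a tuple $\bar{v}^* \in \neighb{(2r+1)\ell}{\A^*}{T}^\ell$ such that $\sem{\phi^{**}(\bar{x}, \bar{y})}^{\A^*}(\bar{x}, \bar{v}^*)$ is consistent with $T$, as desired.
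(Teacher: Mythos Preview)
Your proposal is correct, and the greedy partition into near/far parameters followed by the passage through $r$-neighbourhoods and the disjoint-sum FV decomposition works exactly as you outline (this is essentially the argument behind \cref{cor:FV}, so the ``main technical obstacle'' you flag is routine). One cosmetic point: you do not actually need to argue that $r$-locality is preserved under adding dummy variables. Since $\phi'$ is a finite disjunction of formulas $\alpha\in\Phi^*$, closure under Boolean combinations already gives an equivalent $\psi\in\Phi^*$; this $\psi$ has free variables contained in $\{\bar{x},\bar{y}_I\}$, so writing $\psi(\bar{x},\bar{y})$ and padding the $J$-positions of $\bar{v}^*$ is harmless because those coordinates simply do not occur in $\psi$.

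The paper takes a different route after the same greedy split. Rather than applying FV to the specific formula $\phi^*$ and absorbing $\bar{v}_J$ into a disjunction of the $\alpha$-components, it first proves a \emph{Local Composition Lemma} (from FV and $r$-locality) showing that any two training points $\bar{a}_i,\bar{a}_j$ with the same $\Phi^*$-type over the near parameters $\bar{v}^<$ must carry the same label. It then \emph{defines} the new formula purely from the training data, as the disjunction over positive examples of the conjunction of all formulas in their $\Phi^*$-type. Your construction is more direct and stays closer to the original classifier; the paper's construction is independent of the particular $\phi^*$ once the near parameters are fixed and makes the connection to types explicit, which is conceptually pleasant and reusable. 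Both approaches use exactly the same three ingredients from \cref{prop: requirements PAC learning}.
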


\begin{proof}
  The proof is similar to the proof of the analogous statement
  in~\cite{GroheRitzert_FO} for the special case of $\FO$, but relies on
  \cref{prop: requirements PAC learning} and
  \cref{lem: local composition lemma}.

  Let \(\phi(\bar{x}, \bar{y}) \in \Phi^*\)
  and \(\bar{v}=(v_1, \dots, v_\ell) \in A^\ell\) such that
  \(\llbracket\phi(\bar{x}, \bar{y})\rrbracket^{\A^*}(\ov{x},\ov{v}) \in \C(\Phi^*, \A^*, k, \ell)\)
  is consistent with \(T\).
  Let \(N^{(0)} \coloneqq \neighb{r}{\A^*}{T}\).
  Now we inductively define \(v^{(i)}\) and \(N^{(i)}\) for \(i \geq 1\) as follows.
  Given \(N^{(i-1)}\), if there is a
  \(v \in \{v_1, \dots, v_\ell\} \setminus \{v^{(1)}, \dots, v^{(i-1)}\}\)
  such that \(\dist^{\A^*}(v, N^{(i)}) \leq r{+}1\),
  then we set \(v^{(i)} \coloneqq v\) and
  \(N^{(i)} \coloneqq N^{(i-1)} \cup \neighb{r}{\A^*}{v^{(i)}}\). If there is
  no such \(v\), then we set \(m \coloneqq i{-}1\) and stop.

  W.l.o.g.\
  let \(v^{(i)} = v_i\) for \(i \in [m]\).
  Let \(\bar{v}^< \coloneqq (v_1, \dots, v_m)\)
  and \(\bar{v}^> \coloneqq (v_{m+1}, \dots, v_\ell)\).
  Then \(\bar{v}^< \in \big(\neighb{(2r+1)\ell}{\A^*}{T}\big)^m\).

  \begin{claim}
    \label{claim: same local type then same eval}
    Let \(i,j \in [t]\) such that
    \(\type{\Phi^*}{\A^*}{\bar{a}_i \bar{v}^<} = \type{\Phi^*}{\A^*}{\bar{a}_j \bar{v}^<}\).
    Then \(b_i = b_j\).
  \end{claim}

  \begin{claimproof}
    From the construction, it follows that \(\dist^{\A^*}(\bar{v}^>, N^{(m)}) > r{+}1\)
    and \(\neighb{r}{\A^*}{\bar{v}^<} \subseteq N^{(m)}\).
    Hence, with \(\neighb{r}{\A^*}{\bar{a}_s} \subseteq N^{(m)}\),
    we obtain \(\dist^{\A^*}(\bar{v}^>, \bar{a}_s \bar{v}^<) > 2r{+}1\)
    for every \(s \in [t]\).
    With \cref{lem: local composition lemma}, it follows that
\ $
\type{\Phi^*}{\A^*}{\bar{a}_i \bar{v}} = \type{\Phi^*}{\A^*}{\bar{a}_i \bar{v}^< \bar{v}^>} = \type{\Phi^*}{\A^*}{\bar{a}_j \bar{v}^< \bar{v}^>} = \type{\Phi^*}{\A^*}{\bar{a}_j \bar{v}}$.
    Thus, in particular,
    \(\phi \in \type{\Phi^*}{\A^*}{\bar{a}_i \bar{v}}\)
    $\iff$\(\phi \in \type{\Phi^*}{\A^*}{\bar{a}_j \bar{v}}\). Since
    $\llbracket\phi(\bar{x}, \bar{y})\rrbracket^{\A^*}(\ov{x},\ov{v})$
    is consistent with $T$, this implies that $b_i=b_j$.
  \end{claimproof}
  We let \(\bar{y}^< \coloneqq (y_1, \dots, y_m)\) and choose
  \[\phi^< \ \coloneqq \bigvee_{i \in [t], b_i=1} \quad \bigwedge_{\gamma(\bar{x},\bar{y}^<) \,\in\, \type{\Phi^*}{\A^*}{\bar{a}_i\bar{v}^<}} \gamma(\bar{x},\bar{y}^<)\]
  The formula $\phi^<$ is a
  Boolean combination of formulas in \(\Phi^*\) and thus,
  according to \cref{prop: requirements PAC learning},
  there is a formula \(\phi^* \in \Phi^*\)
  that is equivalent to \(\phi^<\).
  The free variables of $\phi^*$ are among $\ov{x}$ and $\ov{y}^<$, and
  since $\ov{y}^<$ is a prefix of $\ov{y}$, we can safely write $\phi^*(\ov{x},\ov{y})$.
  We turn $\ov{v}^<=(v_1,\ldots,v_m)$ into a tuple $\ov{v}^*\in
  \neighb{(2r+1)\ell}{\A^*}{T}^\ell$ by
  choosing an arbitrary \(v \in \neighb{(2r+1)\ell}{\A^*}{T}\) and
  filling the missing $(\ell{-}m)$ positions with the value $v$.

  By the choice of $\phi^<$, the following is true for all \(j \in [t]\):
  if
  \(\A^* \models \phi^*[\bar{a}_j, \bar{v}^*]\),
  then there is a positive example \(\bar{a}_i\) with
  \(\A^* \models \bigwedge_{\gamma(\bar{x},\bar{y}^<) \in \type{\Phi^*}{\A^*}{\bar{a}_i\bar{v}^<}} \gamma[\bar{a}_j,\bar{v}^<]\).
  Thus
  \(\type{\Phi^*}{\A^*}{\bar{a}_i\bar{v}^<} = \type{\Phi^*}{\A^*}{\bar{a}_j\bar{v}^<}\)
  for some positive example \(\bar{a}_i\);
  and with \cref{claim: same local type then same eval}, we can conclude that \(b_j=1\).
  Conversely, if \(b_j=1\),
  then \(\A^* \models \bigwedge_{\gamma(\bar{x},\bar{y}^<) \in \type{\Phi^*}{\A^*}{\bar{a}_j\bar{v}^<}} \gamma[\bar{a}_j,\bar{v}^<]\)
  and hence \(\A^* \models \phi^*[\bar{a}_j, \bar{v}^*]\).
  Thus, \(\llbracket\phi^*(\bar{x}, \bar{y})\rrbracket^{\A^*}(\bar{x}, \bar{v}^*)\) is consistent with \(T\).
\end{proof}

We can now prove \cref{thm: exact learning}.

\begin{figure}
\begin{minipage}{7cm}
  \begin{algorithmic}[1]
    \State{\(N \leftarrow \neighb{(2r+1)\ell}{\A^*}{T}\)}
    \ForAll{\(\bar{v}^* \in N^\ell\)}
      \ForAll{\(\phi^*(\bar{x}, \bar{y}) \in \Phi^*\)}
        \State{\(consistent \leftarrow \True\)}
        \ForAll{\(i \in [t]\)}
          \State{\(\Structure{N} = \Neighb{r}{\A^*}{\bar{a}_i \bar{v}^*}\)}
          \If{\(\llbracket\phi^*(\bar{a}_i, \bar{v}^*)\rrbracket^{\Structure{N}} \neq b_i\)}
            \State{\(consistent \leftarrow \False\)}
          \EndIf
        \EndFor
        \If{\(consistent\)}
          \Return{\((\phi^*,\,\bar{v}^*)\)}
        \EndIf
      \EndFor
    \EndFor
    \Reject
  \end{algorithmic}
\end{minipage}
\hspace{1.5em}
\begin{minipage}{8cm}
  \begin{algorithmic}[1]
    \State{\(N \leftarrow \neighb{(2r+1)\ell}{\A^*}{T}\)}
    \State{\(err_{\min} \leftarrow |T| + 1\)}
    \ForAll{\(\bar{v}^* \in N^\ell\)}
      \ForAll{\(\phi^*(\bar{x}, \bar{y}) \in \Phi^*\)}
        \State{\(err_{\textup{cur}} \leftarrow 0\)}
        \ForAll{\(i \in [t]\)}
          \State{\(\Structure{N} = \Neighb{r}{\A^*}{\bar{a}_i \bar{v}^*}\)}
          \If{\(\llbracket\phi^*(\bar{a}_i, \bar{v}^*)\rrbracket^{\Structure{N}} \neq b_i\)}
            \State{\(err_{\textup{cur}} \leftarrow err_{\textup{cur}} + 1\)}
          \EndIf
        \EndFor
        \If{\(err_{\textup{cur}} < err_{\min}\)}
          \State{\(err_{\min} \leftarrow err_{\textup{cur}}\)}
          \State{\(\phi^*_{\min} \leftarrow \phi^*\)}
          \State{\(\bar{v}^*_{\min} \leftarrow \bar{v}^*\)}
        \EndIf
      \EndFor
    \EndFor
    \Return{\((\phi_{\min}^*,\,\bar{v}_{\min}^*)\)}
  \end{algorithmic}
\end{minipage}
\caption{Learning algorithms for Theorems~\ref{thm: exact learning} (left)
  and~\ref{thm: agnostic PAC learning with precomputation} (right).
  Both algorithms use as input a
  training sequence $T = \big((\bar{a}_1, b_1), \dots, (\bar{a}_t,
  b_t)\big) \in (A^k \times \{0,1\})^t$ and have only local access to the
  structure $\A^*$.
}%
\label{fig: exact learning}
\label{fig: agnostic PAC learning with precomputation}
\end{figure}

\begin{proof}[Proof of \cref{thm: exact learning}]
  We show that the algorithm depicted on the left-hand side of  \cref{fig: exact learning} fulfils the requirements given in \cref{thm: exact learning}.
  The algorithm goes through all tuples
  \(\bar{v}^* \in (\neighb{(2r+1)\ell}{\A^*}{T})^\ell\)
  and all formulas \(\phi^*(\bar{x}, \bar{y}) \in \Phi^*\).
  A hypothesis  \(\llbracket\phi^*(\bar{x}, \bar{y})\rrbracket^{\A^*}(\bar{x}, \bar{v}^*)\)
  is consistent with the training sequence \(T\) if and only if
  \(\llbracket\phi^*(\bar{a}_i, \bar{v}^*)\rrbracket^{\A^*} = b_i\)
  for all \(i \in [t]\).
  Since \(\Phi^*\) only contains \(r\)-local formulas,
  this holds if and only if
  \(\llbracket\phi^*(\bar{a}_i, \bar{v}^*)\rrbracket^{\Neighb{r}{\A^*}{\bar{a}_i \bar{v}^*}} = b_i\)
  for every \(i \in [t]\).
  Hence, the algorithm only returns a hypothesis if it is consistent.
  Furthermore, if there is a consistent hypothesis in \(\C(\Phi, \A, k, \ell)\),
  then by \cref{prop: requirements PAC learning}\,(\ref{prop: requirements PAC learning: associated structure}),
  there is also a consistent hypothesis in \(\C(\Phi^*, \A^*, k, \ell)\),
  and \cref{lem: existence of consistent local concept} ensures that
  the algorithm then returns a hypothesis.

  It remains to show that the algorithm satisfies the running time requirements
  while only using local access to the structure \(\A^*\).
  For all \(\bar{a} \in A^k\) and \(\bar{v}^* \in A^\ell\),
  we can bound the size of their neighbourhood by
  \(\abs{\neighb{r}{\A^*}{\bar{a}\bar{v}^*}} \leq (k + \ell) \cdot \sum_{i=0}^r d^i \leq (k + \ell) \cdot (1+d^{r+1})\).
  Therefore, the representation size of the substructure $\Neighb{r}{\A^*}{\bar{a}\bar{v}^*}$ is in
  \(\bigO\big((k + \ell) \cdot d^{r+1} \cdot \log n\big)\).
  Thus, the consistency check in lines 4--8 runs in time
  \(f_{\Phi^*}(\A^*) \cdot t \cdot \bigO\big((k + \ell) \cdot d^{r+1} \cdot \log n\big)\).
  The algorithm checks up to
  \(\abs{N}^\ell \cdot \abs{\Phi^*} \in \bigO\big((tkd^{(2r+1) \ell + 1})^\ell \cdot \abs{\Phi^*}\big)\) hypotheses
  with \(N = \neighb{(2r+1)\ell}{\A^*}{T}\).
  All in all, since \(k\), \(\ell\), \(r\) are considered constant,
  the running time of the algorithm is in
  \(f_{\Phi^*}(\A^*) \cdot (\log n + d + t)^{\bigO(1)}\)
  and it only uses local access to the structure \(\A^*\).
\end{proof}

\subsection{Agnostic PAC Learning with Precomputation}%
\label{sec: agnostic PAC learning with precomputation}

Section~\ref{sec: agnostic PAC learning with precomputation}
is devoted to the proof of Theorem~\ref{thm: agnostic PAC learning with precomputation}.

To obtain a hypothesis that generalises well,
we follow the \emph{Empirical Risk Minimization} rule (ERM)%
~\cite{Shalev-Shwartz:2014:UML:2621980,Vapnik_ERM},
i.e.\ our algorithm should return a hypothesis \(h\)
that minimises the \emph{training error}
\[\text{err}_T(h) \ \coloneqq \ \textstyle\frac{1}{\abs{T}} \cdot \abs{\{(\bar{a},b) \in T : h(\bar{a}) \neq b\}}\]
on the training sequence \(T\).
To prove \cref{thm: agnostic PAC learning with precomputation},
we use the following result from~\cite{Shalev-Shwartz:2014:UML:2621980}.

\begin{lem}[Uniform Convergence~\cite{Shalev-Shwartz:2014:UML:2621980}]%
  \label{lem: uniform convergence}
  Let \(\mathcal{H}\) be a finite class of hypotheses
  \(h \colon A^k \to \{0,1\}\).
  Then \(\mathcal{H}\) is agnostically PAC-learnable
  using an ERM algorithm and
  \[t_\mathcal{H} (\epsilon, \delta) \ \coloneqq \ \left\lceil \frac{2\log(2\abs{\mathcal{H}}/\delta)}{\epsilon^2} \right\rceil.\]
\end{lem}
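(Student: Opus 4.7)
The plan is to prove this well-known result via the standard two-step argument: a uniform convergence bound plus the ERM guarantee. Throughout, fix $\epsilon,\delta\in\openinterval{0}{1}$, a distribution $\D$ on $A^k\times\set{0,1}$, and let $t=t_\mathcal{H}(\epsilon,\delta)$. For a hypothesis $h\in\mathcal{H}$ and a sample $T=((\ov{a}_1,b_1),\ldots,(\ov{a}_t,b_t))$ drawn i.i.d.\ from $\D$, note that $\text{err}_T(h)=\frac{1}{t}\sum_{i=1}^t Z_i^h$ where $Z_i^h=\mathbf{1}[h(\ov{a}_i)\neq b_i]\in\set{0,1}$ are i.i.d.\ with $\mathbb{E}[Z_i^h]=\text{err}_{\D}(h)$.

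The first step is to apply Hoeffding's inequality to $Z_1^h,\ldots,Z_t^h$ for each fixed $h$, giving
\[
 \Pr\Big(\,\big|\text{err}_T(h)-\text{err}_{\D}(h)\big|>\tfrac{\epsilon}{2}\,\Big) \ \leq\ 2\exp\!\big({-}t\epsilon^2/2\big)\,.
\]
A union bound over the (finite) class $\mathcal{H}$ then yields
\[
 \Pr\Big(\,\exists\, h\in\mathcal{H}\colon \big|\text{err}_T(h)-\text{err}_{\D}(h)\big|>\tfrac{\epsilon}{2}\,\Big) \ \leq\ 2\abs{\mathcal{H}}\exp\!\big({-}t\epsilon^2/2\big)\,.
\]
By the definition of $t_\mathcal{H}(\epsilon,\delta)$, the right-hand side is at most $\delta$: indeed, $t\geq \frac{2\log(2\abs{\mathcal{H}}/\delta)}{\epsilon^2}$ is exactly the condition making $2\abs{\mathcal{H}}\exp(-t\epsilon^2/2)\leq \delta$. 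Call the complementary event (training error close to generalisation error uniformly over $\mathcal{H}$) the \emph{good event}; it has probability at least $1-\delta$.

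The second step is the ERM guarantee. Let $\mathfrak{L}$ be any algorithm that returns a hypothesis $h^{\textup{ERM}}\in\argmin_{h\in\mathcal{H}}\text{err}_T(h)$. Assume the good event holds. Then for every $h\in\mathcal{H}$,
\[
 \text{err}_{\D}(h^{\textup{ERM}}) \ \leq\ \text{err}_T(h^{\textup{ERM}}) + \tfrac{\epsilon}{2} \ \leq\ \text{err}_T(h) + \tfrac{\epsilon}{2} \ \leq\ \text{err}_{\D}(h) + \epsilon,
\]
where the middle inequality uses the ERM property and the outer two use uniform convergence. Taking the infimum over $h\in\mathcal{H}$ gives $\text{err}_{\D}(h^{\textup{ERM}})\leq \inf_{h\in\mathcal{H}}\text{err}_{\D}(h)+\epsilon$. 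Combining with $\Pr(\text{good event})\geq 1-\delta$ yields the definition of agnostic PAC-learnability.

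There is no real obstacle here; this is a textbook proof (see~\cite{Shalev-Shwartz:2014:UML:2621980}). The only point requiring minor care is matching the constants: choosing deviation $\epsilon/2$ in Hoeffding's inequality is what produces the factor $2$ in $2\log(2\abs{\mathcal{H}}/\delta)$ (one factor of $2$ from the union bound over the two-sided deviation, the other from $(\epsilon/2)^2=\epsilon^2/4$ absorbed into the bound).
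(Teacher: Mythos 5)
Your proof is correct and is exactly the standard argument (Hoeffding at deviation $\epsilon/2$, union bound over the finite class, then the ERM sandwich) that the paper itself does not reproduce but simply cites from~\cite{Shalev-Shwartz:2014:UML:2621980}; the constants check out, since $2\abs{\mathcal{H}}\exp(-t\epsilon^2/2)\leq\delta$ is equivalent to $t\geq 2\log(2\abs{\mathcal{H}}/\delta)/\epsilon^2$. The only nitpick is your closing remark on where the two factors of $2$ come from, which is slightly muddled (the $2$ inside the logarithm is the two-sided Hoeffding prefactor, and the $2$ in the numerator comes from $2\cdot(\epsilon/2)^2=\epsilon^2/2$), but the computation in the body is right.
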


\begin{proof}[Proof of \cref{thm: agnostic PAC learning with precomputation}]
  We show that the algorithm depicted on the right-hand side of \cref{fig: agnostic PAC learning with precomputation}
  fulfils the requirements from \cref{thm: agnostic PAC learning with precomputation}.
  The algorithm goes through all tuples
  \(\bar{v}^* \in (\neighb{(2r+1)\ell}{\A^*}{T})^\ell\)
  and all formulas \(\phi^*(\bar{x}, \bar{y}) \in \Phi^*\)
  and counts the number of errors that
  \(\llbracket\phi^*(\bar{x}, \bar{y})\rrbracket^{\A^*}(\bar{x}, \bar{v}^*)\)
  makes on \(T\).
  Then it returns the hypothesis with the minimal training error.

  Since \(\Phi^*\) and \(A^\ell\) are finite,
  \(\mathcal{H} = \C(\Phi^*, \A^*, k, \ell)\)
  is finite.
  Thus, using \cref{lem: uniform convergence},
  \(\mathcal{H}\) is agnostically PAC-learnable
  with
 \ $
t_\mathcal{H} (\epsilon, \delta) \ = \ \left\lceil
    \frac{2\log(2\abs{\mathcal{H}}/\delta)}{\epsilon^2} \right\rceil \
  \leq \ \left\lceil \frac{4 \ell \log(\abs{\Phi^*})
      \log(n/\delta)}{\epsilon^2} \right\rceil.
$
  The running time analysis works as in the proof of \cref{thm: exact learning}.
  The algorithm returns a hypothesis in time
  \(f_{\Phi^*}(\A^*) \cdot (\log n + d + t)^{\bigO(1)}\).
  For a training sequence of length \(t = t_\mathcal{H}(\epsilon, \delta)\),
  we obtain a running time in \(f_{\Phi^*}(\A^*) \cdot \big(\log n + d + \log (1/\delta) + 1/\epsilon\big)^{\bigO(1)}\).
\end{proof}

\section{Putting Things Together}\label{sec:conclusion}
Let the collections $\Ps$ and $\SC$ be finite (but $\SC$ may contain
infinite rings or abelian groups), fix
a \emph{finite} set  $\mathscr{S}$ of elements $s\in S\in\SC$, fix
a $q\in\NN$, and
let $\Phi\deff\Phi_{q,k+\ell,\mathscr{S}}$ be the set of
$\WAunPS$-formulas defined in \cref{remark:propertyForLearning}. Let
$\Phi^*$, $\sigma^*$, and $\A^*$ (for all $(\sigma,\Weights)$-structures $\A$)
be as described in
\cref{remark:propertyForLearning}.
By \cref{thm:AlgorithmicDecomp}, $\A^*$ can be computed from
$\A$ in
time $|A|{\cdot} d^{\bigOh(1)}$, where $d$ is the degree of $\A$.
By \cref{remark:propertyForLearning}, the formulas in $\Phi^*$ are $r$-local for a fixed
number $r$, and this implies that model checking for a formula in
$\Phi^*$ on $\A^*$ can be done in time polynomial in $d$.
Combining this with Theorems~\ref{thm: exact learning} and
\ref{thm: agnostic PAC learning with precomputation} yields the
following\footnote{All mentioned algorithms are assumed to have $\PP$- and $\SC$-oracles, so that
  operations $\plusS,\malS$ for $S\in\SC$ and checking if a tuple
  is in $\sem{\P}$ for $\P\in\PP$ takes time $\bigOh(1)$.}.

\begin{thm}\label{thm:LearningFOWA1}
Let $n$ and $d$ denote the size and the degree of $\A$.
\begin{enumerate}[(1)]
\item
  There is an algorithm that solves
  Exact Learning with Precomputation for \(\Phi\)
  and \(\Phi^*\)
  with local access to a
  structure \(\A^*\)
  associated with a structure $\A$
  in time \((\log n + d+t)^{\bigO(1)}\), where \(t\) is the number of training examples.
\item
  There is an \(s \in \NN\) such that,
  given local access to a structure \(\A^*\)
  associated with a structure \(\A\),
  the hypothesis class \(\mathcal{H} \coloneqq \C(\Phi^*, \A^*, k, \ell)\)
  is agnostically PAC-learnable
  with \(t_\mathcal{H} (\epsilon, \delta) = s \cdot \left\lceil \frac{\log(n/\delta)}{\epsilon^2} \right\rceil\)
  via an algorithm that,
  given \(t_\mathcal{H} (\epsilon, \delta)\) examples,
  returns a hypothesis of the form \((\phi^*, \bar{v}^*)\)
  with \(\phi^* \in \Phi^*\) and \(\bar{v}^* \in A^\ell\)
  in time \(\big(\log n + d+ \frac{1}{\epsilon} + \log \frac{1}{\delta}\big)^{\bigO(1)}\) with
  only local access to \(\A^*\).
\end{enumerate}
  Additionally, the algorithms can be chosen such that
  the returned hypotheses can be evaluated
  in time \((\log n + d)^{\bigO(1)}\).
\end{thm}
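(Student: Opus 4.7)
The plan is to assemble the theorem by combining \cref{thm:AlgorithmicDecomp} with \cref{thm: exact learning} and \cref{thm: agnostic PAC learning with precomputation}, using \cref{remark:propertyForLearning} as the bridge. Throughout, I will treat $\sigma,\Ps,\SC,\mathscr{S},\Weights,k,\ell,q$ as fixed, so that every quantity derived only from them is a constant; the only ``variable'' parameters are $n$, $d$, the sample size $t$, and, in part~(2), also $\epsilon,\delta$.

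The setup step is to apply \cref{remark:propertyForLearning} to $\Phi=\Phi_{q,k+\ell,\mathscr{S}}$. This yields a finite set $\Phi^*$ of formulas that are $r$-local for a fixed radius $r\in\NN$, together with a signature $\sigma^*\supseteq\sigma$ whose overhead $|\sigma^*\setminus\sigma|$ is fixed, and, for every $(\sigma,\Weights)$-structure $\A$, an expansion $\A^*$ produced by one application of \cref{thm:AlgorithmicDecomp} per formula in $\Phi$ (of which there are only finitely many up to logical equivalence, by \cref{claim:finitePhi}). Summing the cost of these finitely many applications gives $\A^*$ in time $|A|\cdot d^{\bigOh(1)}$, which supplies the pseudo-linear preprocessing required in the statement.

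The single quantitative ingredient that remains is a bound on the model-checking cost $f_{\Phi^*}(\A^*)$ appearing in both learning theorems. Since $\Phi^*$ is fixed and every $\phi^*\in\Phi^*$ is $r$-local with $r$ fixed, evaluating $\phi^*$ on any tuple $\ov{c}$ reduces to evaluation on $\Neighb{r}{\A^*}{\ov{c}}$, a substructure of size at most $(k+\ell)\cdot(1+d^{r+1})$ that can be materialised under local access using $\bigOh(d^{r+1})$ neighbour queries of cost $\bigOh(\log n)$ each; after materialisation, the evaluation itself takes $\bigOh(1)$ time because $\Phi^*$ is fixed and all $\Ps$- and $\SC$-operations are oracle calls. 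Hence $f_{\Phi^*}(\A^*)\in(\log n+d)^{\bigOh(1)}$.

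With the preprocessing and model-checking bounds in hand, part~(1) is immediate from \cref{thm: exact learning} and part~(2) is immediate from \cref{thm: agnostic PAC learning with precomputation}; in the sample complexity $s\cdot\lceil\log(n/\delta)/\epsilon^2\rceil$, the constant $s$ absorbs the fixed quantities $\log|\Phi^*|$ and $\ell$. The final ``Additionally'' claim follows from the same locality argument: given a returned hypothesis $(\phi^*,\ov{v}^*)$ and a fresh input $\ov{a}$, checking $\A^*\models\phi^*[\ov{a},\ov{v}^*]$ only requires inspecting $\Neighb{r}{\A^*}{\ov{a}\ov{v}^*}$, which runs in $(\log n+d)^{\bigOh(1)}$ time. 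No individual step is difficult; the only real obstacle is the consolidated bookkeeping verifying that the radius $r$, the size of $\Phi^*$, and all Feferman-Vaught decompositions implicit in \cref{prop: requirements PAC learning} really stay bounded by constants independent of $\A$, which is exactly what \cref{claim:finitePhi} and \cref{claim:stopProcess} inside \cref{remark:propertyForLearning} were designed to guarantee.
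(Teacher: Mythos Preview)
Your proposal is correct and follows essentially the same route as the paper: the paper's justification is the paragraph immediately preceding the theorem, which likewise invokes \cref{remark:propertyForLearning} to obtain the finite $r$-local $\Phi^*$ and the expansion $\A^*$, uses \cref{thm:AlgorithmicDecomp} for the $|A|\cdot d^{\bigOh(1)}$ preprocessing, observes that $r$-locality makes model checking polynomial in $d$, and then appeals to \cref{thm: exact learning} and \cref{thm: agnostic PAC learning with precomputation}. Your write-up is in fact more explicit than the paper's (you spell out the $(\log n + d)^{\bigOh(1)}$ bound for $f_{\Phi^*}(\A^*)$ and the evaluation-time argument for the ``Additionally'' clause), but the underlying argument is identical.
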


We conclude with an example that illustrates an application scenario for \cref{thm:LearningFOWA1}.

\begin{exmp}
  Recall the \((\sigma, \Weights)\)-structure \(\A\)
  for the online marketplace
  from part \eqref{exmp:marketplace} of Examples~\ref{example:WStr},
  \ref{example:intuition}, and \ref{example:moreintuition}.
  Retailers can pay the marketplace to advertise their products to consumers.
  Since the marketplace demands a fee for every single view of the advertisement,
  retailers want the marketplace to only show the advertisement
  to those consumers that are likely to buy the product.
  One possible way to choose suitable consumers is to consider only those
  who buy a variety of products from the same or a similar product group
  as the advertised product and who are thus more likely to try
  new products that are similar to the advertised one.
  At the same time, the money spent by the chosen consumers on the product group
  should be above average.

  In the previous examples, we have already seen a formula
  \(\phi_{\textup{spending}}(c)\) that defines consumers
  who have spent at least as much as the average consumer on the product group.
  The formula depends on a formula \(\phi_{\textup{group}}(p)\)
  that defines a certain group of products based on the structure
  of their transactions.
  Due to the connection between graph neural networks
  and the Weisfeiler-Leman algorithm described in \cite{MorrisRitzert_WL},
  we may assume that there is a formula in \(\FO[\sigma]\)
  that at least roughly approximates such a product group.
  Likewise, we might assume that there is a formula
  \(\phi_{\textup{variety}}(c)\) in \(\FO[\sigma]\)
  that defines consumers with a wide variety of products
  bought from a specific product group.
  However, it is a non-trivial task to design such formulas by hand.
  It is even not clear whether there exist better rules
  for finding suitable consumers.
  Meanwhile, we can easily show the advertisement to consumers
  and then check whether they buy the product.
  Thus, we can generate a list with positive and negative examples
  of consumers.
  Since the proposed rule can be defined in \(\WAunPS\)
  as \(\phi_{\textup{advertise}}(c) \deff (\phi_{\textup{variety}}(c) \land \phi_{\textup{spending}}(c))\),
  we can use one of the learning algorithms from
  \cref{thm:LearningFOWA1}
  to find good definitions for \(\phi_{\textup{variety}}(c)\)
  and \(\phi_{\textup{group}}(p)\)
  or to learn an even better definition for
  \(\phi_{\textup{advertise}}(c)\)
  in \(\WAunPS\) from examples.
\end{exmp}

We believe that our results can
be generalised to an extension of $\WAun$ where constructions of
the form $\P(t_1,\ldots,t_m)$ are not restricted to the case that
$|V|=1$ for $V\deff\free(t_1)\cup\cdots\cup\free(t_m)$, but may
also be used in a guarded setting of the form
$\big(\P(t_1,\ldots,t_m)\und\Und_{v,w\in V}\dist(v,w)\,{\leq}\,r\big)$.
It would also be interesting to study non-Boolean classification problems, where classifiers are
described by $\SC$-terms defined in a suitable fragment of $\WA$.
We plan to do this in future work.
\section*{Acknowledgements}
We thank Martin Grohe and Sandra Kiefer for helpful discussions on the subject.

\bibliography{ms}

\begin{thebibliography}{10}

\bibitem{Angluin_ExactLearning}
Dana Angluin.
\newblock Queries and concept learning.
\newblock {\em Machine Learning}, 2(4):319--342, 1987.
\newblock \href {https://doi.org/10.1007/BF00116828}
  {\path{doi:10.1007/BF00116828}}.

\bibitem{FefV59}
Solomon Feferman and Robert~L. Vaught.
\newblock The first-order properties of products of algebraic systems.
\newblock {\em Fundamenta Mathematicae}, 47:57--103, 1959.

\bibitem{Gai82}
Haim Gaifman.
\newblock On local and non-local properties.
\newblock In Jacques Stern, editor, {\em Proceedings of the Herbrand
  Symposium}, volume 107 of {\em Studies in Logic and the Foundations of
  Mathematics}, pages 105--135. North-Holland, 1982.
\newblock \href {https://doi.org/10.1016/S0049-237X(08)71879-2}
  {\path{doi:10.1016/S0049-237X(08)71879-2}}.

\bibitem{GrienenbergerRitzert_Trees}
Emilie Grienenberger and Martin Ritzert.
\newblock Learning definable hypotheses on trees.
\newblock In {\em 22nd International Conference on Database Theory, {ICDT}
  2019, March 26-28, 2019, Lisbon, Portugal}, pages 24:1--24:18, 2019.
\newblock \href {https://doi.org/10.4230/LIPIcs.ICDT.2019.24}
  {\path{doi:10.4230/LIPIcs.ICDT.2019.24}}.

\bibitem{Gro08}
Martin Grohe.
\newblock Logic, graphs, and algorithms.
\newblock In {\em Logic and Automata: History and Perspectives [in Honor of
  Wolfgang Thomas]}, volume~2 of {\em Texts in Logic and Games}, pages
  357--422. Amsterdam University Press, 2008.

\bibitem{DBLP:conf/pods/Grohe20}
Martin Grohe.
\newblock word2vec, node2vec, graph2vec, x2vec: Towards a theory of vector
  embeddings of structured data.
\newblock In Dan Suciu, Yufei Tao, and Zhewei Wei, editors, {\em Proceedings of
  the 39th {ACM} {SIGMOD-SIGACT-SIGAI} Symposium on Principles of Database
  Systems, {PODS} 2020, Portland, OR, USA, June 14-19, 2020}, pages 1--16.
  {ACM}, 2020.
\newblock \href {https://doi.org/10.1145/3375395.3387641}
  {\path{doi:10.1145/3375395.3387641}}.

\bibitem{GroheLoedingRitzert_MSO}
Martin Grohe, Christof L{\"{o}}ding, and Martin Ritzert.
\newblock Learning {MSO}-definable hypotheses on strings.
\newblock In {\em International Conference on Algorithmic Learning Theory,
  {ALT} 2017, 15-17 October 2017, Kyoto University, Kyoto, Japan}, pages
  434--451, 2017.
\newblock URL: \url{http://proceedings.mlr.press/v76/grohe17a.html}.

\bibitem{GroheRitzert_FO}
Martin Grohe and Martin Ritzert.
\newblock Learning first-order definable concepts over structures of small
  degree.
\newblock In {\em 32nd Annual {ACM/IEEE} Symposium on Logic in Computer
  Science, {LICS} 2017, Reykjavik, Iceland, June 20-23, 2017}, pages 1--12,
  2017.
\newblock \href {https://doi.org/10.1109/LICS.2017.8005080}
  {\path{doi:10.1109/LICS.2017.8005080}}.

\bibitem{GroheSchweikardt_FOunC}
Martin Grohe and Nicole Schweikardt.
\newblock First-order query evaluation with cardinality conditions.
\newblock In {\em Proceedings of the 37th {ACM} {SIGMOD-SIGACT-SIGAI} Symposium
  on Principles of Database Systems, Houston, TX, USA, June 10-15, 2018}, pages
  253--266, 2018.
\newblock \href {https://doi.org/10.1145/3196959.3196970}
  {\path{doi:10.1145/3196959.3196970}}.

\bibitem{GroheTuran_Learnability}
Martin Grohe and Gy{\"{o}}rgy Tur{\'{a}}n.
\newblock Learnability and definability in trees and similar structures.
\newblock {\em Theory Comput. Syst.}, 37(1):193--220, 2004.
\newblock \href {https://doi.org/10.1007/s00224-003-1112-8}
  {\path{doi:10.1007/s00224-003-1112-8}}.

\bibitem{GroverLeskovec_node2vec}
Aditya Grover and Jure Leskovec.
\newblock node2vec: Scalable feature learning for networks.
\newblock In {\em Proceedings of the 22nd {ACM} {SIGKDD} International
  Conference on Knowledge Discovery and Data Mining, San Francisco, CA, USA,
  August 13-17, 2016}, pages 855--864, 2016.
\newblock \href {https://doi.org/10.1145/2939672.2939754}
  {\path{doi:10.1145/2939672.2939754}}.

\bibitem{Haussler}
David Haussler.
\newblock Decision theoretic generalizations of the {PAC} model for neural net
  and other learning applications.
\newblock {\em Inf. Comput.}, 100(1):78--150, 1992.
\newblock \href {https://doi.org/10.1016/0890-5401(92)90010-D}
  {\path{doi:10.1016/0890-5401(92)90010-D}}.

\bibitem{KearnsVazirani}
Michael~J. Kearns and Umesh~V. Vazirani.
\newblock {\em An Introduction to Computational Learning Theory}.
\newblock {MIT} Press, 1994.
\newblock URL:
  \url{https://mitpress.mit.edu/books/introduction-computational-learning-theory}.

\bibitem{KuskeSchweikardt_FOC}
Dietrich Kuske and Nicole Schweikardt.
\newblock First-order logic with counting.
\newblock In {\em 32nd Annual {ACM/IEEE} Symposium on Logic in Computer
  Science, {LICS} 2017, Reykjavik, Iceland, June 20-23, 2017}, pages 1--12,
  2017.
\newblock \href {https://doi.org/10.1109/LICS.2017.8005133}
  {\path{doi:10.1109/LICS.2017.8005133}}.

\bibitem{KuskeSchweikardt_Gaifman}
Dietrich Kuske and Nicole Schweikardt.
\newblock Gaifman normal forms for counting extensions of first-order logic.
\newblock In {\em 45th International Colloquium on Automata, Languages, and
  Programming, {ICALP} 2018, July 9-13, 2018, Prague, Czech Republic}, pages
  133:1--133:14, 2018.
\newblock \href {https://doi.org/10.4230/LIPIcs.ICALP.2018.133}
  {\path{doi:10.4230/LIPIcs.ICALP.2018.133}}.

\bibitem{Lib04}
Leonid Libkin.
\newblock {\em Elements of Finite Model Theory}.
\newblock Texts in Theoretical Computer Science. An {EATCS} Series. Springer,
  2004.
\newblock \href {https://doi.org/10.1007/978-3-662-07003-1}
  {\path{doi:10.1007/978-3-662-07003-1}}.

\bibitem{DBLP:journals/apal/Makowsky04}
Johann~A. Makowsky.
\newblock Algorithmic uses of the {F}eferman-{V}aught {T}heorem.
\newblock {\em Ann.\ Pure Appl.\ Logic}, 126(1-3):159--213, 2004.
\newblock \href {https://doi.org/10.1016/j.apal.2003.11.002}
  {\path{doi:10.1016/j.apal.2003.11.002}}.

\bibitem{MorrisRitzert_WL}
Christopher Morris, Martin Ritzert, Matthias Fey, William~L. Hamilton, Jan~Eric
  Lenssen, Gaurav Rattan, and Martin Grohe.
\newblock Weisfeiler and {L}eman go neural: Higher-order graph neural networks.
\newblock In {\em The 33rd {AAAI} Conference on Artificial Intelligence, {AAAI}
  2019, Honolulu, Hawaii, USA, January 27 - February 1, 2019}, pages
  4602--4609, 2019.
\newblock \href {https://doi.org/10.1609/aaai.v33i01.33014602}
  {\path{doi:10.1609/aaai.v33i01.33014602}}.

\bibitem{PanDing}
Shimei Pan and Tao Ding.
\newblock Social media-based user embedding: {A} literature review.
\newblock In {\em Proceedings of the Twenty-Eighth International Joint
  Conference on Artificial Intelligence, {IJCAI} 2019, Macao, China, August
  10-16, 2019}, pages 6318--6324, 2019.
\newblock \href {https://doi.org/10.24963/ijcai.2019/881}
  {\path{doi:10.24963/ijcai.2019/881}}.

\bibitem{DBLP:conf/sum/SchleichOK0N19}
Maximilian Schleich, Dan Olteanu, Mahmoud~Abo Khamis, Hung~Q. Ngo, and XuanLong
  Nguyen.
\newblock Learning models over relational data: {A} brief tutorial.
\newblock In Nahla~Ben Amor, Benjamin Quost, and Martin Theobald, editors, {\em
  Scalable Uncertainty Management - 13th International Conference, {SUM} 2019,
  Compi{\`{e}}gne, France, December 16-18, 2019, Proceedings}, volume 11940 of
  {\em Lecture Notes in Computer Science}, pages 423--432. Springer, 2019.
\newblock \href {https://doi.org/10.1007/978-3-030-35514-2\_32}
  {\path{doi:10.1007/978-3-030-35514-2\_32}}.

\bibitem{Shalev-Shwartz:2014:UML:2621980}
Shai Shalev-Shwartz and Shai Ben-David.
\newblock {\em Understanding Machine Learning: From Theory to Algorithms}.
\newblock Cambridge University Press, New York, NY, USA, 2014.

\bibitem{Szymon}
Szymon Toru{\'{n}}czyk.
\newblock Aggregate queries on sparse databases.
\newblock In Dan Suciu, Yufei Tao, and Zhewei Wei, editors, {\em Proceedings of
  the 39th {ACM} {SIGMOD-SIGACT-SIGAI} Symposium on Principles of Database
  Systems, {PODS}~2020, Portland, OR, USA, June 14-19, 2020}, pages 427--443.
  {ACM}, 2020.
\newblock \href {https://doi.org/10.1145/3375395.3387660}
  {\path{doi:10.1145/3375395.3387660}}.

\bibitem{Valiant_PAC}
Leslie~G. Valiant.
\newblock A theory of the learnable.
\newblock {\em Commun. {ACM}}, 27(11):1134--1142, 1984.
\newblock \href {https://doi.org/10.1145/1968.1972}
  {\path{doi:10.1145/1968.1972}}.

\bibitem{vanBergerem}
Steffen van Bergerem.
\newblock Learning concepts definable in first-order logic with counting.
\newblock In {\em 34th Annual {ACM/IEEE} Symposium on Logic in Computer
  Science, {LICS} 2019, Vancouver, BC, Canada, June 24-27, 2019}, pages 1--13,
  2019.
\newblock \href {https://doi.org/10.1109/LICS.2019.8785811}
  {\path{doi:10.1109/LICS.2019.8785811}}.

\bibitem{Vapnik_ERM}
Vladimir Vapnik.
\newblock Principles of risk minimization for learning theory.
\newblock In {\em Advances in Neural Information Processing Systems 4, {[NIPS}
  Conference, Denver, Colorado, USA, December 2-5, 1991]}, pages 831--838,
  1991.
\newblock URL:
  \url{http://papers.nips.cc/paper/506-principles-of-risk-minimization-for-learning-theory}.

\end{thebibliography}

\end{document}